\documentclass[sigconf]{acmart}
\acmYear{2026}\copyrightyear{2026}
\setcopyright{cc}
\setcctype[4.0]{by}
\acmConference[SPAA '26]{38th ACM Symposium on Parallelism in Algorithms and Architectures}{July 6--10, 2026}{London, United Kingdom}
\acmBooktitle{38th ACM Symposium on Parallelism in Algorithms and Architectures (SPAA '26), July 6--10, 2026, London, United Kingdom}
\acmDOI{10.1145/3816782.3819182}
\acmISBN{979-8-4007-2761-0/26/07}

\makeatletter
\gdef\@copyrightpermission{
 \href{https://creativecommons.org/licenses/by/4.0/}{\includegraphics[width=0.15\columnwidth]{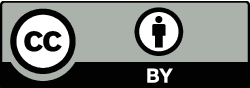}}\\[-1pt]
 \href{https://creativecommons.org/licenses/by/4.0/}{This work is licensed under a Creative Commons Attribution International 4.0 License.}
}
\makeatother

\usepackage{booktabs}   
\usepackage{subcaption} 
\usepackage{caption}
\usepackage{colortbl}
\usepackage{bigstrut}
\usepackage{microtype}
\usepackage{multirow}
\usepackage{multicol}
\usepackage{rotating}
\usepackage{lipsum}
\usepackage{balance}
\usepackage{mfirstuc}
\usepackage{titlecaps}
\usepackage{listings}
\usepackage{float}
\usepackage[rightcaption]{sidecap}
\usepackage{xcolor}
\usepackage[font=small,labelfont=bf,skip=2pt]{caption}
\usepackage{tabularx}

\def\fullversion{}

\usepackage{etoolbox}
\newcommand{\ifconference}[1]{{{\ifx\fullversion\undefined{#1}\fi}\xspace}}
\newcommand{\iffullversion}[1]{{{\ifx\conference\undefined{#1}\fi}\xspace}}

\usepackage{graphicx}  
\usepackage{lipsum}  
\newcommand{\hide}[1]{} 
\usepackage{xspace}
\usepackage{textcomp}
\usepackage{comment} 
\usepackage{verbatim}
\usepackage{url}

\newcommand{\defn}[1]{\emph{\boldmath\textbf{#1}\unboldmath}} 
\newcommand{\emp}[1]{{\boldmath\emph{\textbf{#1}}\unboldmath}} 
\newcommand{\fname}[1]{\textit{#1}} 

\newcommand{\fullbf}[1]{{\textbf{{\boldmath #1 \unboldmath}}}} 

\newtheorem{theorem}{Theorem}[section]
\newtheorem{lemma}{Lemma}[section]

\let \originalleft \left
\let\originalright\right
\renewcommand{\left}{\mathopen{}\mathclose\bgroup\originalleft}
\renewcommand{\right}{\aftergroup\egroup\originalright}

\usepackage{scalerel} 

\newtheoremstyle{exampstyle}
{.5em} 
{.5em} 
{\it} 
{.5em} 
{\sc} 
{.} 
{.5em} 
{} 
\theoremstyle{exampstyle} 
\theoremstyle{exampstyle} \newtheorem{compactlem}[lemma]{Lemma}
\theoremstyle{exampstyle} 
\theoremstyle{exampstyle} \newtheorem{compactthm}[theorem]{Theorem}
\theoremstyle{exampstyle} 

\makeatletter
\renewenvironment{proof}[1][\proofname]{\par
\vspace{-1\topsep}
\pushQED{\qed}%
\normalfont
\topsep0pt \partopsep0pt 
\trivlist
\item[\hskip\labelsep
      \itshape
  #1\@addpunct{.}]\ignorespaces
}{%
\popQED\endtrivlist\@endpefalse
}

\newenvironment{singleproof}[1][\proofname]{\par
\pushQED{\qed}%
\normalfont
\topsep0pt \partopsep0pt 
\trivlist
\item[\hskip\labelsep
      \itshape
  #1\@addpunct{.}]\ignorespaces
}{%
\popQED\endtrivlist\@endpefalse
}

\newcommand{\R}{\mathbb{R}}

\newcommand{\whp}[1]{\emph{whp}}

\DeclareMathOperator*{\polylog}{polylog}

\usepackage{pifont}

\newcommand{\modelop}[1]{\texttt{#1}}
\newcommand{\forkins}{\modelop{fork}}

\newcommand{\thread}{thread}

\usepackage[shortlabels]{enumitem}

\setlist{topsep=0.3em,itemsep=0.2em,parsep=0.1em,leftmargin=*}

\usepackage{float}
\usepackage[font={small},aboveskip=0.3em, belowskip=0.2em]{caption}

\usepackage[labelfont=bf,list=true,skip=0em]{subcaption}
\usepackage[rightcaption]{sidecap}

\usepackage{wrapfig}

\usepackage{array}
\newcolumntype{L}[1]{>{\raggedright\let\newline\\\arraybackslash\hspace{0pt}}m{#1}}
\newcolumntype{C}[1]{>{\centering\let\newline\\\arraybackslash\hspace{0pt}}m{#1}}
\newcolumntype{R}[1]{>{\raggedleft\let\newline\\\arraybackslash\hspace{0pt}}m{#1}}
\newcolumntype{B}{>{\bf}c}
\usepackage{rotating}

\usepackage{booktabs} 
\usepackage{multicol,multirow}
\usepackage{longtable} 
\usepackage{supertabular} 
\usepackage{colortbl}
\usepackage{bigstrut}

\usepackage{titlesec}

\titlespacing{\section}{0pt}{0.5em}{0.2em} 
\titlespacing{\subsection}{0pt}{0.5em}{0.2em} 

\newcommand{\myparagraph}[1]{\smallskip\noindent\emp{#1} \ }

\usepackage[ruled,lined,linesnumbered,noend]{algorithm2e}
\usepackage[noend]{algpseudocode}
\SetAlCapFnt{\small}
\SetAlCapNameFnt{\small}

\makeatletter
\patchcmd{\@algocf@start}
  {-1.5em}
  {0pt}
  {}{}
\newcommand{\nosemic}{\renewcommand{\@endalgocfline}{\relax}}
\newcommand{\dosemic}{\renewcommand{\@endalgocfline}{\algocf@endline}}

\SetSideCommentLeft
\SetKwInput{notations}{Notations}
\SetKwInput{notes}{Notes}
\SetKwInput{maintains}{Maintains}

\SetKwProg{myfunc}{Function}{}{}
\SetKwFor{parForEach}{ParallelForEach}{do}{endfor}
\SetKwFor{Justrepeat}{Repeat}{}{}
\SetKwFor{parDo}{In parallel:}{}{}

\SetKw{MIN}{min}
\SetKw{MAX}{max}
\SetKw{OR}{or}
\SetKw{AND}{and}

\SetCommentSty{mycommfont}

\usepackage{mdframed}
\mdfdefinestyle{mystyle}{innertopmargin=2pt,innerbottommargin=2pt,skipabove=2pt,skipbelow=2pt}
\mdfdefinestyle{densestyle}{linecolor=framelinecolor,innertopmargin=0,innerbottommargin=0,leftmargin=0,rightmargin=0,backgroundcolor=gray!20}
\mdfdefinestyle{compactcode}{linecolor=framelinecolor,innertopmargin=1pt,innerbottommargin=1pt,backgroundcolor=gray!20,skipabove=0pt,skipbelow=0pt,leftmargin=0,rightmargin=0}

\usepackage{framed}

\usepackage{listings}

\newdimen\zzsize
\newdimen\kwsize
\newcommand{\basicstyle}{\fontsize{\zzsize}{1\zzsize}\ttfamily}
\newcommand{\keywordstyle}{\fontsize{\kwsize}{1\kwsize}\ttfamily\bf}

\newdimen\zzlstwidth
\usepackage{cleveref}
\crefname{section}{Sec.}{Sec.}
\crefname{theorem}{Thm.}{Thm.}
\crefname{lemma}{Lemma}{Lemma}
\crefname{corollary}{Cor.}{Cor.}
\crefname{table}{Tab.}{Tab.}
\crefname{algorithm}{Alg.}{Alg.}
\crefname{figure}{Fig.}{Fig.}
\crefname{fact}{Fact}{Fact}
\Crefname{table}{Tab.}{Tab.}
\crefname{problem}{Problem}{Problem}
\crefname{line}{Line}{Line}

\usepackage{tikz} 

\setcopyright{cc}

\newcommand{\naive}{na\"ive}

\newcommand{\xiangyun}[1]{{\color{blue}{\bf Xiangyun:} #1}}

\newcommand{\msl}{metric skip list\xspace}
\newcommand{\Msl}{Metric Skip List\xspace}

\newcommand{\funcfont}[1]{{\textsf{#1}}}

\newcommand{\somef}{F}

\newcommand{\up}[1]{#1.\mathit{up}}
\newcommand{\down}[1]{#1.\mathit{down}}
\newcommand{\listcontent}[1]{#1.\mathit{list}}
\newcommand{\radius}[1]{#1.\mathit{radius}}
\newcommand{\rghttxt}{advance}
\newcommand{\rght}[1]{#1.\mathit{\rghttxt}}
\newcommand{\algntxt}{align}
\newcommand{\algn}[2]{\mathit{\algntxt}(#1,#2)}
\newcommand{\listoflist}[1]{\mathcal{F}_{#1}}
\newcommand{\findlistradius}[2]{\listoflist{#1}(#2)}
\newcommand{\findlistrank}[2]{\listoflist{#1}[#2]} 

\newcommand{\focuson}{focus on}

\newcommand{\focuspoint}{focus point}
\newcommand{\focuslist}{focus list}
\newcommand{\control}{control} 

\newcommand{\cur}{\mathit{cur}}
\newcommand{\nxt}{\mathit{nxt}} 
\newcommand{\last}{\mathit{last}}

\newcommand{\node}[1]{\mathit{node}(#1)}
\newcommand{\parent}[1]{\mathit{parent}(#1)}

\begin{document}

\title{
Parallel Metric Skip Lists and Nearest Neighbor Search
}

\author{Xiangyun Ding}
\affiliation{
  \institution{University of California, Riverside}
  \country{}
}
\email{xding047@ucr.edu}

\author{Rohin Garg}
\affiliation{
  \institution{Massachusetts Institute\\ of Technology}
  \country{}
}
\email{rohin@mit.edu}

\author{Yan Gu}
\affiliation{
  \institution{University of California, Riverside}
  \country{}
}
\email{ygu@cs.ucr.edu}

\author{Yihan Sun}
\affiliation{
  \institution{University of California, Riverside}
  \country{}
}
\email{yihans@cs.ucr.edu}

\renewcommand{\shortauthors}{Ding et al.}

\begin{abstract}

The \msl{} is a data structure designed for efficient nearest and $k$-nearest neighbor search in metric spaces. 
For many real-world datasets with reasonable distributions---specifically, those with a \emph{constant expansion rate}---it supports $\tilde{O}(n)$ construction time and $O(k\log n)$ query time, where $n$ is the input size and $k$ is the number of nearest neighbors in queries. 
Notably, unlike alternative approaches, it does not require a bounded aspect ratio, making it more flexible for input data distributions. 
However, the inherently sequential nature of its original construction has, to our knowledge, precluded any existing parallel algorithm. 

In this paper, we present highly parallel and work-efficient algorithms for constructing metric skip lists. Under the assumption of a constant expansion rate, our approach achieves an expected work of $O(n \log n)$ and a polylogarithmic span with high probability. Our design is based on novel algorithmic insights that improves the sequential procedure, enabling a divide-and-conquer strategy that facilitates parallelism while maintaining efficiency. 


With our algorithms, we can also support improved bounds for relevant applications using nearest neighbor as building blocks, including bichromatic closest pair (BCP), density-based clustering, and $k$-NN graph construction, among others. To our knowledge, many of these results represent the first solutions to achieve both work efficiency and polylogarithmic span, relying solely on the assumption of a constant expansion rate. 
\end{abstract}


\begin{CCSXML}
<ccs2012>
   <concept>
       <concept_id>10003752.10003809.10010170</concept_id>
       <concept_desc>Theory of computation~Parallel algorithms</concept_desc>
       <concept_significance>500</concept_significance>
       </concept>
   <concept>
       <concept_id>10003752.10003809.10010031</concept_id>
       <concept_desc>Theory of computation~Data structures design and analysis</concept_desc>
       <concept_significance>500</concept_significance>
       </concept>
   <concept>
       <concept_id>10003752.10003809.10010055.10010060</concept_id>
       <concept_desc>Theory of computation~Nearest neighbor algorithms</concept_desc>
       <concept_significance>500</concept_significance>
       </concept>
 </ccs2012>
\end{CCSXML}

\ccsdesc[500]{Theory of computation~Parallel algorithms}
\ccsdesc[500]{Theory of computation~Data structures design and analysis}
\ccsdesc[500]{Theory of computation~Nearest neighbor algorithms}


\keywords{Parallel Algorithm, Nearest Neighbor Search, Randomized Algorithm, Metric Space, Metric Skip Lists}

\renewcommand\footnotetextcopyrightpermission[1]{} 
\fancyhead{} 


\maketitle



\begin{table*}[!t]
  \centering
  \small
  \setlength{\tabcolsep}{4pt}
  
  \newcolumntype{C}{>{\centering\arraybackslash}X}
  \newcolumntype{M}[1]{>{\centering\arraybackslash}m{#1}}

  \begin{tabularx}{\textwidth}{@{} M{0.3\textwidth} M{0.33\textwidth} C @{}}
    \toprule
     & \textbf{Original Cover Tree and Its Parallel Version} & \textbf{\Msl{s}} \\
    \midrule
    
    Proposed in & Beygelzimer et al.~\cite{beygelzimer2006cover}; analysis later corrected by Curtin~\cite{curtin2015improving} and Elkin and Kurlin~\cite{elkin2021new} & Karger and Ruhl~\cite{karger2002finding} \\
    \midrule
    
    Assumptions for the stated bounds & ~~~\quad Constant expansion rate, and\newline bounded aspect ratio & Only constant expansion rate \\
    \midrule
    
    \multirow{2}{*}{Sequential Construction Time} & \multirow{2}{*}{$O(n\log n)$} & $O(n\log n\log\log n)$ \whp{}~\cite{karger2002finding}. \\
     & & Improved to $O(n\log n)$ expected \textbf{[this paper]} \\
    \midrule
    
    \multirow{2}{*}{Nearest Neighbor Search Time} & \multirow{2}{*}{$O(\log n)$} & $O(\log n\log\log n)$ \whp{}~\cite{karger2002finding}. \\
     & & Improved to $O(\log n)$ expected \textbf{[this paper]} \\
    \midrule
    
    Parallel Construction Work & $O(n\log n)$ expected~\cite{gu2022parallel} & $O(n\log n)$ expected \textbf{[this paper]} \\
    \midrule
    
    Parallel Construction Span & $O(\log^3 n\log\log n)$ \whp{}~\cite{gu2022parallel} & $O(\log^3 n)$ \whp{} \textbf{[this paper]} \\
    \bottomrule
  \end{tabularx}
  \caption{Comparison between cover trees and \msl{s} for exact NNS in metric spaces. The stated bounds assume a constant expansion rate; the cover-tree bounds additionally use a bounded aspect ratio. For both data structures, the assumptions are needed for the displayed asymptotic bounds, not for correctness: without them, both still find exact nearest neighbors, but the expansion rate and the aspect ratio (for cover trees) appear in the cost bounds.}
  \label{tab:cover-tree-msl}
\end{table*}

\section{Introduction}

Nearest neighbor search (NNS) is a fundamental algorithmic primitive with extensive applications in computational geometry, computer graphics, machine learning, computer vision, and various other fields within artificial intelligence. 
In many practical scenarios, the distance between two points is defined using measures other than the standard Euclidean metric and $L_{p}$ norm.
Commonly employed alternatives include cosine similarity (related to angular distance), Jaccard distance~\cite{jaccard1901etude} (set-based similarity), tree/graph distances, Mahalanobis distance~\cite{mahalanobis1936generalized}, and Earth Mover's Distance~\cite{rubner1998metric}. 
All these distance measures satisfy specific properties (defined in \cref{sec:prelim}) and can be generalized to \emph{metric spaces}.

Many efficient data structures~\cite{karger2002finding,ruhl2003efficient,krauthgamer2004navigating,clarkson2006nearest,beygelzimer2006cover,izbicki2015faster,elkin2021new,gu2022parallel} have been developed specifically for NNS in metric spaces.
Unfortunately, finding (or even approximating) nearest neighbors in general metrics requires linear time~\cite{uniformmetric,beygelzimer2006cover}. 
Therefore, most of these data structures rely on assumptions regarding a ``good'' data distribution to achieve high efficiency. 
Two widely-studied properties include low expansion rate, which indicates that the density of points in the metric space changes smoothly, and bounded aspect ratio, 
which indicates that the scale of the dataset (maximum vs.\ minimum distance) is bounded.
We define these two concepts formally in \cref{sec:prelim}.

Under certain assumptions, the two canonical theoretical data structures for exact NNS in metric spaces are the \emph{metric skip list} introduced by Karger and Ruhl in 2002~\cite{karger2002finding} and the \emph{cover tree} proposed by Beygelzimer et al.~in 2006~\cite{beygelzimer2006cover}. 
Their best-known bounds are shown in \cref{tab:cover-tree-msl}.
The cover tree was long regarded as the standard theoretical solution for NNS in metric spaces since 1) its construction and query bounds are better by a factor of $O(\log\log n)$ where $n$ is the input size, and 2) it has a work-efficient parallel version with polylogarithmic span~\cite{gu2022parallel}.
However, in addition to a low expansion rate, the cover tree also assumes a bounded aspect ratio\footnote{A recent work~\cite{elkin2023new} showed a sequential cover tree variant without this assumption; to our knowledge, no parallel version of this variant or easy fix of~\cite{gu2022parallel} is known.} for both the original cover tree~\cite{beygelzimer2006cover} and its parallel version~\cite{gu2022parallel}. 
This assumption can be strong: in $L_p$ metrics, even a simple quad/octree can achieve the same construction and query bounds assuming a bounded aspect ratio~\cite{blelloch2022parallel}.
By contrast, \msl{s} only assume a constant expansion rate to achieve their theoretical guarantees. 

Therefore, this gap motivates a natural question for NNS: for \msl{s}, can we achieve the same construction time (work) as cover trees, as well as high parallelism (polylogarithmic span), while maintaining their flexibility of solely relying on the basic low expansion rate assumption? 
This would make \msl{} a theoretically powerful building block for NNS queries, enabling downstream applications to achieve strong bounds under minimal assumptions. 
However, designing such an efficient and parallel solution remained open. 
In this paper, we answer this question affirmatively. 
In particular, we present new algorithms for \msl{s} that achieve the same $O(n\log n)$ work as cover trees (modulo randomization) for construction with polylogarithmic span, as well as the same asymptotic query bounds, under weaker assumptions.

We note that achieving the strong cost bounds, both in work (overcoming the $O(\log\log n)$ overhead) and span (achieving parallelism), is highly non-trivial. 
Several factors contribute to this difficulty. 
The first reason is that the original sequential algorithm itself is substantially complicated.
The algorithm given in the original paper~\cite{karger2002finding,ruhl2003efficient} requires $O(n \log n \log \log n)$ work, which incurs an $O(\log\log n)$ factor overhead as compared to cover trees.
While that paper shared some high-level ideas regarding the possibility of achieving an efficient construction with $O(n \log n)$ work, no algorithmic details or formal analysis were provided. 
Hence, the first major contribution of this paper is to present an $O(n \log n)$-work sequential construction algorithm for \msl{s} and its analysis, which are presented in \cref{sec:seq}. 
As mentioned in~\cite{karger2002finding}, maintaining $O(n \log n)$ additional pointers can avoid the $O(\log \log n)$ overhead in queries. 
However, we note that achieving this requires a major redesign of the algorithm, as we show in detail in \cref{sec:seq}, since computing each of these pointers can require $O(\log n)$ work in the worst case. 
To bound the overall work in construction, we analyze and amortize these additional costs against other operations to achieve the improved bounds.

We note that achieving the strong cost bounds, both in work (overcoming the $O(\log\log n)$ overhead) and span (achieving parallelism), is highly non-trivial. 
Several factors contribute to this difficulty. 
The first reason is that the original sequential algorithm itself is substantially complicated.
The algorithm given in the original paper~\cite{karger2002finding,ruhl2003efficient} requires $O(n \log n \log \log n)$ work, which incurs an $O(\log\log n)$ factor overhead as compared to cover trees.
While that paper shared some high-level ideas regarding the possibility of achieving an efficient construction with $O(n \log n)$ work, no algorithmic details or formal analysis were provided. 
Hence, the first major contribution of this paper is to present an $O(n \log n)$-work sequential construction algorithm for \msl{s} and its analysis, which are presented in \cref{sec:seq}. 
As mentioned in~\cite{karger2002finding}, maintaining $O(n \log n)$ additional pointers can avoid the $O(\log \log n)$ overhead in queries. 
However, we note that achieving this requires a major redesign of the algorithm, as we show in detail in \cref{sec:seq}, since computing each of these pointers can require $O(\log n)$ work in the worst case. 
To bound the overall work in construction, we analyze and amortize these additional costs against other operations to achieve the improved bounds.

The second reason is the complicated computational structure of \msl{s} for both queries and construction. 
The high-level idea of \msl{} is to first permute all points randomly, obtaining the \emph{priority} for each point (lower rank means higher priority). 
We denote the permuted point with rank $i$ as $s_i$. 
For each point $s_i$, the goal is to maintain an index to retrieve a constant number $\alpha$ of random (specifically, highest-priority) points within any given distance $r$ from $s_i$. 
Such indices are referred to as \defn{finger lists}. 
Sequentially, the finger lists are built from the lowest-priority point ($s_n$) to the highest ($s_1$), one at a time. 
Each point $s_i$ maintains a sequence of finger lists: its first finger list simply includes the next $\alpha$ points after $s_i$; we denote this finger list as $F$. 
Then, we iterate through the rest of the points $s_j$ from $j=i+1$ to $n$. If $s_j$ is closer to $s_i$ than the current points in $F$, we replace the farthest point in $F$ with $s_j$, store the updated $F$ as the next finger list for $s_i$, and proceed to the next $j$. 
Due to the random permutation, Karger and Ruhl~\cite{karger2002finding} showed that each point has $O(\alpha\log n)$ finger lists \whp{}. 
These finger lists will navigate queries to find the nearest neighbor in a random-walk manner; a query for a point $q$ will start from $s_1$, using its finger list to identify a (roughly) closer point $s_{\nxt}$, and advance to the finger list of $s_{\nxt}$ to continue the process.

By definition, a trivial construction process takes $O(n)$ work per point ($O(n^2)$ total work), assuming $\alpha$ is a constant. 
The key insight for a more efficient construction algorithm is that, when constructing the finger list for $s_i$, we can consult the existing finger lists of $s_{i+1}\dots s_n$. 
At a high level, this process resembles a random walk that queries $s_i$ in the already constructed finger lists of $s_{i+1}\dots s_n$. 
Along the way, we evaluate the encountered points and build the finger lists for $s_i$ accordingly. 
This yields $\tilde{O}(n)$ construction work rather than the na\"{i}ve quadratic cost. 

Given this computational structure, work-efficient parallelization of this process is highly non-trivial. 
The iterative, element-by-element construction of the lists appears inherently sequential: 
computing the finger lists for $s_i$ may consult, and thus seemingly depend on, the finger lists of the subsequent $\alpha$ points. 
While the true dependencies are much sparser (computing the finger lists for $s_i$ may require one finger list of $s_j$ ($j>i$) but not all finger lists), 
analyzing the actual dependence graph remains very difficult due to the complications of both isolating these true dependencies and the randomization inherent in the algorithm. 

To overcome this obstacle, our solution is somewhat counterintuitive. 
We add artificial synchronization barriers to the dependence graph---in a divide-and-conquer manner.
For instance, at the top level, we compute the finger lists for points $s_1, \dots, s_{n/2}$ and the finger lists for points $s_{n/2+1}, \dots, s_n$ in parallel.
By definition, the second (right) subproblem is fully settled, but the first (left) one requires additional care to handle cross-boundary dependencies. 
Note that the random walk starts from the high-priority points and proceeds toward the lower-priority points.
Hence, for a point $s_i$ in the left subproblem, 
the random walk can be blocked without knowing the points on the right. 
This process must terminate when the random walk wishes to advance to the finger list of a point $s_j$ that has not been fully built. 
In this case, we say $s_j$ \defn{controls} $s_i$. 
By introducing the ``artificial barriers'' in the divide-and-conquer manner, we prove that
the ``control'' dependences among the input points form a tree, thereby facilitating more intuitive reasoning and analysis.
We show that the random-walk property guarantees the tree height as $O(\log n)$ \whp{}.
Combined with the divide-and-conquer paradigm, we show that the dependence depth for the entire computation is $O(\log^2 n)$ \whp{}, and the total span is $O(\log^3 n)$ \whp{}.
We provide more details of our algorithm and analysis in \cref{sec:parallel}.
Combining all these techniques, we present the main theorem of this paper below. 

\begin{theorem}
  Given a set of $n$ points with a constant expansion rate,
  the \msl{} can be constructed in $O(n \log n)$ expected work and $O(\log^3 n)$ span \whp{}.
  A nearest neighbor query can be answered in $O(\log n)$ expected time.
\end{theorem}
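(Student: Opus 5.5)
The proof splits into three parts, one per claim in the statement: sequential work (which also yields the query bound), then parallel span, then parallel work. Throughout, we rely on the Karger--Ruhl sampling lemma for expansion rate $c$. For a random permutation, the highest-priority point inside a ball $B(s_i,r)$ lies, with constant probability, in $B(s_i,r/2)$. Equivalently, the finger-list radii of $s_i$ shrink by a constant factor after $O(\alpha)$ changes in expectation. Hence each point has $O(\alpha\log n)$ finger lists \whp{}.

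\textbf{Sequential work and query time.} Each finger list stores an extra ``advance'' pointer, in the spirit of the $O(n\log n)$ additional pointers suggested in~\cite{karger2002finding}. From the list of radius $r$ at $s_j$, this pointer leads directly to the appropriate list of the next point $s_{\nxt}$, so no search over the finger lists of $s_{\nxt}$ is needed.
\begin{itemize}
\item A query for $q$ becomes a walk in which each step costs $O(\alpha)$ distance computations. By the sampling lemma, the distance from the current point to $q$ halves every $O(1)$ steps in expectation, and the walk stops once the list is empty, i.e., once it reaches the NN. This gives $O(\log n)$ steps in expectation, using a backward analysis over the random order.
\item Constructing $s_i$ is exactly a query for $s_i$ on $s_{i+1},\dots,s_n$, and it emits $s_i$'s finger lists along the way. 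So it costs $O(\log n)$ expected time plus the cost of setting the advance pointers.
\item An advance pointer found by a plain scan could cost $O(\log n)$ each. Instead, we charge it to the walk: pointers are set while walking, since the walk visits the relevant lists in radius order, and pointers into the newly created lists of $s_i$ are fixed by a monotone merge. The total charge is proportional to the number of finger lists, $O(n\log n)$ in expectation.
\end{itemize}
Summing over $i$ gives $O(n\log n)$ expected work.

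\textbf{Span.} We use the divide-and-conquer scheme with artificial barriers. Recursively build the lists of $s_1..s_{n/2}$ (restricted to the left half) and of $s_{n/2+1}..s_n$ in parallel. Then resume each left walk into the right half.
\begin{itemize}
\item A walk for $s_i$ that wants to advance into an unfinished $s_j$ is blocked, and $s_j$ controls $s_i$.
\item We show the control relation forms a forest. Each blocked walk waits on exactly one point, and controllers have lower priority, so there are no cycles.
\item Its depth is $O(\log n)$ \whp{}. A chain of controls corresponds to a random-walk path whose radii shrink geometrically, which we bound using the sampling lemma plus a Chernoff bound over the permutation. We can also view it as a query path in a skip-list-like structure, whose length is $O(\log n)$ \whp{}.
\item We process the control forest level by level. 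Each level does parallel $O(1)$-step advances plus a packing/semisort, costing $O(\log n)$ span per level. This gives $O(\log^2 n)$ span per merge and $O(\log^3 n)$ over $O(\log n)$ recursion levels.
\end{itemize}

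\textbf{Parallel work.} The parallel algorithm performs the same distance evaluations as the sequential one: each walk step is executed once, only delayed. The extra bookkeeping of control trees is linear in the number of blocked events per level. There is one blocked event per point per recursion level, giving $O(n\log n)$ overall, and this matches the sequential bound.

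\textbf{Main obstacle.} The hardest step is the $O(\log n)$ \whp{} bound on control-chain depth. The chain couples walks of different points across the random permutation, so the independence needed for Chernoff must come from a careful backward analysis. My first attempt would be to map each chain to a single query-like walk toward the deepest blocked point and reuse the query high-probability bound.
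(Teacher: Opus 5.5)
Your architecture matches the paper's: advance pointers for the sequential bounds, divide-and-conquer with artificial barriers, and a control forest processed layer by layer. But the steps that carry the real content are either asserted incorrectly or left open.

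\textbf{Advance pointers.} A pointer cannot lead ``directly to the appropriate list'' of $s_{\mathit{nxt}}$. It is fixed at construction time and encodes the radius of the current focus list. The radius needed at $s_{\mathit{nxt}}$ is $2d(s_{\mathit{nxt}},q)$, or $d(s_{\mathit{nxt}},s_i)+\max(\cdot,\cdot)$ during construction. This depends on the query and can be much smaller (closer case) or larger (farther case) than the pointer's radius. So every step needs an \emph{align} walk up or down the lists of $s_{\mathit{nxt}}$, a single such walk can cost $\Theta(\log n)$, and the per-step cost is not $O(\alpha)$.

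The paper's $O(\log n)$ query bound comes from amortizing exactly these moves:
\begin{itemize}
\item Upward moves cost $O(1)$ expected, because only $O(1)$ expected lists lie between radii $r$ and $2r$. This is a negative-hypergeometric count using $|B(s_i,2r)|\le c|B(s_i,r)|$.
\item Downward moves are charged to the decrease of the potential $\log|B(s_{\mathit{cur}},2d(s_{\mathit{cur}},q))|$, which has negative drift.
\end{itemize}
Pointer setting during construction needs two specific charges:
\begin{itemize}
\item The evictor's pointer follows the next iteration's align, which already reaches radius below $2\,\mathit{radius}(F_{\mathit{new}})$, so only $O(1)$ expected extra moves remain.
\item The pointers of non-evicted $s_j$, pushed down as $s_i$'s radius shrinks, are charged to the copies of $s_j$ in $s_i$'s lists. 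This uses expansion to compare balls around $s_i$ and $s_j$ at radii at least $d(s_i,s_j)$.
\end{itemize}
Your ``radius order'' and ``monotone merge'' remarks do not give this. The search radius is not monotone, and in the sequential order nothing yet points into $s_i$'s lists.

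The same omission breaks your parallel work argument. ``Same distance evaluations, only delayed'' ignores that in the left recursive call a shift-of-focus pointer may target a list of $s_{\mathit{nxt}}$ that does not exist yet, and tail lists carry no valid pointers. Without a cheap way to fill these in later, you fall back to binary search and $O(n\log n\log\log n)$ work. The paper handles this with the advance tree, where pending pointers into each $s_j$ are settled as $s_j$'s new lists appear. It also resumes from the last complete list of $C[i]$ and treats the last $O(\alpha)$ added points separately.

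\textbf{Control-tree depth.} You leave this bound open, and your suggested route fails. A chain $i\to C(i)\to C(C(i))\to\cdots$ concatenates walks of \emph{different} points toward different targets, so no single radius shrinks along it and it is not one query walk. The missing idea is to measure progress in index space:
\begin{itemize}
\item Focus indices strictly increase along a walk.
\item The walk for $s_i$ must reach $s_m$, the nearest neighbor of $s_i$ among $s_{i+1},\dots,s_n$, since $s_m$ must enter $s_i$'s lists. Hence $C(i)\ge m$.
\item Since $m$ is uniform over the suffix, $C(i)\ge (i+n)/2$ with probability at least $1/2$.
\end{itemize}
Each control hop therefore halves the remaining suffix with probability at least $1/2$, independently across hops as in the KR analysis. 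Chernoff plus a union bound then gives height $O(\log n)$ with high probability.

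\textbf{Span per layer.} The $O(\log n)$ per layer does not come from ``$O(1)$-step advances.'' Each unblocked point runs an $O(\log n)$-step walk through the complete right half until it stops at its next control point, and that walk is what contributes the $O(\log n)$ per layer.
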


Nearest neighbor search is a fundamental primitive in computational geometry~\cite{abdelkader2021approximate,har2017proximity}, 
machine learning~\cite{chiu2019learning,chatzigeorgakidis2018fml}, 
and data mining~\cite{shakhnarovich2008nearest}.
Our results yield the best-known parallel theoretical bounds for many applications, 
including bichromatic closest pair, density-based clustering, and $k$-NN graph construction.
Due to space limitations, these applications are discussed in 
\ifconference{the full version of this paper \cite{metricskiplistfullversion}}\iffullversion{Appendix~\ref{sec:app:app}}.

\hide{

For a dataset of $n$ points, the cover tree~\cite{beygelzimer2006cover} requires $O(n\log n)$ work to construct, requires $O(\log n)$ work per NNS query, and its construction can be parallelized work-efficiently with polylogarithmic span~\cite{gu2022parallel}. 
The \msl{} require $O(n\log n\log\log n)$ construction work and answers each NNS query in $O(\log n\log\log n)$ work. 
With better theoretical bound and being parallelizable, cover tree was long regarded as the standard theoretical solution for NNS in metric spaces.
However, a theoretical gap still exist (summarized in \cref{tab:cover-tree-msl}):
while both the cover tree and \msl{} require low expansion rate to achieve the stated bounds, later study~\cite{curtin2015improving,elkin2021new} revealed that 
the original cover tree~\cite{beygelzimer2006cover} and its parallel version~\cite{gu2022parallel} additionally requires bounded aspect ratio\footnote{A recent work~\cite{elkin2023new} proposed a sequential cover tree variant with the stated bounds without assuming bounded aspect-ratio; to our knowledge, no parallel version of that variant is known.}. 
This assumption can be strong: in $L_p$ metrics, even quad/octree-based structures obtain logarithmic search costs under a bounded aspect ratio~\cite{blelloch2022parallel}.
By contrast, \msl{s} only require a constant expansion rate for their theoretical guarantees, but the original bounds have an extra $O(\log\log n)$ factor in both construction and queries, and no parallel construction was known. 


\hide{\begin{table*}[!t]
  \centering
  \setlength{\tabcolsep}{3pt}
  \begin{tabularx}{\textwidth}{@{}p{0.28\textwidth} p{0.3\textwidth} X@{}}
    \toprule
     & Original Cover Tree and Its Parallel Version & \Msl{s} \\
    \midrule
    Proposed in & Beygelzimer et al.~\cite{beygelzimer2006cover}; analysis corrected by Curtin~\cite{curtin2015improving} and Elkin and Kurlin~\cite{elkin2021new} & Karger and Ruhl~\cite{karger2002finding} \\
    Assumptions for the stated bounds & Constant expansion rate and bounded aspect ratio & Constant expansion rate \\
    Sequential Construction Time& $O(n\log n)$ & $O(n\log n\log\log n)$ \whp{}~\cite{karger2002finding}. \\
    &&Improved to $O(n\log n)$ expected \textbf{(this paper)} \\
    Nearest Neighbor Search Time& $O(\log n)$ & $O(\log n\log\log n)$ \whp{}~\cite{karger2002finding}. \\
    &&Improved to $O(\log n)$ expected \textbf{(this paper)} \\
    Parallel Construction Work & $O(n\log n)$ expected~\cite{gu2022parallel} & $O(n\log n)$ expected \textbf{(this paper)} \\
    Parallel Construction Span & $O(\log^3 n\log\log n)$ \whp{}~\cite{gu2022parallel} & $O(\log^3 n)$ \whp{} \textbf{(this paper)} \\
    \bottomrule
  \end{tabularx}
  \caption{Comparison between cover trees and \msl{s} for exact NNS in metric spaces. The stated bounds assume a constant expansion rate; the cover-tree bounds additionally use a bounded aspect ratio. For both data structures, the assumptions are needed for the displayed asymptotic bounds, not for correctness: without them, both still find exact nearest neighbors, 
  but the expansion rate and the aspect ratio (for cover trees) appear in the cost bounds. }
  \label{tab:cover-tree-msl}
\end{table*}}

\hide{
Given this computational structure, parallelizing this process with efficient work is extremely complicated.
The iterative nature in which lists are built one at a time suggests no parallelism in this process.
For instance, the first finger list of any point $s_i$ contains the next $\alpha$ points (i.e., $s_{i+1}, \dots, s_{i+\alpha}$), 
so calculating the next finger list would require the finger lists of these $\alpha$ points.
Fortunately, instead of requiring all $O(\log n)$ finger lists for each of the $\alpha$ points, the algorithm only needs one specific finger list, determined by the distance between that record and $s_i$.
If we consider each finger list as a basic computation element, then the dependence graph contains $O(n \log n)$ vertices, 
each depending on $\alpha$ vertices (existing finger lists). 
However, analyzing this dependence graph is challenging. 
The correctness of the algorithm also depends on the randomness from the permutation, which makes the analysis of this dependence structure more complicated. 
Hence, it may be highly difficult to directly bound the depth of this dependence structure, although we conjecture its depth to be $O(\log n)$ \whp{}.
}

\hide{
To overcome this obstacle, our solution is somewhat counterintuitive. 
We add artificial synchronization barriers to the dependence graph---in a divide-and-conquer manner.
For instance, at the top level, we compute the finger lists for records $s_1, \dots, s_{n/2}$ and the finger lists for records $s_{n/2+1}, \dots, s_n$ in parallel.
By definition, the second (right) subproblem is fully settled, but the first (left) requires additional care to handle cross-boundary dependencies. 
Note that the random walk starts from the high-priority points and proceeds toward the lower-priority points.
Hence, for a point $s_i$ in the left subproblem, 
the random walk can be blocked without knowing the points on the right. 
Specifically, this process must terminate when it requires a finger list of a record $s_j$ that also has not been fully built. 
In this case, we say $s_j$ \defn{controls} $s_i$. 
Once we finish computing this specific finger list of $s_j$, the random walk goes beyond the left half and enters the right half, which is already fully computed; hence we can finish the random walk of $s_i$. 
By introducing this ``artificial barrier'', we can show that each record is controlled by at most one other record. 
This therefore indicates that the ``control'' dependences form a tree, and we further show that the tree depth is $O(\log n)$ \whp{}.
Combined with the divide-and-conquer paradigm, 
we can finally achieve polylogarithmic span for the overall algorithm. 
We provide more details of our algorithm and analysis in \cref{sec:parallel}.
Combining all these techniques, we arrive at the main theorem of this paper.}


}

\section{Preliminaries}\label{sec:prelim}

We use the term $O(f(n))$ \textbf{with high probability (\whp{})} in $n$ to indicate that the bound $O(g(k)\cdot f(n))$ holds with probability at least $1-1/n^k$ for \emph{any} $k\ge 1$ for some function $g(k)$. 
When the context is clear, we drop ``in $n$''.
Our notation is stronger than the high probability in some other work that only assumes a given $k$ that exists.
This is necessary in analyzing the expected work bound for our proposed algorithms.

\myparagraph{Computational Model.}
When analyzing the algorithms, we use the standard work-span model for fork-join parallelism with binary forking~\cite{CLRS,blelloch2020optimal}. 
We assume a set of \thread{}s that share a common memory.  Each \thread{} supports standard RAM instructions, a \forkins{}
instruction that forks two new child \thread{}s, and a constant-time test-and-set atomic operation.
When both children complete, the parent \thread{} continues.
A computation starts with a single {root} \thread{} and finishes when that root \thread{} finishes.
The \defn{work} of an algorithm is the total number of instructions. 
The \defn{span} (depth) is the length of the longest sequence of dependent instructions in the computation.
The randomized work-stealing scheduler can execute such a computation in $W/P+O(S)$ time \whp{} in $W$~\cite{BL98,arora2001thread,gu2022analysis}.
We say a parallel algorithm is \defn{work-efficient} if its work is $O(W)$,
where $W$ is the work of the best-known sequential algorithm.

\myparagraph{Low-Expansion Metric Space.}
The efficiency of many nearest-neighbor search algorithms and data structures relies on the assumption of low expansion rate of a metric space.
A metric $(X,d_X)$ is defined on a set $X$ with a distance function $d: X\times X \to \R^*$ that satisfies properties for any $x,y,z\in X$: 
\begin{enumerate}
  \item $d_X(x,y)=0\Leftrightarrow x=y$,
  \item $d_X(x,y) = d_X(y,x)$, and
  \item $d_X(x,y)\leq d_X(x,z)+d_X(z,y)$ (\emph{the triangle inequality}).
\end{enumerate}
When the context is clear, we drop the subscript $X$. We use \emph{point} or \emph{record} interchangeably to refer to the objects in the metric space. 

Define $B_X(p,r) = \{x \in X~|~d(p,x) \leq r\}$ as the closed ball centered at point $p$ and containing all points in $X$ at a distance of at most $r$ from $p$.
With clear context, we drop $X$.
We say a metric has \defn{$(\rho,c)$-expansion}~\cite{karger2002finding} \emph{iff} for all $p\in X$ and $r>0$,
\[
|B(p,r)| \geq \rho \implies |B(p,2r)| \leq c\cdot |B(p,r)|.
\]
The parameter $c$ is referred to as the \defn{expansion rate} of the metric space, and we say a metric has a low or constant expansion rate if $c=O(1)$.
Usually we assume $\rho$ is $O(\log|X|)$, which guarantees constant expansion for most real-world datasets.
Otherwise, the bounds for the find operation for the \msl{} will include an additive term of $O(\rho)$~\cite{ruhl2003efficient}.
Intuitively, low expansion means a smooth distribution of the points,
and rules out the case where, as the ball grows, we encounter a few points, then a long distance with no points, then a sudden and massive increase in the number of points.
This case is also less likely in most real-world datasets.

\myparagraph{Bounded Aspect Ratio.}
Although not used in \msl{s}, for completeness, we define the aspect ratio, which is assumed to be bounded for cover trees to achieve the same asymptotic bounds as \msl{s}.
Aspect ratio is defined as $$\Delta=\frac{\max\{d(x,y)~|~\,x,y\in X\}}{\min\{d(x,y)~|~\,x,y\in X\}}.$$
To achieve $O(\log n)$ work for construction, cover trees in~\cite{beygelzimer2006cover} and the parallel version in~\cite{gu2023parallel}, require the aspect ratio $\Delta\le n^\kappa$ for some constant $\kappa>0$ to achieve the theoretical bounds, while \msl{s} do not have such a constraint. 

\myparagraph{Notation.} In this paper, we study a set of elements in the metric space. 
A common scenario is to consider multi-dimensional points, so we use \emph{points} to refer to the elements for simplicity, 
but our algorithm works on a general metric space. 
Based on the nature of the \msl{} algorithm, we assume the input point set $S$ has been randomly shuffled, and use $s_i$ to denote
the $i$-th point. 
When the context is clear, we may also use the index (also the priority) $i$ to refer to the point $s_i$. We use $n=|S|$
as the input size. 

The notation used in this paper are summarized in \cref{tab:notation}. More details about the notations for \emph{finger lists} will be introduced in \cref{sec:msl}
along with the background of the \msl{}.

\begin{table}[t!]
\small
\centering
\renewcommand{\arraystretch}{1.25}
\begin{tabular}{@{ }l@{ }l@{}}
\hline
\multicolumn{2}{@{}l}{\textbf{General Terms}} \\ \hline
\fullbf{$S$}       & $S=\{s_{1..n}\}$: A randomly shuffled sequence of input points.                 \\ 
\fullbf{$c$}              & The constant expansion rate.                            \\ 
\fullbf{$d(p, q)$}    & The distance between points $p$ and $q$.            \\ 
\hline
\multicolumn{2}{@{}l}{\textbf{The Finger Lists}} \\ \hline
\fullbf{$\alpha$}           & The sample size, or the size of each finger list. \\ 
\boldmath $\listoflist{i}$ \unboldmath  & An array of finger lists for point $i$, reversely sorted by radii. \\ 
\boldmath $\findlistrank{i}{k}$ \unboldmath  & The $k$-th finger list in $\listoflist{i}$. \\ 
\boldmath $\findlistradius{i}{r}$ \unboldmath & Returns the first finger list in $\listoflist{i}$ with radius \emph{no more than} $r$. \\ 
&{This can be performed  in $O(\log |\listoflist{i}|)=O(\log\log n)$  time  }\\
&{\whp{} by binary search.}\\
\multicolumn{2}{@{}l}{\boldmath $\algn{\findlistrank{i}{k}}{r}$ \unboldmath \quad Returns $F^*=\findlistradius{i}{r}$, by traversing the array $\findlistrank{i}{\cdot}$ up}\\ 
&  or down from $\findlistrank{i}{k}$.\\
\multicolumn{2}{@{}l}{Let $\boldsymbol{F}=\findlistrank{i}{k}$ be the $k$-th of the finger lists of point $i$. It maintains:} \\ 
\boldmath ~~$\listcontent{\somef}$ \unboldmath & All points in this finger list. \\
&With clear context, we directly use $\somef$ to represent $\listcontent{F}$.\\
\boldmath ~~$\radius{\somef}$ \boldmath & The radius of the finger list, i.e., $\max_{j\in F}d(s_i,s_j)$.\\
\boldmath ~~$\rght{\somef}$ \boldmath & A mapping from each element in $F$ to another finger list. \\
&For each $j\in F$, $\rght{\somef}[j]$ is a pointer to the finger list\\
&  $\findlistradius{j}{r}$, where $r=\radius{\somef}$.\\
\hline

 \multicolumn{2}{@{}l}{\textbf{General Algorithm Descriptions}} \\ \hline
 \fullbf{$\cur$}        & (or $s_{\cur}$) The \emph{\focuspoint}. The current point processed by the  \\
 &construction/query algorithms, or the random walk scheme.\\
 \fullbf{$\nxt$ }        & (or $s_{\nxt}$) The next \focuspoint.  \\
 \fullbf{$F^*$}         & The \emph{\focuslist}. One finger list in $\listoflist{\cur}$, where we will find the  \\
 &next \focuspoint{} $\nxt$ from.\\
 \hline

\end{tabular}
\caption{Notations used in algorithms and analysis.}
\label{tab:notation}
\end{table}

\hide{
\begin{theorem}[Sampling Theorem~\cite{karger2002finding}]\label{thm:sampling} 
  Given a metric with $(\rho,c)$-expansion, a uniformly random subset $X'$ with $|X'|=m$ will have $(\max(c\rho,O(\log m)),2c)$-expansion with high probability to $m$.
\end{theorem}
We note that if we want to improve the probability to the size of~$n$, then the analysis in~\cite{karger2002finding} implies that we just change $O(\log m)$ in the parameter to $O(\log n)$.

This theorem shows that for low-expansion metric space, a sample of this space is also low-expansion.
This is crucial for our randomized batch-insertion algorithm in \cref{sec:insert}.

The query cost for cover trees relies on the expansion rate including or excluding the query point(s).
For simplicity, we still use $c$ to denote the expansion rate after such modifications, if it is increased.
In particular, for single-point query and bichromatic closest pairs, the query expansion rate includes the query point; for Euclidean MST and single-linkage clustering, the query expansion rate excludes the query points.

This theorem is important to improve the probabilistic bound on $\rho$; in the analysis in~\cite{??} it implies that we just change $O(\log m)$ in the parameter to $O(\log n)$.

This theorem also holds for low-expansion metric space, a sample of this space is also low-expansion. This is crucial for the complexity of construction and queries to the randomly sampled finger lists.
}

\section{The Metric Skip Lists}\label{sec:seq}

The \msl{} is a randomized data structure for nearest neighbor search, originally proposed by Karger and Ruhl~\cite{karger2002finding}.
In the same work, they introduced the concept of the expansion rate of a metric space (as detailed in \cref{sec:prelim}), which is often referred to as the KR-dimension. 
This metric has since been extensively used as a standard assumption for analyzing algorithms in metric spaces. 
Under the assumption of a constant expansion rate, Karger and Ruhl showed that an \msl{} of $n$ points can be constructed with $O(n\log n\log \log n)$ work, 
while enabling nearest neighbor queries in $O(\log n\log \log n)$ cost.

The main contribution of this paper is to improve the cost bounds of \msl{s}. 
Sequentially, we improve the work bounds for construction and query to $O(n\log n)$ and $O(\log n)$, respectively, which are presented in \cref{sec:work-efficient}. 
More importantly, we provide the first parallel version of the \msl{}, where the construction is work-efficient with polylogarithmic span, given in \cref{sec:parallel}.

Before presenting our contributions, we first overview the original construction of \msl{s} in this section. 
We start with the underlying intuition, followed by the formal definition and algorithms for \msl{s}. 
This background facilitates a better understanding of our improved algorithms and their analysis.

\subsection{The Random-Walk Procedure}\label{sec:intuition}

The \msl{} is designed for nearest neighbor search, motivated by the intuition of a random-walk procedure to reach the nearest neighbor of a query point $q$ among an input set $S$ using \emph{sampling}. 
At a high level, one starts from an arbitrary point $p \in S$ and iteratively attempts to walk to a closer point. 
By the triangle inequality, if a point closer to $q$ than $p$ exists, it must reside within the ball $B_S(p, 2d(p, q))$. 
Therefore, the \emph{sampling scheme} inspects a random sample of $\alpha'$ points in $B_S(p, 2d(p, q))$, 
moves to the one closest to $q$ as the next candidate, and repeats this process. 
This procedure is outlined in \cref{alg:intuition}. 
With an appropriately chosen sample rate, each step has a sufficient probability of ``making progress'' toward a significantly closer point, despite a small probability of moving to a farther one. 
Existing work showed that the following lemma holds by setting $\alpha' = 2c^3$ for an expansion rate $c$.

\begin{compactlem}[\cite{karger2002finding,ruhl2003efficient}]\label{lem:intuition}
  Given a metric space $(S,d_S)$ with expansion rate $c$, the random walk in \cref{alg:intuition} reaches the nearest neighbor of $q$ in $O(\log |S|)$ iterations \whp{}.
\end{compactlem}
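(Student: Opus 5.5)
We track the potential $|B_S(q,d(p,q))|$, the number of points of $S$ at least as close to $q$ as the current candidate $p$. The goal is to show that one iteration of \cref{alg:intuition} shrinks this quantity by a constant factor with constant probability, and never increases it. A Chernoff bound over the iterations then gives $O(\log |S|)$ iterations \whp{}.

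\textbf{Step 1: monotonicity.} The next candidate is the sampled point closest to $q$. We assume, as is standard for this scheme, that $p$ itself is retained as a candidate; $p$ lies in $B_S(p,2d(p,q))$, so this is harmless. Then $d(\nxt,q)\le d(p,q)$, and the potential never increases.

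\textbf{Step 2: constant-probability progress.} Let $r=d(p,q)$. Let $A=B_S(q,r/2)$; every point of $A$ is closer to $q$ than $p$, and by the triangle inequality $A\subseteq B_S(p,2r)$. We also have
\[
B_S(p,2r)\subseteq B_S(q,3r)\subseteq B_S(q,4r).
\]
Applying the expansion property three times at center $q$ (radii $r/2\to r\to 2r\to 4r$) gives $|B_S(q,4r)|\le c^3|A|$. Hence a single uniform sample from $B_S(p,2r)$ lands in $A$ with probability at least $1/c^3$. With $\alpha'=2c^3$ samples, at least one lands in $A$ with probability at least $1-(1-1/c^3)^{2c^3}\ge 1-e^{-2}$.

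When some sample lands in $A$, the new distance is at most $r/2$. The potential then drops from $|B_S(q,r)|$ to at most $|B_S(q,r/2)|\le |B_S(q,r)|/c$ by expansion. This is a constant-factor decrease provided $c>1$; we can take $c\ge 2$ without loss of generality.

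\textbf{Step 3: counting iterations.} The potential starts at most $|S|$. It stops shrinking only when it reaches $1$, i.e., when the candidate is the nearest neighbor, or once balls fall below $\rho$ (handled below). So at most $\log_c |S|$ successful iterations are needed.

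Each iteration succeeds with probability at least $1-e^{-2}$, independently of the past given the current state, because fresh samples are drawn each time. The number of successes in $C\log|S|$ iterations therefore stochastically dominates a binomial random variable. By a Chernoff bound, the number of successes is at least $\log_c|S|$ with probability $1-|S|^{-k}$ once $C$ is large enough in terms of $k$. This matches the paper's \whp{} definition.

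\textbf{Main obstacle.} The difficulty is the $(\rho,c)$-expansion caveat. The expansion inequality only applies once $|B(q,r/2)|\ge\rho$, and the query point $q$ may not belong to $S$, so the argument must use balls of $S\cup\{q\}$.

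For the tail I would first try the following. Once the potential is below $\rho=O(\log|S|)$, every step still strictly improves or keeps the candidate. With probability at least a constant, each step finds a strictly closer point: the sample includes the nearest point of the small ball with probability $\Omega(1/|B(p,2r)|)$, and $|B(p,2r)|$ is polylogarithmically bounded here. This costs at most $O(\rho)$ extra iterations, consistent with the paper's remark that $\rho$ contributes an additive term.

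With $\rho$ assumed constant or absorbed, the stated $O(\log|S|)$ bound follows.
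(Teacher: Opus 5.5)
Your Step 2 has a genuine gap where you convert distance progress into potential progress. The expansion property only bounds growth. At center $q$ and radius $r/2$ it gives $|B_S(q,r)|\le c\,|B_S(q,r/2)|$, which is a \emph{lower} bound on $|B_S(q,r/2)|$. The inequality $|B_S(q,r/2)|\le |B_S(q,r)|/c$ that you use does not follow. Your remark that one may take $c\ge 2$ without loss of generality is a symptom: enlarging $c$ keeps the hypothesis true but makes your claimed inequality stronger. Halving the distance need not shrink the potential by any constant factor. For example, let $S$ be two copies of the grid $\{1,\dots,m\}$ on a line separated by a gap $D\gg m$ (expansion rate at most $3$), with $q=0$ and $p=D+1$. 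Your potential is $m+1$, and the set $A=B_S(q,r/2)$ is the entire first copy. A ``successful'' step may land on the point $m$, whose potential is $m$. So the count of $\log_c|S|$ successes in Step 3 is unsupported. What your argument really counts is distance halvings, and that is the aspect-ratio-type quantity metric skip lists are designed to avoid.

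The missing idea, which is the route the paper takes, is to measure progress by rank directly. Let $k=|B_S(q,r)|$ be the rank of $p$. The $k/2$ points closest to $q$ lie in $B_S(q,r)\subseteq B_S(p,2r)$. Meanwhile $B_S(p,2r)\subseteq B_S(q,4r)$, which has at most $c^2k$ points by applying expansion \emph{upward} from radius $r$. This needs only $k\ge\rho$, never $|B_S(q,r/2)|\ge\rho$. Hence each sample lands in the top half with probability at least $1/(2c^2)$, and all $2c^3$ samples miss with probability at most $(1-1/(2c^2))^{2c^3}\le e^{-c}$.

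Separately, your Step 1 analyzes a different walk. \cref{alg:intuition} replaces $p$ by the best sample without keeping $p$, so the rank can rise, though only to at most $c^2k$ because the new point lies in $B_S(p,2r)$. The paper treats the walk as stated with a drift argument on $\log_2 k$. It drops by at least $1$ with probability at least $1-e^{-c}$ and rises by at most $2\log_2 c$ otherwise. The drift is therefore negative for every $c\ge 1$, which gives $O(\log|S|)$ iterations with high probability. A $c$-independent success probability such as your $1-e^{-2}$ would not beat the $2\log_2 c$ increase for large $c$. So your monotonicity assumption does real work in the proof and cannot simply be declared harmless.
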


\begin{algorithm}[t]
\small
\caption{The random-walk process to $q$'s nearest neighbor based on sampling~\cite{karger2002finding,ruhl2003efficient}\label{alg:intuition}}
\KwIn{A set of points \( S \) and a query point \( q \)}
\DontPrintSemicolon
Let \( p \) be an arbitrary point in \( S \)\\
\Repeat{\( p \) is the nearest neighbor of \( q \) in \( S \)}{
    Let \( S' \) be a random sample of \( 2c^3 \) points in \( B_S(p, 2d(p, q)) \)\label{line:sample}\\
    Update $p$ as the closest point to \( q \) in \( S'\)\\
}

\end{algorithm}

While a formal proof is provided in~\cite{karger2002finding,ruhl2003efficient}, we include a proof sketch here to facilitate the understanding of our improved algorithm and its analysis. 
To analyze this random-walk procedure, we define the \defn{rank} of a point $p \in S$ with respect to the query point $q$ as its position in the sorted order of distances to $q$. 
Specifically, $p$ has rank $k$ if it is the $k$-th closest point to $q$ in $S$. 
We will show that throughout the random walk, the rank of $p$ decreases rapidly, dropping to $1$ in $O(\log n)$ iterations \whp{}.

Let the current point being processed be $p$ with rank $k$. 
We denote $\delta = d(p, q)$ and let $B$ represent the ball $B_S(p, 2\delta)$.
We first prove $|B| \le c^2 k$. 
Since $p$ has rank $k$, there are at most $k$ points in $B_S(q, \delta)$. 
For any point $p' \in B$, the triangle inequality implies $d(q, p') \le d(q, p) + d(p, p') \le \delta + 2\delta < 4\delta$. 
Thus, all points in $B$ are contained in $B_S(q, 4\delta)$. 
By the definition of the expansion rate and the fact that $|B_S(q, \delta)| \le k$, we obtain $|B| \le |B_S(q, 4\delta)| \le c^2 k$. 

Therefore, the next point is selected by sampling $\alpha' = 2c^3$ points from at most $c^2 k$ candidates.
We show that, in this case, the closest point is likely to have a rank smaller than $k/2$. 
The probability that all $2c^3$ samples have rank at least $k/2$ is at most 
$\left(1 - \frac{k/2}{c^2 k}\right)^{2c^3} = \left(1 - \frac{1}{2c^2}\right)^{2c^3} \le e^{-c}$. 
Thus, the rank of the next candidate point is halved with probability at least $1 - e^{-c}$. 
In the complementary case, the candidate point remains in $B$, which has rank at most $c^2 k$. 
This implies that even in the worst case, the rank increases by a factor of at most $c^2$ with low probability $e^{-c}$. 
Combining the two cases, one can verify that the rank decreases by a constant factor in expectation for any $c\ge 1$, ensuring that the nearest neighbor is reached in $O(\log n)$ iterations \whp{}.

\hide{
Note that all points in $B_S(p, 2d(p, q))$ will have distance at most twice of $d(p, q)$, so their $q$-ranks are no more than $c^2 q_p$ since $|B| \le c^2 q_p$.
Meanwhile, with very large chance, the new point gets much closer to $q$, and here we mean by the new point has a $q$-rank smaller than $q_p/2$.
Given the sample size, this happens with probability at least $1-e^{-c}$.
Combining the two cases shows that with constant probability, the overall $q$-rank shrinks with a constant fraction in expectation in every iteration.
This guarantees that the random walk will reach the nearest neighbor of \( q \) in \( S \) in $O(\log|S|)$ iterations \whp{}.
}

\subsection{The Structure of \Msl{s}}\label{sec:msl}

With the background, we now introduce the \msl{s}.
The \msl{s} are built upon the sampling idea introduced in \cref{alg:intuition}. 
At a high level, we aim to build a structure such that for each point $s_i\in S$ and a given radius $r$, 
the data structure can easily look up $\findlistradius{i}{r}$ as a random sample of $\alpha$ points in ball $B_S(s_i,r)$. 
While $r$ can be an arbitrary real number, for each point, there are at most $n$ discrete values of $r$ that are meaningful.
Therefore, the idea is to randomly shuffle all the points, and for any of the meaningful values of $r$, we store the $\alpha$ points with the highest priorities. 

\myparagraph{Definition of \Msl{s} and Finger Lists.} More specifically, a \msl{} $\mathcal{F}$ maintains a structure $\listoflist{i}$ for the $i$-th point $s_i$, which is a series of \defn{finger lists}.
Each \emph{finger list} contains $\alpha$ points with its corresponding \emph{radius}, which is the largest distance among all points in this finger list to $s_i$. 
The first finger list $\findlistrank{i}{1}$ simply consists of the next $\alpha$ points, i.e., $\{s_{i+1},...,s_{i+\alpha}\}$. 
We then iterate through the rest of the points in order, and if a point is closer than the radius of the current list, we will replace the farthest point with the new point, 
create a new finger list, and make it the next finger list in $\listoflist{i}$.
In this paper, we call a point $j$ an \defn{evictor} of a finger list $\listoflist{i}[k]$, if $j$ is the point that successfully evicts the farthest point in $\listoflist{i}[k]$, thereby forming $\listoflist{i}[k+1]$. An example is shown in \cref{fig:fingle-list-example}.

After we iterate through all points, we continue the process of removing the farthest point in the finger list, and construct $\alpha$ more finger lists at the end. We call these finger lists the \defn{tail} in $\listoflist{i}$. 
This concept is useful for the correctness of the algorithms. 

\begin{figure}[t]
  \centering
  \includegraphics[width=0.9\columnwidth]{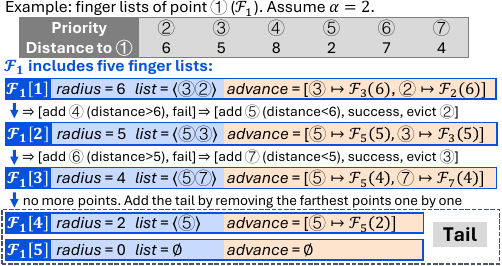}
  \caption{An example of the finger lists of Point 1. The blue fields are used in the original paper~\cite{karger2002finding,ruhl2003efficient}. 
  The orange fields are auxiliary metadata maintained in this paper. The definition of the fields are introduced in \cref{tab:notation}.}\label{fig:fingle-list-example}
\end{figure}


\hide{\myparagraph{Example of Finger Lists.} We present an example in \cref{fig:fingle-list-example} using $\alpha=2$. 
In this example, for point $1$, $\listoflist{1}$ contains five finger lists. 
The first finger list $\findlistrank{1}{1}$ contains the next two points $\langle s_2, s_3\rangle$ with radius 6. 
Next, we process $s_4$, but it will not be included since its distance is larger than 6.  
Then $s_5$ replaces the furthest point $s_2$, leading to $\findlistrank{1}{2}=\langle s_3, s_5\rangle$ with radius 5. 
Similarly, the next point $s_6$ will be skipped since the distance is $7>5$. 
Then $s_7$ successfully replaces $s_3$, resulting in $\findlistrank{1}{3}=\langle s_5, s_7\rangle$ with radius 4. 
Note that we will continue the process of removing the farthest point and construct the ``tail'' lists with fewer than $\alpha$ points. 
This is used to guarantee correctness for some corner cases in the algorithm. 
}

\myparagraph{Properties of a Finger List.} Although the definition may suggest $O(n)$ worst-case size for each finger list, 
due to the random permutation, the actual size is much smaller \whp{}. 
Existing work has shown the following lemma about the size of finger lists. 

\begin{compactlem}\label{lem:size}
  Given a randomly permuted point set $S=\{s_{1..n}\}$ and a constant $\alpha$, $|\listoflist{i}| = O(\alpha\log{n})$ \whp{}.
\end{compactlem}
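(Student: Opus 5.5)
We prove \cref{lem:size} with a backwards-analysis (record-counting) argument that uses only the random permutation, not the expansion rate. Fix $i$ and condition on the identity of $s_i$ and on the set of points that follow it. The number of finger lists in $\listoflist{i}$ is $\alpha$ (the tail) plus one (the initial list) plus the number of evictors. So it suffices to bound the number of indices $j>i+\alpha$ such that $s_j$ is an evictor. By definition, $s_j$ is an evictor exactly when $s_j$ is among the $\alpha$ closest points to $s_i$ within the prefix $\{s_{i+1},\dots,s_j\}$. We break distance ties by index so that this is well defined.

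Next we compute the probability of each such event. Conditioned on the \emph{set} $\{s_{i+1},\dots,s_j\}$, its internal order is a uniformly random permutation. Hence the last element $s_j$ is a uniform element of this set of size $m=j-i$. Let $X_m$ be the indicator that the last of the first $m$ points after $s_i$ is among the $\alpha$ closest to $s_i$ in that prefix. Then $\Pr[X_m=1]=\alpha/m$ for $m>\alpha$. Moreover, the standard record-value argument shows that the $X_m$ are mutually independent. The reason is that $X_m$ depends only on the relative rank of the $m$-th element among the first $m$, and for a uniform permutation these relative ranks are independent. Therefore
\[
\mathbb{E}\Bigl[\sum_{m=\alpha+1}^{n} X_m\Bigr]=\sum_{m=\alpha+1}^{n}\frac{\alpha}{m}\le \alpha \ln n .
\]

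For the high-probability claim, we apply a Chernoff bound to the sum of independent Bernoulli variables $\sum_m X_m$, whose mean is $\mu\le \alpha\ln n$. We use the form $\Pr\bigl[X\ge (1+\delta)\mu'\bigr]\le e^{-\delta\mu'/3}$ with $\mu'=\alpha\ln n$ as an upper bound on the mean, choosing $\delta$ as a large constant depending on $k$. This gives $\Pr\bigl[|\listoflist{i}|> C_k\,\alpha\log n\bigr]\le n^{-k-1}$. A union bound over all $n$ indices $i$ then gives the bound for every $i$ simultaneously with probability at least $1-n^{-k}$, which matches the strong \whp{} notion of \cref{sec:prelim}. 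Constants such as the $+1$ and the $\alpha$ tail lists are absorbed into $O(\alpha\log n)$.

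The main obstacle is justifying independence of the $X_m$ rigorously, since the Chernoff bound needs it. The plan is to invoke the classical fact that the relative ranks $R_m$ of a random permutation are independent, with $R_m$ uniform on $\{1,\dots,m\}$. For example, this follows by generating the permutation through sequential random insertion. We then note that $X_m=[R_m\le\alpha]$, where the ranking is by distance to $s_i$. If one prefers to avoid independence, a fallback is to show that the $X_m$ are negatively associated, or to bound moments directly. Independence, however, is the cleanest route.
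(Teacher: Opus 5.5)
Your argument is essentially the paper's: the paper only remarks that the random permutation gives $|\mathcal{F}_i|\le\alpha H_n$ (really a bound on the expectation), and your backwards analysis is a correct and more complete version of it. It uses the per-position probability $\alpha/m$, a Chernoff bound over independent indicators, and a union bound over $i$, which supplies the concentration step the paper leaves implicit.

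One small repair is needed. If you break distance ties by index, the ranking depends on the permutation, so the $X_m$ need not be independent. For example, distances $\{1,2,2\}$ with $\alpha=1$ give $\Pr[X_2=1]=\Pr[X_3=1]=1/3$ but $\Pr[X_2=X_3=1]=0$. Instead, break ties by a fixed, permutation-independent order. An eviction at position $m$ requires $d(s_j,s_i)$ to be strictly below the current radius, so it still forces $X_m=1$. The number of evictors is then dominated by a sum of independent Bernoulli$(\alpha/m)$ variables, and your Chernoff step goes through unchanged.
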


It is easy to see that this lemma is true based on the random permutation.
In fact, $|\listoflist{i}|\le \alpha\cdot H_n$ where $H_n$ is the harmonic series. 

Note that for any point $i$, all finger lists in $\listoflist{i}$ are naturally sorted by distance (high to low). 
Therefore, looking for $\findlistradius{i}{r}$ can be easily performed by a binary search in $O(\log \log n)$ time. 
For each $i$ and $r$, $\findlistradius{i}{r}$ contains the $\alpha$ points with highest priority in radius $r$, forming a random sample set. 
In this paper, we set $\alpha=16c^3$. 
This is different from the parameter in \cref{alg:intuition}, and we will explain the reason later when we present the construction and query algorithms in \cref{sec:kr}.
We also note that we use a different $\alpha$ from the original paper~\cite{karger2002finding,ruhl2003efficient} to simplify the analysis, 
but in both cases $\alpha$ is a function of $c$ and is a constant with constant expansion rate. 


\myparagraph{The Organization and Notation of Finger Lists.} In the original paper, each finger list only maintains the corresponding list of $\alpha$ points and its radius. 
Therefore, we represent each finger list $\somef$ as a structure described in \cref{tab:notation}.
In particular, we use $\listcontent{\somef}$ to store the list of points. With clear context, we directly use $\somef$ as $\listcontent{\somef}$. 
We use $\radius{\somef}$ to represent the radius of $\somef$. 

In our paper, we will further augment each list with 
useful pointers (the \emph{.\rghttxt} pointers) to accelerate the algorithms. 
We introduce them in detail in \cref{sec:work-efficient,sec:parallel}. 
The major overhead of the original algorithm stems from using binary search to look up $\findlistradius{i}{r}$, which costs $O(\log\log n)$ \whp{} each time. 
Our augmented pointers are mainly used to avoid this cost, which improves the bound even for the existing sequential algorithm. 

Although the definition of \msl{} may suggest $O(n^2)$ work in the worst case for construction, 
since the total size of all finger lists is $O(\alpha n\log n)$ \whp{}, one can actually use a more efficient algorithm to build them. 
Next, we present the original construction and query algorithms proposed in Karger and Ruhl~\cite{karger2002finding} (referred to as the {KR algorithm}),
where construction takes $O(n\log n\log\log n)$ work and a nearest neighbor query takes $O(\log n\log\log n)$ work.

\hide{
Hence, in the original paper~\cite{karger2002finding}, a \msl{}, denoted by $F(i,r)$, is a set of $\alpha$ points given a query $p_i$ and $r$.
Unfortunately, in this notation, since $r$ can be any given real number, indexing the exact \msl{} would need a binary search, which causes an $O(\log \log n)$ cost per lookup in both construction and queries.
Hence, to support efficient construction and queries, we must redefine the \msl{s} that can be used more effectively.
The change also facilitate parallelizing this algorithm since we are now more concrete about what to compute and how to store the data structure.

Given $n$ points $S=\{s_{1..n}\}$ in a randomly shuffled order and a parameter $\alpha$, the \msl{s} consist of $n$ lists of finger lists, each corresponding to one of the points.
Our definition of the finger lists is given in \cref{tab:notation}.
Each point $p_i$ owns an array of finger lists denoted as $\listoflist{i}$.
This array is sorted and indexed by $\findlistrank{i}{k}$.
There are several benefits to define the \msl{s} in this way.
First, all finger lists are stored in a 2D array that can be directly located.
Second, $\findlistrank{i}{k}$ is define as a ``struct'' that we can easily augment auxiliary values and pointers, in addition to the set of points (denoted as $\listcontent{\somef}$ as a member variable).

Note that the goal of designing the \msl{s} is to supplement the sampling procedure of line~\ref{line:sample} in \cref{alg:intuition}.
This is based on the random permutation of the input point set $S$---i.e., we pick the points with the highest priority within the given radius.
We store all possible combinations of the $\alpha$ sample points in $\listoflist{i}$.
We denote the radius of these points as the farthest distance to $s_i$ as $\radius{\somef}$, and sort all lists in decreasing order of their radii.
In addition, we maintain some auxiliary pointers $\rght{\somef}$, $\up{\somef}$, and $\down{\somef}$ in new algorithms in \cref{sec:work-efficient} and \cref{sec:parallel}, which we will explain later.
An illustration is given in \cref{fig:fingle-list-example}.

Interestingly, our definition of the \msl{s} is stand-alone and can be independent with the nearest-neighbor search algorithm.
Given $S=\{s_{1..n}\}$ and $\alpha$, $\listoflist{i}$ can be computed as follows.
$\listoflist{i}[1]$ will consists of the first $\alpha$ points, i.e., $\{s_{i+1},...,s_{i+\alpha}\}$. 
We then iterate the though the rest of the points, and if a point is closer than the farthest point in the current set, we will replace it, create a new finger list, and append it to $\listoflist{i}$.

It is easy to see that this lemma is true based on the property of the random permutation.
Indeed, $|\listoflist{i}|\le \alpha\cdot H_n$ where $H_n$ is the harmonic series.
}

\subsection{The KR Algorithms}\label{sec:kr}

In this section, we review the original algorithm from~\cite{karger2002finding} for the \msl{s} introduced above.
With the \msl{s}, we can efficiently query the nearest neighbor, $k$-nearest neighbor, or range query (count or list the number of points within a given radius).
All query algorithms are similar, and thus we use nearest neighbor query as an example.

Note that the KR algorithms rely on the function $\findlistradius{i}{r}$ that finds the last finger list in $\listoflist{i}$ with radius no more than $r$, as introduced above.
This is achieved by binary searching in $\listoflist{i}$ and has $O(\log \log n)$ cost \whp{} due to \cref{lem:size}.
We will later show how to avoid using $\findlistradius{i}{r}$ in our algorithm in \cref{sec:work-efficient}.

\begin{algorithm}[t]
\caption{\fname{Construct}($s_{1 \ldots n}$) in~\cite{karger2002finding}\label{alg:orig-build}}
\label{alg:cap}
\small
\SetAlgoLined
\DontPrintSemicolon
\KwIn{A list of randomly permuted points $s_{1 .. n}$, and $\alpha$.}
\KwOut{The \msl{s} $\listoflist{1 .. n}$.}
\medskip

\SetKw{Break}{break}
\SetKw{downto}{downto}
\For{$i \gets n$ \downto $1$}{
    \fname{Build-Finger}($i$)\tcp*[r]{building the finger lists for the $i$-th point}
}

\smallskip

\myfunc{\upshape{\fname{Build-Finger}($i$)}}{

$\listcontent{\listoflist{i}[1]}\gets \{s_{i+1}, s_{i+2}, \dots, s_{\min\{i+\alpha,n\}}\}$\label{line:alg:krconstruct:first}\\
$\cur \gets i + \alpha$ \label{line:alg:krconstruct:cur}\\

\While{true}{
    $\somef{} \gets$ the last finger list of $s_i$ built so far\\
    $r \gets \max\{d(s_{\cur}, s_{i}) +\radius{\somef{}}, 2d(s_{\cur}, s_{i})\}$\label{line:alg:krconstruct:r}\\
    $F^*\gets\findlistradius{\cur}{r}$\\
    \If{\upshape $\exists s_j \in F^*$ s.t. $d(s_j, s_{i}) < \max\{\radius{\somef{}}, d(s_{\cur}, s_{i})\}$\label{line:fir}}{
        $\nxt \gets$ the first $j$ with this property in $F^*$\label{line:first}\\
        \If{$d(s_{\nxt}, s_{i}) < \radius{\somef{}}$}{
            $\somef{}_{\mathit{new}}\gets $ replacing the farthest point in $\somef{}$ with $s_{\nxt}$\label{line:append-start}\\
            Append $\somef{}_{\mathit{new}}$ to $\listoflist{i}$\label{line:append-end}\\
        }
    }
    \Else{
    \lIf{$|F^*|<\alpha$\label{line:alg:krconstruct:else}}{
        Build the tail and  \Break \label{line:alg:krconstruct:notfull}}
    \lElse{
        $\nxt \gets$ the last index in $F^*$\label{line:farther}
    }    
    }
    $\cur \gets \nxt$\\
}
}
\end{algorithm}

\hide{
\begin{algorithm}[t]
\caption{\fname{Construct}($s_{1 \ldots n}$) in~\cite{karger2002finding}\label{alg:orig-build}}
\label{alg:cap}
\SetAlgoLined
\DontPrintSemicolon
\KwIn{A list of randomly permuted records $s_{1 .. n}$, and $\alpha$.}
\KwOut{The \msl{s} $\listoflist{1 .. n}$.}
\medskip

\SetKw{Break}{break}
\For{$i \gets n$ \KwTo $1$}{
    \textsc{Build-Finger}($i$)\tcp*[r]{building the finger lists for the $i$-th record}
}

\medskip

\myfunc{\upshape{\textsc{Build-Finger}($i_0$)}}{

$F \gets \{s_{i_0+1}, s_{i_0+2}, \dots, s_{\min(i_0+\alpha,n)}\}$ and calculate $\radius{F}$\label{alg:krconstruct:first}\\
$\listoflist{i_0}[1]\gets F$ \tcp*[f]{the largest finger list of $s_{i_0}$}\\
$i \gets i_0 + \alpha$\\

\While{true}{
    $r \gets d(s_i, s_{i^*}) + \max(\radius{F}, d(s_i, s_{i_0}))$\\
    \If{\upshape $\exists s_j \in \findlistradius{i}{r}$ such that $d(s_j, s_{i_0}) < \max(\radius{F}, d(s_i, s_{i_0}))$\label{line:fir}}{
        Let $j$ be the first index with this property in $\findlistradius{i}{r}$\label{line:first}\\
    }
    \lElseIf{$|\findlistradius{i}{r}|<\alpha$}{
        Build the tail and \Break
    }
    \Else{
        Let $j$ be the last index in $F$\label{line:farther}\\
    }
    
    \If{$d(s_j, s_{i_0}) < r'$}{
        $\bar{F}\gets $ replacing the farthest point in $F$ with $s_j$\label{line:append-start}\\
        Calculate $\radius{\bar{F}}$ and append $\bar{F}$ to $\listoflist{i_0}$\\
        $F\gets \bar{F}$\label{line:append-end}\\
    }
    
    $i \gets j$\\
}
}
\end{algorithm}
} 

\myparagraph{The Construction Algorithm.}
The KR construction algorithm builds the \msl{} from $\listoflist{n}$ down to $\listoflist{1}$, in descending order of indexes. 
We present the algorithm in \cref{alg:orig-build}. 
Each $\listoflist{i}$ will leverage finger lists for those with indexes greater than $i$, which are already available. 
By doing so, the building $\listoflist{i}$ avoids an exhaustive search over all subsequent points. 
This will be achieved by searching $s_i$ in existing finger lists---following the random walk scheme in \cref{sec:intuition}---
and on the way, building all the finger lists in $\listoflist{i}[\cdot]$.

We now explain \cref{alg:orig-build} in detail. 
We start with the first finger list, which by definition contains the $\alpha$ points right after $s_{i}$ (line~\ref{line:alg:krconstruct:first}). 
Let this finger list be $F$. 
Throughout the algorithm, $F$ represents the last built finger list, and the goal is to build the next $F$.
The goal is then to find the \emph{evictor} of $F$, i.e., the first point after $s_{i+\alpha}$ with a smaller distance, within the random walk. 
As mentioned, the algorithm simulates a random walk to query $s_i$. 
We maintain a variable $\cur$ in \cref{alg:orig-build} to represent the current point in the random walk. 
Note that $\cur>i$ must be later than $i$ and have its finger lists ready. 
We call $\cur$ the \defn{\focuspoint{}} at this stage and say we \defn{\focuson{}} $\cur$. 
During the process, we will attempt to look for the evictor in $\listoflist{\cur}$, and find the next \focuspoint{} (the $\nxt$ variable in \cref{alg:orig-build}) for the random walk.
We start with $\cur=i+\alpha$ (line~\ref{line:alg:krconstruct:cur}). 

Based on the random walk scheme, we need to focus on a ball $B^*$ around $s_{\cur}$ with radius $r \ge 2d(s_{\cur},s_{i})$ for quick convergence. 
Moreover, if an evictor $j$ exists, it must satisfy $d(s_j,s_i)<\radius{F}$, and thus
we want the radius of $B^*$ to also cover $r\ge\radius{F}+d(s_i,s_{\cur})$ to avoid missing any evictor. 
Therefore, the radius $r$ is calculated in line~\ref{line:alg:krconstruct:r}.
By definition of finger lists, this ball $B^*$ corresponds to $\findlistradius{\cur}{r}$, which contains $\alpha$ random points within $B^*$. 
It can be easily located by a binary search on $\findlistrank{\cur}{\cdot}$. 
Let $F^*=\findlistradius{\cur}{r}$. 
Similarly, we call $F^*$ the \defn{\focuslist{}} at this stage of the algorithm, and say that we \defn{\focuson} $\findlistradius{\cur}{r}$. 

Among all points in the \focuslist{}, 
if any point is closer to $s_i$ than $s_{\cur}$ or than the radius of the last built finger list, 
we will set the one with the smallest index (highest priority) as the next \focuspoint{} $\nxt$. 
If $\nxt$ happens to be also within $\radius{F}$, it is exactly the evictor. 
Hence, we can construct the next finger list accordingly and append it to $\listoflist{i}$ (lines~\ref{line:append-start}--\ref{line:append-end}). 

If all points turn out to be farther (line~\ref{line:alg:krconstruct:else}), we first check whether $F^*$ is a tail list (i.e., contains fewer than $\alpha$ points). If so, we know we have exhausted the points in $B(s_{\cur},\radius{F})$, and the random walk terminates. 
Therefore, we build the tail and exit (line~\ref{line:alg:krconstruct:notfull}). 
If $F^*$ is a regular finger but no closer point is found, we set $\nxt$ to the last index in $F^*$ to continue this process (line~\ref{line:farther}). 
This indicates that the random walk temporarily goes to a farther point than $s_{\cur}$. 

The construction algorithm is illustrated in~\cref{fig:control-example}, which is presented with the parallel algorithm.
The first four subfigures (a)--(d) are equivalent to the construction process of point $i=1$ based on the already-constructed finger lists for points 2 to 7.
At a high level, the construction of the finger lists $\listoflist{i}$ for a point $s_{i}$ resembles the random-walk process in \cref{alg:intuition}. 
The series of \focuspoint{s} are the points visited in the random walk, and will ``advance'' gradually to the right (larger indexes/lower priorities) until we reach the end of the \msl{}. 
This is performed by maintaining the variables $\cur,F^*,\nxt$ in the algorithm. $\cur$ denotes the current focus point. 
$F^*$ is the focus list, which is a finger list of $\cur$. Finally, $\nxt$ is selected from $F^*$ as the next focus point.
The \focuspoint{} $s_{\cur}$ in each iteration can get closer to $s_{i}$ (line~\ref{line:first}) or farther (line~\ref{line:farther}), 
but will finally visit all ``useful'' points that should appear in $\listoflist{i}$. 

However, differently from \cref{alg:intuition}, the samples in \cref{alg:orig-build} are now decided by the priorities of a predefined random permutation.
Hence, the algorithm is also changed to jump to the first closer point (line~\ref{line:first}) rather than the closest point.
If no such points are found, we take the last point in $F^*$ (line~\ref{line:farther}) to continue advancing the \focuspoint{s}. 
As such, we do not rely on any randomness of the points in the future iterations, so the probability distributions of the random walk in different iterations remain independent.
Due to this change, we need a larger probability to find a closer point, so $\alpha$ is increased to $16c^3$ (or larger) to offset this difference~\cite{karger2002finding}.
This leads to the following result.

\begin{compactthm}[\cite{karger2002finding}]
    Given a randomly permuted point set $S=\{s_{1..n}\}$ with a constant expansion rate and a constant $\alpha$, the \msl{} can be constructed in $O(n\log{n}\log \log n)$ work \whp{}.
\end{compactthm}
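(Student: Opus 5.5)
The plan is to bound the cost of a single call \fname{Build-Finger}($i$) by $O(\log n\log\log n)$ \whp{} and then take a union bound over all $n$ points. Each iteration of the while loop costs $O(\alpha)$ distance computations plus one lookup $\findlistradius{\cur}{r}$. By \cref{lem:size}, that lookup is a binary search over $O(\alpha\log n)$ lists, so it costs $O(\log\log n)$ \whp{}. It therefore suffices to show that the number of iterations, i.e.\ the number of focus points visited, is $O(\log n)$ \whp{}. The union bound over the $n$ calls is fine because our \whp{} notation allows an arbitrary polynomial exponent.

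To bound the iterations, I will split them into two kinds: \emph{evicting} iterations, which append a new list to $\listoflist{i}$, and \emph{non-evicting} ones. The evicting iterations number at most $|\listoflist{i}|=O(\alpha\log n)$ \whp{} by \cref{lem:size}.

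For the non-evicting iterations I will mimic the rank argument behind \cref{lem:intuition}, taking ranks with respect to $s_i$ among the points with index greater than $\cur$ that remain relevant. Let $\delta=d(s_{\cur},s_i)$. The focus list $F^*$ consists of the $\alpha$ highest-priority points in $B(s_{\cur},r)$ with index greater than $\cur$, where $r\le\max\{\delta+\radius{F},2\delta\}$. When $\delta\ge\radius{F}$, we have $r\le2\delta$ and $B(s_{\cur},r)\subseteq B(s_i,3\delta)$. By expansion, this ball has size at most $c^2$ times the number of points within $\delta$ of $s_i$.

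The key point is the independence structure. Conditioned on the current focus point and the history of the walk, the points with index greater than $\cur$ inside the ball form a uniformly random order, because the walk only exposed priorities of earlier-indexed points. So the first point of $F^*$ closer than $\max\{\radius{F},\delta\}$ behaves like a sample, and the choice of $\alpha=16c^3$ gives that, with probability at least a constant close to $1$, the rank shrinks by a factor of $2$. Otherwise, it grows by at most $c^2$, or the walk takes the last point of $F^*$, which still lies in the $c^2$-bounded ball.

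From this I get a potential, $\log$ of the rank, with negative drift, and Chernoff-type concentration for this multiplicative random walk gives $O(\log n)$ steps \whp{}. The case $\radius{F}>\delta$ I will charge to eventual evictions. Here the walk is inside $B(s_i,\radius{F})$, and the relevant count is the number of points within $\radius{F}$. That count halves geometrically, and each halving is associated with progress toward the next evictor, so these phases also total $O(\log n)$.

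I expect the main obstacle to be making the independence claim rigorous: showing that the randomness consumed at each step, namely priorities of points in the current ball, is fresh, despite the algorithm choosing the \emph{first} qualifying point rather than the closest. I would handle it with a principle-of-deferred-decisions argument that reveals priorities only in index order, which matches the algorithm's monotone advance of $\cur$. The remaining bookkeeping, including the tail case, which just terminates the walk, is routine. The total is then $n\cdot O(\log n)\cdot O(\log\log n)=O(n\log n\log\log n)$ \whp{}.
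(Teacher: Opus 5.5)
Your top-level decomposition matches what the paper relies on. It has three parts: $O(\log n)$ iterations of \fname{Build-Finger} with high probability, an $O(\log\log n)$ binary search per iteration by \cref{lem:size}, and a union bound over the $n$ calls. The paper does not reprove the iteration bound; it imports it from Karger and Ruhl~\cite{karger2002finding}. The gap is in your self-contained proof of that iteration bound. You take ranks inside $R_{\cur}=\{s_j : j>\cur\}$ and then assert ``by expansion'' that $|B(s_{\cur},r)\cap R_{\cur}|$ is at most $c^2$ times the number of points of $R_{\cur}$ within $\delta$ of $s_i$. But bounded expansion is a property of $S$ and is not inherited by subsets. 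Your deferred-decisions argument is sound: the choice of $\nxt$ depends only on the points in positions up to $\nxt$, which is a stopping time. However, it only makes the \emph{order} of the fixed set $R_{\cur}$ uniform and says nothing about its \emph{composition}. The failure probability $\bigl(1-m/|B(s_{\cur},r)\cap R_{\cur}|\bigr)^{\alpha}$, with $m$ the number of closer points in $R_{\cur}$, depends on exactly that composition. The missing ingredient is Karger and Ruhl's sampling theorem: a uniformly random subset of a $(\rho,c)$-expansion metric has $(\max\{c\rho,O(\log n)\},2c)$-expansion with high probability. Each fixed suffix is such a subset. Because $\cur$ is chosen adaptively, you need a union bound over all $n$ suffixes so that every suffix is good simultaneously. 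Your factor $c^2$ then becomes $(2c)^2=4c^2$, which $\alpha=16c^3$ still absorbs. Ranks below the threshold $\max\{c\rho,O(\log n)\}$, where expansion gives no control, must be treated separately.

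There are also two smaller points. First, under the first-closer rule the chosen point is a uniformly random closer point of $R_{\cur}$, conditioned on one appearing among the first $\alpha$ points of the ball. So the rank halves with probability about $1/2$, not with probability close to $1$; what is close to $1$ is the probability of finding a closer point at all. The drift is still negative: a found closer point never increases the rank, and the failure event, of probability $e^{-\Omega(c)}$, increases it by at most a factor $4c^2$. Second, the case $\radius{F}>\delta$ needs no halving argument. There every qualifying point satisfies $d(s_j,s_i)<\radius{F}$ and is therefore an evictor. A non-evicting iteration moves to a point at distance at least $\radius{F}$ from $s_i$, back into the case $\delta\ge\radius{F}$. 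The walk enters the case $\radius{F}>\delta$ only at the start or through an eviction. Hence there is at most one non-evicting iteration per eviction (plus one), and each such exit raises the rank potential by only $O(\log c)$, which the $O(\alpha\log n)$ bound on evictions absorbs.
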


\myparagraph{The Query Algorithms.}
The nearest neighbor query algorithm in \msl{s} is very similar to the construction algorithm (see \cref{alg:orig-find}).
The process resembles a simplified version of the construction of the finger lists of a point, with the difference that we only need to keep track of one point instead of $\alpha$ points.
We maintain the closest point that has been seen so far (line~\ref{line:closest}) and output it.

\begin{algorithm}[t]
\caption{\fname{NearestNeighbor}($q$, $S$, $\listoflist{}$) from~\cite{karger2002finding}\label{alg:orig-find}}
\small
\DontPrintSemicolon
\SetKw{Break}{break}

\KwIn{A list of randomly permuted points $S=s_{1 \ldots n}$, the query point $q$, and the \msl{} $\listoflist{}$.}
\KwOut{The closest point to $q$ in $S$.}
\smallskip

\( \cur \gets 1, m \gets s_1 \)\\

\While{true\label{line:seq-loop}}{
    \( r \gets 2d(s_{\cur}, q) \)\label{line:set-r}\\
    $F^*\gets\findlistradius{\cur}{r}$\\
    \If{$\exists s_j \in F^*$ such that \( d(s_j, q) < d(s_{\cur}, q) \)}{
        $\nxt \gets$ the first $j$ with this property in \( F^* \)\\
        \lIf{\( d(s_{\nxt}, q) < d(m, q) \)}{
            \( m \gets s_{\nxt} \)\label{line:closest}
        }
    }
    \Else{
    \lIf{$|F^*|<\alpha$}{
        \Break\label{line:find-break}
    }
    \lElse{
        $\nxt \gets$ the last index in $F^*$\label{line:find-else}
    }
    }
    \( \cur \gets \nxt \)
}
\Return $m$

\end{algorithm} 

We can also adapt \cref{alg:orig-find} for other queries.  
For $k$-nearest neighbor search, we will keep a set of $k$ points, which is very similar to the construction algorithm that maintains $\alpha$ points.
For range query with distance $\delta$, we can set the search radius \( r \gets d(s_i, q)+\max(\delta,d(s_i, q)) \) on line~\ref{line:set-r}.

\begin{theorem}[\cite{karger2002finding}]
    Given a \msl{} maintaining points with a constant expansion rate, nearest neighbor and $k$-nearest neighbor can be computed in $O(\log{n}\log \log n)$ and $O(k\log{n}\log \log n)$ work \whp{}, respectively; range query can be answered in $O((k'+\log n)\log \log n)$ work \whp{} where $k'$ is output size.    
\end{theorem}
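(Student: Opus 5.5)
We prove the theorem by bounding the number of iterations of the while loop in \cref{alg:orig-find} and multiplying by the per-iteration cost. Each iteration performs one lookup $\findlistradius{\cur}{r}$ by binary search over $\listoflist{\cur}$, which costs $O(\log\log n)$ \whp{} by \cref{lem:size}, plus $O(\alpha)=O(1)$ distance evaluations to scan $F^*$. It therefore suffices to show that the loop runs $O(\log n)$ times \whp{} for nearest neighbor, $O(k\log n)$ times for $k$-NN, and $O(k'+\log n)$ times for range queries. A union bound over the $O(\log n)$ lookups (each failing with polynomially small probability) then gives the stated totals.

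For the nearest-neighbor bound we mirror the rank argument behind \cref{lem:intuition}. Let $s_{\cur}$ have rank $k$ with respect to $q$, and let $\delta=d(s_{\cur},q)$. The focus list $F^*=\findlistradius{\cur}{2\delta}$ contains the $\alpha$ highest-priority points of $B(s_{\cur},2\delta)$. As shown in \cref{sec:intuition}, this ball has at most $c^2k$ points, and every point in it has rank at most $c^2k$. At least $k/2$ of these points have rank below $k/2$. The key difference from \cref{alg:intuition} is that the samples are fixed by the global permutation rather than drawn fresh each step.

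To handle this, we use the principle of deferred decisions. The walk only inspects priorities of points in balls it has examined, and it always moves to indices larger than $\cur$. Conditioned on the history, the relative order of points with index greater than $\cur$ is therefore still uniformly random. Under this conditioning, the first closer point in $F^*$ is a uniform choice among the points of the ball closer to $q$ than $s_{\cur}$. With probability at least $1-(1-\frac{1}{2c^2})^{\alpha}$, at least one of the $\alpha$ samples has rank below $k/2$. Given that some closer point exists in $F^*$, the first one has rank below $k/2$ with probability at least $1/2$, since closer points are uniformly ordered by priority. When no closer point exists, the fallback step (line~\ref{line:find-else}) moves to a point of rank at most $c^2k$.

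With $\alpha=16c^3$, these cases combine as follows: the rank halves with constant probability, and otherwise it grows by at most $c^2$ with probability about $e^{-8c}$. Hence the expected value of $\log(\text{rank})$ drops by a constant each step, and a Chernoff/drift argument gives that rank $1$ is reached in $O(\log n)$ steps \whp{}.

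The loop must also terminate once rank $1$ is reached. We argue this via the tail lists: the index $\cur$ strictly increases along the walk, and the walk ends at a tail list. Because indices are strictly increasing, the total number of steps is additionally at most $n$. The main extra work is bounding how many further steps occur after the nearest neighbor is found, that is, steps with no closer point, before a tail list is hit. We bound this by observing that each such fallback step moves to the last index of $F^*$. The priority of that index halves the remaining candidates in $B(s_{\cur},2\delta)$ in expectation, so $O(\log n)$ steps suffice \whp{}.

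For $k$-NN we run the same walk while maintaining $k$ candidates, with radius tied to the $k$-th best candidate. The rank argument applies to the $k$-th rank, and each of the $k$ answers adds $O(\log n)$ steps, giving $O(k\log n)$ iterations. For range queries, each iteration either discovers a new output point or makes the same drift progress as above, which yields $O(k'+\log n)$ iterations.

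I expect the main obstacle to be the independence issue in the conditioning argument. One must check that the ``first closer point'' rule and the ``last index'' fallback never condition on priorities of points that the walk will later sample.
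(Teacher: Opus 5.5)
Your outer decomposition (number of loop iterations times an $O(\log\log n)$ binary search for $\findlistradius{\cur}{r}$) is exactly the paper's. The paper does not reprove the $O(\log n)$ walk length; it imports that bound from Karger and Ruhl. The gap is in your self-contained derivation of that walk length.

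\textbf{Suffix sampling.} The focus list $F^*=\findlistradius{\cur}{2\delta}$ consists of the $\alpha$ highest-priority points of $B(s_{\cur},2\delta)\cap S_{>\cur}$. All of your counting, however, is done in the full set $S$ with global ranks: at most $c^2k$ points in the ball, and about $k/2$ of them of rank below $k/2$. Points of small global rank can already lie in the prefix the walk has jumped over. After every fallback step, for example, the point just left is at least as close to $q$ as the new focus point. So the sampled set need not contain a $1/(2c^2)$ fraction of low-rank points, and global rank is not a potential the walk can drive down.

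You must measure rank inside the current suffix $S_{\ge\cur}$. This potential is consistent, since the rank of $s_{\nxt}$ in $S_{\ge\nxt}$ is at most its rank in $S_{\ge\cur}$. You then need $|B(s_{\cur},2\delta)\cap S_{\ge\cur}|=O(c^2k)$, i.e., bounded expansion of the \emph{suffix} (together with $q$). That is the missing ingredient. A fixed suffix is a uniformly random subset, so Karger and Ruhl's sampling lemma gives it expansion $2c$ \whp{}, but only for balls with at least $\max(c\rho,O(\log n))$ points. Since $\cur$ is chosen by the walk, you take a union bound over all $n$ fixed suffixes.

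This property depends only on the set $S_{\ge\cur}$, which the history determines, so your deferred-decision step then goes through. That step is in fact correct as you stated it: the first-closer and last rules inspect only indices up to $\nxt$. So the obstacle you flagged is not where the argument breaks.

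\textbf{Termination.} The loop stops only when the suffix ball around $s_{\cur}$ has fewer than $\alpha$ points. Since $B(q,\delta)\subseteq B(s_{\cur},2\delta)$, at that moment $|B(q,\delta)\cap S_{\ge\cur}|\le\alpha$, far below the threshold where the $O(c^2k)$ bound holds. The end phase of the walk therefore needs its own argument; this small-ball regime is why the Karger--Ruhl bounds carry an additive $O(\rho)$ term. ``The last index of $F^*$ halves the remaining candidates in expectation'' is not such an argument. After a fallback both the center and the radius of the ball change, so there is no fixed candidate set being halved.

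\textbf{$k$-NN and range queries.} These iteration counts are asserted rather than derived. ``Each of the $k$ answers adds $O(\log n)$ steps'' and ``each iteration either discovers a new output point or makes drift progress'' are exactly what has to be proved. For $k$-NN, for instance, you would need a harmonic bound on how often the maintained $k$-set changes (as for $|\listoflist{i}|$ in \cref{lem:size}), separate from a drift bound on the non-improving steps.
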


\hide{
\begin{figure*}
  \centering
  \includegraphics[width=\textwidth]{figure/finger_lists.pdf}
  \caption{A visualization of finger lists for different radii.}\label{fig:finger_lists}
\end{figure*}

\begin{figure*}
  \centering
  \includegraphics[width=\textwidth]{figure/finger_lists_2.pdf}
  \caption{A visualization of finger lists for different radii.}\label{fig:finger_lists_2}
\end{figure*}}

\hide{~

The \msl{s} are designed for the nearest neighbor search. 
The design is motivated by an interesting idea of the random-walk procedure to reach the nearest neighbor of a query point $q$ among an input set $S$, using \emph{sampling}. 
At a high level, 
one can start from any point $p\in S$ and attempt to 
iteratively walk to an (ideally closer) point. 
Due to triangle inequality, if there exists a closer point than $p$ to $q$, it must be within \( B_S(p, 2d(p, q))\). 
Therefore, the \emph{sampling scheme} checks a random sample of $\alpha'$ points in \( B_S(p, 2d(p, q))\), 
and walks to the one closest to $q$ as the next candidate point and repeats. 
This process is given in \cref{alg:intuition}. 
With properly set sample rate, each step has a sufficient probability to ``make progress'' by finding a ``much closer'' point, despite a small probability to find a farther point.
Existing work has shown that the following lemma holds by setting $\alpha'=2c^3$ for expansion rate $c$.

\begin{compactlem}[\cite{karger2002finding,ruhl2003efficient}]\label{lem:intuition}
  Given a metric space $(S,d_S)$ with expansion rate $c$, the random walk in \cref{alg:intuition} will reach the nearest neighbor of $q$ in $O(\log |S|)$ iterations \whp{}.
\end{compactlem}

While a formal proof of this lemma is given in~\cite{karger2002finding,ruhl2003efficient}, 
for the ease of reading, here we provide a proof sketch, 
which will also be useful to understand our new algorithm and its analysis. 
To understand this random-walk procedure, we use the concept of \defn{rank} (w.r.t. the query point $q$) for each point $p\in S$. 
We say a point $p$ has rank $k$ when $p$ is the $k$-th closest point to $q$ among all points in $S$. 
We will show that during the random walk, the rank of $p$ will quickly decrease, 
and finally drops to $1$ in $O(\log n)$ iterations \whp{}. 

Let the current point being processed be $p$ with rank $k$. 
We will use $\delta=d(p,q)$ and simply use $B$ to represent the ball $B_S(p, 2\delta)$ of interest.
We first prove $|B| \le c^2 k$. 
Consider all points and their distance to $q$. 
Since $p$ has rank $k$, there are at most $k$ points in the ball $B_S(q,\delta)$. 
For another point $p'$ in $B$, we have $d(q,p')\le d(q,p)+d(p,p')\le \delta +2\delta\le 4\delta$. 
Therefore, all points in $B$ must also belong to $B_S(q,4\delta)$. 
Considering the definition of expansion rate and $|B_S(q,\delta)|\le k$, we have $|B|\le |B_S(q,4\delta)| \le c^2 k$. 

Therefore, the next point is selected by sampling $\alpha'=2c^3$ points among at most $c^2 k$ points.
We will show that in this case, the closest point is likely to have a rank smaller than $k/2$. 
Consider the opposite case where all $2c^3$ samples are beyond rank $k/2$, this happens with probability at most 
$\left(1-\frac{k/2}{c^2k}\right)^{2c^3} =\left(1-\frac{1}{2c^2}\right)^{2c^3}\le e^{-c}$. 
Therefore, the rank of the next candidate point will be halved with a large probability $1-e^{-c}$. 
In the other case where this does not happen, the candidate point will still be in ball $B$, which has rank at most $c^2k$. 
This means that even in the worst case, the rank may increase by at most a factor of $c^2$, and this happens with a low probability $e^{-c}$. 
Combining the two cases, the rank will still shrink by a constant factor in expectation, and the rank-one point (the nearest neighbor of $q$) will be reached in
$O(\log n)$ iterations \whp{}. 

The nearest-neighbor search process shown in \cref{alg:intuition} is based on a carefully designed sampling scheme.
As such, the search algorithm will iteratively visit a sequence of points ($p_i$) that gets successively closer to the query $q$.
In particular, the next point $p_{i+1}$ is the closest point to $q$ in a random sample of \( (4 + 2 \log \log c)c^2 \) points in \( B_S(p_i, 2d(p_i, q))\).
We will later show how this sample set suffices high probability for the success rate of a query.
This sequence resembles a random walk process similar to the skiplist, so this data structure is referred to as the \msl{}.

Let $r_i$ be the $q$-rank of $p_i$ in $S$.
By applying the triangle inequality and the doubling property of the metric space,
we bound the size of the sampling ball $B$ relative to the current rank $r_i$, specifically $|B| < c^2 r_i$.
Thus, when the sampled size is sufficiently large, 
there is a high probability $P(\alpha)$ that the $q$-rank of $p_{i+1}$ has a $q$-rank less than $r_i/2$.
On the other hand, with probability $1-P(\alpha)$, we cannot halve the $q$-rank,
but we still have $r_{i+1}\le c^2 r_i$.
When $\alpha=(4+2\log\log{c})c^2$, then $\log{r_i}$ is expected to decrease by at least a constant in each iteration.
Finally, by analyzing the expected change $E[\log r_{i+1}]$,
we show that $\log r_i$ decreases by a constant in each step, which, via a Chernoff bound argument,
guarantees convergence to the nearest neighbor ($r=1$) within $O(\log n)$ steps with high probability.

~

~

\cref{alg:intuition} only intuitively explains how the sampling help find the nearest neighbor.
It does not tell how to find the sample set, as well as how to check if $p_i$ is the nearest neighbor of $q$ in $S$.

The key idea in realizing this algorithm is the concept of the finger list, to pre-choose the samples for each point in $S$.
To get start, we first randomly permute all records in the input since that will facilitate the sampling process.
Then, we define the \defn{finger list} $F(i,r)$ of $s_i$ with radius $r$ to contain the first $\alpha=(19+6\log \log c)c^2$ indices after $i$ such that $d(s_i, s_j) < r$ for all $j \in F(i,r)$. 
The finger lists will effectively be the sample set used in \cref{alg:intuition}, despite with a larger size due to the lost of randomness in this given approach to generate them, instead of drawing an independent set at each time.
We will first show that the total number of figure lists is bounded.

\begin{lemma}
  Let each finger list $F(i,r)$ contain $\alpha$ elements. Then whp $|\cup_{r>0} F(i,r)| = O(\alpha\log{n})$.
\end{lemma}

The proof can be found in~\cite{karger2002finding,ruhl2003efficient}.

~

The construction of the \msl is based on the query algorithm, and build the finger lists of all elements iteratively, from the end of the list to the front.
The finger lists are generated during the query process, which will be recorded and stored.
Note that the main difference here is that, here in the construction, we are doing a $\alpha$-nearest neighbor query for some $\alpha>1$.
Hence, we need to modify the query algorithm to adapt to this.
In \cref{alg:orig-build}, we give the pseudocode of an $O(n\log n\log \log n)$ algorithm.

The main difference in the search process in \textsc{Build-Finger} and \textsc{Find} is that instead of maintaining the current ``closest'' point, we maintain a list of them in $F$ of size $\alpha$.

\begin{lemma}
  The work for \cref{alg:orig-find} is $O(\log n\log \log n)$ \whp{}.
\end{lemma}

\begin{proof}[Proof Sketch]
  We refer the reader to \cite{ruhl2003efficient} to see the analysis of random-walk length (number of iterations on \cref{line:seq-loop}), which is also $O(\log n)$ \whp{}.  The intuition is very similar to what was described for \cref{lem:intuition}, but with slightly different constants.
  Another difference is that we need to access $F(i,r)$ throughout the algorithms.
  Although we know there are $O(\log n)$ finger lists \whp{} associated to the $i$-th record, we need a binary search on $r$ to locate the exact finger list. 
  This introduce the additional $\log \log n$ in the cost bound.
\end{proof}

We will later show how to improve the work to $O(\log n)$ \whp{} in \cref{sec:work-efficient}.

}

\section{A More Efficient Sequential Algorithm}\label{sec:work-efficient}

In this section, we will explain our improved sequential algorithms for nearest-neighbor queries and the construction of the \msl{s}.
As mentioned, the main overhead of the original algorithm comes from looking up $\findlistradius{i}{r}$ using binary search. 
By replacing $\findlistradius{i}{r}$ with two algorithmic primitives: an algorithm $\algn{\findlistrank{i}{k}}{r}$ and 
an array of augmented pointers $\rght{\findlistrank{i}{k}}[j]$ for each finger list, 
we can achieve $O(n\log n)$ expected work for construction and $O(\log n)$ expected work for nearest-neighbor query, under the same assumptions. 
We describe the definitions of $\algntxt{}$ and $\rghttxt$ in \cref{tab:notation}, and introduce more details below. 

Since our algorithms are considerably more complicated than the original ones, we will start with the query algorithm in \cref{sec:seq-query} and its analysis.
We will then introduce the construction algorithm in \cref{sec:seq-constr} as well as an amortized analysis for its cost bounds.

\subsection{The Query Algorithms}\label{sec:seq-query}

\begin{algorithm}[t]
\caption{\fname{NearestNeighbor}($q$, $S$, $\listoflist{}$) in $O(\log{n})$ cost\label{alg:seq-find}}
\small
\DontPrintSemicolon
\SetAlgoLined
\SetKw{Break}{break}

\KwIn{A list of randomly permuted points $S=s_{1 \ldots n}$, the query point $q$, and the \msl{} $\listoflist{}$.}
\KwOut{The closest point to $q$ in $S$.}
\medskip

$\cur \gets 1, k\gets 1 $\\
$m \gets s_1$ \tcp*[r]{the closest point to $q$ found so far}

\While{true\label{line:seq-query-main}}{
    $r \gets 2d(s_\cur, q)$\\
    $F^* \gets \algn{\findlistrank{\cur}{k}}{r}$\label{line:seq-align} \tcp*[f]{aligns to the first list with radius $\le r$}\\
    
    \If{$\exists s_j \in F^*$ such that \( d(s_j, q) < d(s_\cur, q) \)}{
        $\nxt \gets$ the first $j$ with this property in \( F^* \)\\
        \lIf{$d(q, s_\nxt) < d(q, m)$}{
            $m \gets s_\nxt$
        }
    }
    \Else{
    \lIf{$\left|F^*\right|<\alpha$}{
        \Break
    }
    \lElse{
        $\nxt \gets$ the last index in \( F^* \) \label{line:seq-last}
    }
    }
    
    $k \gets \rght{F^*}[\nxt]$, $\cur\gets \nxt$
}

\medskip

\tcp{This function re-aligns the current finger list to the first list with radius $r'\le r$ by traversing up or down in $\findlistrank{i}{\cdot}$}
\myfunc(){$\algn{\findlistrank{i}{k}}{r}$\label{line:align-begin}}{
    $k'\gets k$\\
    \tcp{Based on whether $r$ is larger or smaller than $\radius{\findlistrank{i}{k}}$, only one of the two while-loops below will be executed.}
    \lWhile(\tcp*[f]{up}){$k'>1$ and $\radius{\findlistrank{i}{k'-1}} \le r$}{
        $k' \gets k'-1$\label{line:up}
    }
    \lWhile(\tcp*[f]{down}){$\radius{\findlistrank{i}{k'}} > r$}{
        $k' \gets k'+1$\label{line:align-end}
    }
    \Return $\findlistrank{i}{k'}$\\
}

\end{algorithm}

The key idea in our query algorithm, as shown in \cref{alg:seq-find}, 
is to use the \emph{\algntxt} algorithm and the \emph{.\rghttxt} pointers to accelerate the query process.
$\algn{\findlistrank{\cur}{k}}{r}$ is used to calibrate the index $k$ to $k'$ in $\findlistrank{\cur}{\cdot}$, such that $\findlistrank{\cur}{k'}$ is $\findlistradius{\cur}{r}$.
This can be implemented easily by traversing $\findlistrank{\cur}{\cdot}$ up or down from index $k$, since all finger lists are sorted by their radii. 
The process is shown on line~\ref{line:align-begin}--\ref{line:align-end}. 
We will show that whenever $\algn{\findlistrank{\cur}{k}}{r}$ is called, the destination is likely not too far from $k$, 
which bounds the total cost caused by \emph{\algntxt} during the query algorithm (\cref{lem:total-align}). 
Meanwhile, here we augment each finger list $\findlistrank{i}{k}$ with an array $\rghttxt[j]$, which maps each $j\in \findlistrank{i}{k}$ to $\findlistradius{j}{r}$, where $r=\radius{\findlistrank{i}{k}}$ is the radius of the current list. 
This allows us to advance to the next \focuslist{} quickly without a binary search---when we shift the focus from $\cur$ to $\nxt$, 
we first advance to the finger list in $\nxt$ with the same radius using the $.\rghttxt$ pointer. 
We then calibrate around this list by moving up or down to determine the actual \focuslist{}. 
We will later show how the $.\rghttxt$ pointers are built efficiently in construction in \cref{sec:seq-constr}.
Here we assume it is correctly computed, so we can introduce our query algorithm.
Due to space limit, we only provide proof sketches here and defer the full proofs in \ifconference{the full version \cite{metricskiplistfullversion}}\iffullversion{Appendix.~\ref{sec:app-proof}}.

\begin{lemma}\label{lem:4.1}
Given $\listoflist{i}$, there can be $O(1)$ expected lists between $\listoflist{i}(r)$ and $\listoflist{i}(2r)$.
\end{lemma}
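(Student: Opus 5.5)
We need to show that, for a fixed point $s_i$ and radius $r$, the expected number of finger lists in $\listoflist{i}$ with radii in $(r,2r]$ (i.e., between $\findlistradius{i}{2r}$ and $\findlistradius{i}{r}$) is $O(1)$. The plan is to count \emph{evictions}: every new finger list is created by an evictor $s_j$ that replaces the farthest point. A list strictly after $\findlistradius{i}{2r}$ and up to $\findlistradius{i}{r}$ is produced by an evictor $j$ whose arrival happens while the current radius is still in $(r,2r]$. Such a $j$ must lie in $B(s_i,2r)$ and must be among the $\alpha$ highest-priority points (after $i$) in some ball $B(s_i,\rho)$ with $\rho\in(r,2r]$. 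The tail lists contribute at most $\alpha=O(1)$ extra lists, so they can be set aside.

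Let $m_1=|B(s_i,r)\setminus\{s_i\}|$ and $m_2=|B(s_i,2r)\setminus\{s_i\}|$, restricted to points after $i$ in the permutation. Conditioning on which points lie after $i$, their relative order is still uniformly random. A point $s_j$ at distance $d_j\in(r,2r]$ is an evictor only if it ranks among the top $\alpha$ by priority within $B(s_i,d_j)$, which happens with probability at most $\alpha/|B(s_i,d_j)|\le \alpha/m_1$. Points at distance $\le r$ create lists with radius $\le r$ only after the radius has dropped; the single transition into radius $\le r$ is charged as $O(1)$. Summing over the at most $m_2-m_1$ points in the annulus gives an expected count of at most $\alpha(m_2-m_1)/m_1\le \alpha\, m_2/m_1$.

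The key step is bounding $m_2/m_1$ by the expansion property: if $|B(s_i,r)|\ge\rho$, then $|B(s_i,2r)|\le c|B(s_i,r)|$, which gives $O(\alpha c)=O(1)$. Two obstacles remain, and I expect them to be the main difficulty. First, the expansion bound applies to the whole set $S$, not to the suffix $s_{i+1..n}$. I would handle this either by taking expectation over the random position of $i$ (the suffix is a uniform sample, so by the sampling property of Karger--Ruhl it inherits expansion roughly $2c$ \whp{}), or more simply by bounding the count by $\alpha H$-type sums over the full ball, $\sum_{t=m_1+1}^{m_2}\alpha/t\le \alpha\ln(m_2/m_1)\le\alpha\ln c$. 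The latter uses the fact that the $t$-th closest point to $s_i$ is an evictor with probability at most $\alpha/t$, independently of which points precede $i$. Second, there is the small-ball case $|B(s_i,r)|<\rho$. There $m_2$ could be as large as $c\rho$ or larger, so I would bound the sum only up to $\alpha\ln(m_2/\max(m_1,1))$ and argue that the expansion condition applied at the smallest radius $r'\ge r$ with $|B(s_i,r')|\ge\rho$ still caps the growth. This yields a constant depending on $\rho$, which the paper treats as absorbed under the constant-expansion assumption, consistent with its convention about the additive $O(\rho)$ term.
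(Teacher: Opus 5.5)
Your argument is correct in outline and takes a different route from the paper, but the step handling evictors inside $B(s_i,r)$ needs a one-line repair.

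\textbf{Comparison.} The paper uses a stopping-time argument. Let $L=B(s_i,r)$ and $U=B(s_i,2r)$. Every list strictly between the two is created by an evictor from $U$, and once $\alpha$ points of $L$ have been seen the radius is already $\le r$. So the count is at most the number of draws from $U$, in priority order, until the $\alpha$-th point of $L$ appears. This is a negative-hypergeometric variable with mean $\alpha(u+1)/(m+1)=O(\alpha c)$.

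You instead apply linearity of expectation over candidate evictors in the annulus, each succeeding with probability at most $\alpha/|B(s_i,d_j)|$. This buys two things:
\begin{itemize}
\item The harmonic-sum version gives the sharper bound $O(\alpha\log c)$.
\item Working with ranks in the whole set $S$ and averaging over the position of $s_i$ addresses the suffix-versus-$S$ issue. The paper's proof passes over this: it applies expansion to $u/m$ even though only points after $s_i$ can enter $\listoflist{i}$.
\end{itemize}
However, the $\alpha/t$ bound does not hold ``independently of which points precede $i$.'' Conditioned on the suffix, the relevant quantity is the rank within the suffix, and that can be much smaller than $t$. It does hold unconditionally, which is all an expectation bound needs. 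Among $s_i,p_1,\dots,p_t$ in uniformly random relative order, $p_t$ is an evictor with probability at most $\sum_{k<\alpha}(t-k)/(t(t+1))\le\alpha/(t+1)$.

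\textbf{The step to fix.} It is false that points at distance $\le r$ create lists of radius $\le r$ except for a single transition. Suppose the current list has radius in $(r,2r]$ and contains several points outside $B(s_i,r)$. An evictor at distance $\le r$ removes only the farthest of them, so the new list can still have radius in $(r,2r]$. This can repeat up to $\alpha-1$ times.

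The correct accounting is short. While the radius exceeds $r$, the farthest point lies outside $B(s_i,r)$. So a point of $B(s_i,r)$, once inserted, is never evicted before the radius drops to $\le r$. Hence at most $\alpha$ such evictions occur, contributing $O(\alpha)$ lists. The paper's stopping-time formulation absorbs these automatically, since they are among the draws counted before the $\alpha$-th point of $L$.

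\textbf{The case $|B(s_i,r)|<\rho$.} Capping via the smallest $r'$ with $|B(s_i,r')|\ge\rho$ gives only $|B(s_i,2r)|\le c\,|B(s_i,r')|$. With distinct distances, $|B(s_i,r')|=\rho$, so the count is $O(\alpha\log(c\rho))$. That is not $O(1)$ when $\rho=\Theta(\log n)$. The paper likewise restricts to radii where the expansion condition applies, so state this as an assumption rather than claiming it is absorbed.
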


\begin{proof}[Proof sketch]
Due to the low expansion rate, $|B(s_i,2r)|/|B(s_i,r)|\le c$.
Consider the points in $B(s_i,2r)\setminus B(s_i,r)$ in their random permutation order.
A point can only be added to $\listoflist{i}$ (thus created a new finger list) before we have seen $\alpha$ points from $B(s_i,r)$ in this order.
Hence, the expected number of such points is $\alpha \cdot|B(s_i,2r)|/|B(s_i,r)|=O(\alpha c)=O(1)$.
\end{proof}

\begin{lemma}\label{lem:4.2}
The total number of iterations executed on line~\ref{line:up} is $O(\log n)$ in expectation.
\end{lemma}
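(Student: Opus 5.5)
We bound the "up" moves in each iteration of the main loop of \cref{alg:seq-find} and sum over the $O(\log n)$ iterations guaranteed by the random-walk analysis (the query analogue of \cref{lem:intuition}, as in~\cite{karger2002finding,ruhl2003efficient}).

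Fix an iteration that moves the focus from $\cur$ to $\nxt$. Let $r_{\mathit{old}}=2d(s_\cur,q)$ be the radius used at $\cur$, so $F^*=\findlistradius{\cur}{r_{\mathit{old}}}$ and $\radius{F^*}\le r_{\mathit{old}}$. By definition of the \rghttxt{} pointer, the next align call starts at $\findlistradius{\nxt}{\rho}$, where $\rho=\radius{F^*}$. It targets $\findlistradius{\nxt}{r_{\mathit{new}}}$ with $r_{\mathit{new}}=2d(s_\nxt,q)$. Up moves happen only when $r_{\mathit{new}}>\rho$, and their number is at most the number of lists of $\listoflist{\nxt}$ with radius in $(\rho,r_{\mathit{new}}]$.

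There are two cases.

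\emph{First branch} ($d(s_\nxt,q)<d(s_\cur,q)$). Here $r_{\mathit{new}}<r_{\mathit{old}}$. Also $s_\nxt\in F^*$, so $d(s_\cur,s_\nxt)\le\rho$. Hence
\[
r_{\mathit{new}}\le 2(d(s_\cur,s_\nxt)+d(s_\cur,q))\le 2\rho+r_{\mathit{old}}.
\]
We also need $r_{\mathit{new}}$ to be within a constant factor of $\rho$. If $\rho\ge r_{\mathit{old}}/4$, say, then $r_{\mathit{new}}\le 4\rho$. Applying \cref{lem:4.1} twice, the number of lists between $\findlistradius{\nxt}{\rho}$ and $\findlistradius{\nxt}{4\rho}$ is $O(1)$ in expectation.

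If instead $\rho<r_{\mathit{old}}/4$, then $F^*$ is very shrunk relative to the ball. This means $\alpha$ higher-priority points, or all points if $F^*$ is a tail, lie in $B(s_\cur,r_{\mathit{old}}/4)$. I would handle this by a geometric-decay argument: write the up moves as a sum over dyadic scales $2^t\rho$ up to $r_{\mathit{new}}$, each scale contributing $O(1)$ expected lists by \cref{lem:4.1}. Then I would show that the number of relevant scales has $O(1)$ expectation. The reason is that the event $\rho<2^{-t}r_{\mathit{old}}$ requires all $\alpha$ first points in $B(s_\cur,r_{\mathit{old}})$ to fall into a ball of radius $2^{-t}r_{\mathit{old}}$. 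By expansion, that inner ball holds at most a $c^{-\Omega(t)}$-fraction of the points, provided the inner ball has at least $\rho$ points; otherwise it holds $O(\log n)$ points and the list count there is trivially small. So this probability decays geometrically in $t$.

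\emph{Second branch} (no closer point; $\nxt$ is the last index of $F^*$). Then $d(s_\nxt,q)\le d(s_\cur,q)+\rho$, so $r_{\mathit{new}}\le r_{\mathit{old}}+2\rho$. The same dyadic argument applies.

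Summing $O(1)$ expected up moves per iteration over a number of iterations that is $O(\log n)$ whp gives $O(\log n)$ in expectation. To justify the product, I would use the independence built into the walk (it always jumps to the first qualifying point), so that each step's scale count conditionally has $O(1)$ expectation given the history; Wald's identity then applies. The crude bound $O(\log n)$ per iteration from \cref{lem:size} handles the low-probability failure event.

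The main obstacle is the case where $\rho$ is much smaller than $r_{\mathit{old}}$. I would first try the geometric-tail argument above. I would also check whether the paper intends amortization instead: down moves in align decrease the index, and up moves could be charged against earlier descents. In that case the per-step bound reduces to \cref{lem:4.1} applied to the ratio $r_{\mathit{new}}/\rho\le 3$ in the typical case.
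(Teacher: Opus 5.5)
Your outline is the paper's. After each hop you bound the up moves by $O(1)$ in expectation via \cref{lem:4.1}, using that the target radius is within a constant factor of the starting radius $\rho=\radius{F^*}$, and you sum over the $O(\log n)$ hops. The paper's proof deduces $r\le 2\bar{r'}$ from $d(s_\cur,s_{\last})\le\bar{r'}$ alone, so it never confronts the case you isolate. You are right that the triangle inequality only gives $r_{\mathit{new}}\le r_{\mathit{old}}+2\rho$.

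Your treatment of $\rho<r_{\mathit{old}}/4$, however, fails. The step ``by expansion, $B(s_\cur,2^{-t}r_{\mathit{old}})$ holds at most a $c^{-\Omega(t)}$-fraction of the points of $B(s_\cur,r_{\mathit{old}})$'' uses the expansion condition backwards. The condition $|B(p,2r)|\le c\,|B(p,r)|$ says an inner ball keeps \emph{at least} a $1/c$ fraction of the doubled ball: it bounds growth, not shrinkage. Suppose $B(s_\cur,r_{\mathit{old}})\setminus B(s_\cur,\epsilon)$ contains no points. Then $\rho\le\epsilon$ with probability $1$, while $r_{\mathit{new}}\ge r_{\mathit{old}}-2\epsilon$. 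Since no aspect-ratio bound is assumed, the number of dyadic scales between $\rho$ and $r_{\mathit{new}}$ is then unbounded. So ``the number of relevant scales has $O(1)$ expectation'' is false, and charging $O(1)$ per scale yields nothing. Your amortization fallback does not help either, since \cref{lem:4.3} bounds the down moves using this very lemma.

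The missing idea is to work with ball cardinalities rather than radii, as the paper does with $\Phi=\log|B|$ in \cref{lem:4.3}.

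\emph{A general version of \cref{lem:4.1}.} The negative-hypergeometric argument behind \cref{lem:4.1} never needs the radius ratio to be $2$. The number of lists of $\listoflist{\nxt}$ with radius in $(R_1,R_2]$ is at most one plus the number of draws from $B(s_\nxt,R_2)$ before the $\alpha$-th point of $B(s_\nxt,R_1)$ appears. Its expectation is $O(\alpha\,|B(s_\nxt,R_2)|/|B(s_\nxt,R_1)|)$, however large $R_2/R_1$ is.

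\emph{Reducing to a ratio at $s_\cur$.} Take $R_1=2\rho$ and $R_2=r_{\mathit{new}}$. Since $d(s_\cur,s_\nxt)\le\rho\le r_{\mathit{old}}$, you get $B(s_\cur,\rho)\subseteq B(s_\nxt,2\rho)$ and $B(s_\nxt,r_{\mathit{new}})\subseteq B(s_\cur,r_{\mathit{old}}+3\rho)\subseteq B(s_\cur,4r_{\mathit{old}})$. Expansion, now used in the correct direction, bounds the count by $O(\alpha c^2 N/|B(s_\cur,\rho)|)$, where $N=|B(s_\cur,r_{\mathit{old}})|$.

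\emph{Bounding that ratio.} $F^*$ consists of the first $\alpha$ points of $B(s_\cur,r_{\mathit{old}})$ in random order. Hence $|B(s_\cur,\rho)|$ is at least the largest distance-rank among $\alpha$ uniform samples from these $N$ points. This gives $\Pr[N/|B(s_\cur,\rho)|>y]\le y^{-\alpha}$, so the ratio has $O(1)$ expectation for $\alpha\ge 2$.

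\emph{Finishing.} Radii in $(\rho,2\rho]$ add $O(1)$ by \cref{lem:4.1}. This covers both of your branches with no case split. The correlation between the random radius $\rho$ and $\listoflist{\nxt}$ still needs the care that both you and the paper leave implicit.
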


The proof of this lemma is based on \cref{lem:4.1} and is provided in \ifconference{the full version \cite{metricskiplistfullversion}}\iffullversion{Appendix.~\ref{sec:app-proof}} due to the space limit.

\begin{lemma}
\label{lem:total-align}\label{lem:4.3}
The total work in \emph{align} is $O(\log n)$ in expectation.
\end{lemma}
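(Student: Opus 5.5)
The work of \emph{align} splits into three parts: the up-loop iterations (line~\ref{line:up}), the down-loop iterations (line~\ref{line:align-end}), and $O(1)$ overhead per call. The random walk of \cref{alg:seq-find} runs for $O(\log n)$ iterations \whp{}, with the same analysis as \cref{lem:intuition} adapted to finger lists. Each iteration calls \emph{align} once, so the total overhead is $O(\log n)$ in expectation. Note that \whp{} implies an expectation bound here: the walk has at most $n$ steps, and with our strong notion of \whp{} the failure event contributes $o(1)$. \cref{lem:4.2} already bounds the up-loop iterations by $O(\log n)$ in expectation. So the remaining task is to bound the down-loop iterations.

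For the down-loop, I would charge each call to the radius change. Consider an iteration with current point $\cur$. Then $.\rghttxt$ points to $\findlistradius{\nxt}{r_{\mathrm{old}}}$, where $r_{\mathrm{old}}=\radius{F^*}\le 2d(s_\cur,q)$, and the target is $\findlistradius{\nxt}{2d(s_\nxt,q)}$. Downward moves happen only when $2d(s_\nxt,q)<r_{\mathrm{old}}$. By \cref{lem:4.1}, applied $\lceil\log_2(r_{\mathrm{old}}/r_{\mathrm{new}})\rceil$ times via linearity of expectation, the number of lists skipped is $O(1+\log(r_{\mathrm{old}}/r_{\mathrm{new}}))$ in expectation. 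A cleaner alternative is to bound the skipped lists by $\alpha$ times the expected number of points of $B(s_\nxt,r_{\mathrm{old}})\setminus B(s_\nxt,r_{\mathrm{new}})$ that can create new lists. Using expansion, $|B(s_\nxt,r_{\mathrm{old}})|\le c^{O(1)}|B(s_\nxt,r_{\mathrm{new}})|$ once $r_{\mathrm{new}}$ is comparable to $d(s_\cur,q)$. That case gives $O(1)$ per step.

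The main obstacle is when $d(s_\nxt,q)$ is much smaller than $d(s_\cur,q)$, so that a big jump in radius occurs. I would handle it with a telescoping argument over ranks. In the rank analysis of \cref{lem:intuition}, the ball $B(s_\nxt,r_{\mathrm{old}})$ has size at most $c^{O(1)}k_\cur$, and $B(s_\nxt,2d(s_\nxt,q))$ contains about $k_\nxt$ points. The down-traversal then costs $O(1+\log(k_\cur/k_\nxt))$ in expectation, by the harmonic-sum count of prefix minima as in \cref{lem:size}. Summing over the walk telescopes to $O(\log n)$ plus $O(1)$ per step. Upward rank moves also add $O(1)$ per step in expectation, since each is a factor of at most $c^2$.

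Conditioning is delicate, because the permutation both drives the walk and determines the list counts. I would argue as in \cref{lem:4.1}: the lists between two radii are determined by the relative order of points in the annulus, and conditioning on the walk only reveals the first few high-priority points. Adding the up-loop, down-loop, and per-call overhead gives $O(\log n)$ expected total work.
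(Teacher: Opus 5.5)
Your proposal is correct and takes essentially the paper's route. Up-moves are bounded by \cref{lem:4.2}. Down-moves are charged to the decrease of a potential equal, up to additive constants, to the logarithm of the rank of $s_{\cur}$; the paper uses $\Phi=\log|B(s_{\cur},2d(s_{\cur},q))|$, and this potential telescopes to $O(\log n)$ because each walk step can raise it by only $O(1)$. Your per-step bound $O(1+\log(k_{\cur}/k_{\nxt}))$, obtained by counting record insertions against the ball-size ratio, is a more explicit version of the paper's claim that every $O(1)$ downward moves decrease $\Phi$ by $\Theta(1)$ in expectation.
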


\begin{proof}[Proof Sketch]
\cref{lem:4.2} has bounded the expected work of the upward moves (\(k\gets k-1\), line~\ref{line:up}). Here we only need to bound the cost of downward moves (\(k\gets k+1\), line~\ref{line:align-end}): in total the pointers are moved for \(O(\log n)\) times in expectation.
This can be achieved by defining a potential function of the ball size (or the rank) and analyzing the random walk of the potential function.
We use $\Phi=\log_2|B(s_{\cur},2d(s_{\cur},q))|$.
As discussed in \cref{sec:kr} and \cref{lem:intuition}, after \(O(1)\) new elements in the list, the ball size is expected to halve; equivalently, every \(O(1)\) downward \emph{align} move decreases \(\Phi\) by \(\Theta(1)\) in expectation.
Hence, the number of downward moves is proportional (in expectation) to the total decrease of \(\Phi\).
Note that \(\Phi\) can grow in the upward moves.
Each time \cref{line:up} is executed, the search radius at most doubles, so \(\Phi\) can grow by at most a factor of $c=O(1)$.
Hence, we need the same expected asymptotic number of downward moves to account for the decrease of \(\Phi\).
Putting all pieces together gives $O(\log n)$ expected number of downward moves, and the same cost for the \emph{align} function.
\end{proof}

\begin{theorem}
    Given a \msl{} maintaining points with a constant expansion rate, \cref{alg:seq-find} computes the nearest neighbor in $O(\log{n})$ expected work.
\end{theorem}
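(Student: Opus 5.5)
The plan is to split the cost of \cref{alg:seq-find} into three parts and bound each by $O(\log n)$ in expectation. The parts are: (i) the number of iterations of the main loop on line~\ref{line:seq-query-main}; (ii) the work per iteration excluding \emph{align}; and (iii) the total work spent inside \emph{align}.

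\textbf{Step 1: (i) and (ii).} Aside from the call to \emph{align}, each iteration does only constant work, since $\alpha=16c^3=O(1)$. It scans the focus list $F^*$ with $O(\alpha)$ distance evaluations, updates $m$, and follows one $.\rghttxt$ pointer in $O(1)$ time. There is no binary search, which was the source of the $\log\log n$ factor in \cref{alg:orig-find}. So part (ii) is $O(1)$ times part (i).

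For part (i), we first argue correctness. By definition of the $.\rghttxt$ pointers, \emph{align} returns exactly $\findlistradius{\cur}{r}$. The sequence of focus points $\cur,\nxt,\dots$ generated by \cref{alg:seq-find} is therefore identical to that of \cref{alg:orig-find}. Hence the iteration count, and the correctness of the returned $m$, are inherited from the KR analysis: by the random-walk argument behind \cref{lem:intuition} (with $\alpha=16c^3$), the walk takes $O(\log n)$ iterations \whp{}. It follows that the expectation is also $O(\log n)$, using the strong \whp{} notion from \cref{sec:prelim} together with the trivial $O(n)$ worst-case bound on the walk length.

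\textbf{Step 2: (iii).} This is exactly \cref{lem:total-align}: the total number of up moves (line~\ref{line:up}, bounded by \cref{lem:4.2} via \cref{lem:4.1}) plus down moves (line~\ref{line:align-end}) is $O(\log n)$ in expectation. Each step of either while-loop costs $O(1)$. There is also one extra failing loop test per call, which charges $O(1)$ to each iteration and so is absorbed into (i).

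\textbf{Step 3: combine.} Summing gives $O(\log n)$ expected work by linearity of expectation. This works even though the terms are dependent, because we only add their expectations.

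\textbf{Main obstacle.} The delicate point is Step 1's claim that the iteration count has $O(\log n)$ expectation and not merely a \whp{} bound. This requires that the failure probability be at most $1/n^{k}$ for a large enough $k$ to kill the $O(n)$ worst-case term. Our definition of \whp{} (holding for every $k$) supplies this directly. A second point to check is that \emph{align} starts from $\rght{F^*}[\nxt]$, which has radius equal to the previous list's radius. This is what \cref{lem:4.1,lem:4.2} assume when bounding the alignment distance, so citing \cref{lem:total-align} is legitimate.
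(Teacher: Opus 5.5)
Your proposal is correct and matches the paper's proof. The paper likewise argues that \cref{alg:seq-find} follows the same focus-point sequence as \cref{alg:orig-find}, charges $O(1)$ per iteration for chasing the \emph{\rghttxt{}} pointer over the $O(\log n)$ iterations, and invokes \cref{lem:total-align} for the total \emph{\algntxt} cost; your explicit conversion of the \whp{} walk length into an expectation (valid because $\cur$ strictly increases, so there are at most $n$ iterations) fills in a step the paper leaves implicit, and one small wording fix is that \emph{\algntxt} returns $\findlistradius{\cur}{r}$ from any starting index, while the \emph{\rghttxt{}} pointer lands on a list of radius \emph{at most} (not equal to) the previous one, which affects only cost and not correctness.
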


\begin{proof}
\cref{alg:seq-find} follows the same logic as \cref{alg:orig-find}. The difference is that $\findlistradius{\cur}{r}$ is now obtained by using the \emph{\rghttxt{}} pointers and local adjustment by the \emph{align} function, instead of binary searches in $O(\log \log n)$ cost.
\cref{lem:4.3} proves that the total cost in \emph{align} is $O(\log n)$ in expectation.
The \emph{\rghttxt{}} pointers are chased once in each iteration in the random walk, so the total cost here is $O(\log n)$ expected.
Combining these costs proves the theorem.
\end{proof}

We can change the candidate set from one point to $k$ points, which yields $O(k\log{n})$ expected work for $k$-nearest neighbor query.

\subsection{The Sequential Construction Algorithm}\label{sec:seq-constr}

With the insights from the query algorithm, we now explain the construction algorithm that will generate the $\rghttxt$ pointers efficiently.
We provide our algorithm in \cref{alg:seq-efficient-build}.

The overall idea in \cref{alg:seq-efficient-build} is quite similar to \cref{alg:seq-find}, except that we will now keep $\alpha$ points rather than one point.
Hence, in each iteration of the random walk, we first compute the current search radius $r$,
then we use $\algn{\findlistrank{\cur}{k}}{r}$ to find the finger list $\findlistrank{\cur}{k'}$ (effectively $\findlistradius{\cur}{r}$ in \cref{alg:orig-build}).
Then we perform the same check conditions, find the next focus point $\nxt$, and update $\listoflist{i}$ if needed.
At the end of the iteration, we take the $\rght{F^*}[\nxt]$ pointer to the next finger list we focus on, just like in \cref{alg:seq-find}.
As the construction algorithm, we additionally need to build the \emph{\rghttxt{}} pointers: on lines \ref{line:alg:seq-efficient-build:initial-r-start}--\ref{line:alg:seq-efficient-build:initial-r-end}
and \ref{line:alg:seq-efficient-build:new-r-start}--\ref{line:alg:seq-efficient-build:new-r-end}.

Recall that the definition is $\rght{\findlistradius{i}{r}}[j]=\findlistradius{j}{r}$.
In the initial finger list $\findlistrank{i}{1}$,
we first set $\rght{\findlistrank{i}{1}}[j]=\findlistrank{j}{1}$ for all $j\in \findlistrank{i}{1}$.
However, this finger list can have a higher radius than $\findlistrank{i}{1}$,
so we may need to decrease the radius of $\rght{\findlistrank{i}{1}}[j]$ until it is at most $\radius{\findlistrank{i}{1}}$.
Here we call a function \fname{Align-All}($F$),
which checks all $j$ in $F$ and decreases the radius of $\rght{F}[j]$ until it is at most $\radius{F}$.

When $s_i$ gets a new finger list $F_{\mathit{new}}$,
we build the $\rght{F_{\mathit{new}}}[\cdot]$ pointers for the points in it.
Note that $F_{\mathit{new}}$ is created by replacing the farthest point in the focus list $F$ with $s_\nxt$.
For all other points $s_j$ in $F_{\mathit{new}}$, their \emph{\rghttxt{}} pointers remain unchanged,
i.e. $\rght{F_{\mathit{new}}}[j]=\rght{F}[j]$.
Then, for the evictor $s_\nxt$, we set $\rght{F_{\mathit{new}}}[\nxt]$ to $\rght{F^*}[\nxt]$.
Then we call \fname{Align-All}($F_{\mathit{new}}$) to align the radius of all $\rght{F_{\mathit{new}}}[\cdot]$ pointers if needed.
\begin{algorithm}[t]
\DontPrintSemicolon
\small
\caption{Sequential $O(n\log{n})$ Construction Algorithm\label{alg:seq-efficient-build}}
\SetKw{Break}{break}

\SetKwInput{Maintains}{Maintains}

\For{$i \gets n$ \KwTo $1$}{
    \fname{Build-Finger}($i$)
}

\smallskip

\myfunc(){\upshape{\fname{Build-Finger}}($i$)}{

$\listcontent{\listoflist{i}[1]}\gets \{s_{i+1}, s_{i+2}, \dots, s_{\min\{i+\alpha,n\}}\}$\label{line:alg:krconstruct:first}\\
\ForEach{point $s_j \in \findlistrank{i}{1}$\label{line:alg:seq-efficient-build:initial-r-start}}{
    $\rght{\findlistrank{i}{1}}[j] \gets \findlistrank{j}{1}$
}
\fname{Align-All}($\findlistrank{i}{1}$)\label{line:alg:seq-efficient-build:initial-r-end}\\
$\cur \gets i + \alpha$ \\
$k \gets 1$\tcp*[f]{Start from the largest finger list of $s_\cur$}\\

\While{true}{
    $F \gets$ the last finger list of $s_i$\\
    $r \gets d(s_{\cur}, s_{i}) + \max(\radius{F}, d(s_{\cur}, s_i))$\\
    $F^* \gets \algn{\findlistrank{\cur}{k}}{r}$\label{line:adjust-fw}\\

    \If{$\exists s_j\in F^*$ such that $d(s_j, s_i) < \max(\radius{F}, d(s_{\cur}, s_i))$}{
        $\nxt \gets$ the first $j$ with this property in $F^*$\\
        \If{$d(s_\nxt, s_i) < \radius{F}$}{
            $F_{\mathit{new}}\gets $ replacing the farthest point in $F$ with $s_{\nxt}$\\
            $\rght{F_{\mathit{new}}}[\nxt] \gets \rght{F^*}[\nxt]$\label{line:alg:seq-efficient-build:new-r-start}\\
            \fname{Align-All}($F_{\mathit{new}}$)\label{line:alg:seq-efficient-build:new-r-end}\\
            Append $F_{\mathit{new}}$ to $\listoflist{i}$\\
        }
    }
    \Else{
    \lIf{$\left|F^*\right|<\alpha$}{
        \Break
    }
    \lElse{
        $\nxt \gets$ the last index in $F^*$
    }
    }
    $k \gets \rght{F^*}[\nxt]$, $\cur\gets \nxt$\label{line:fw-gets-r-fw-sj}\\
}

}

\smallskip

\myfunc(){\upshape{\fname{Align-All}}($F$)}{
    \ForEach{point $s_j \in F$}{
        $\algn{\rght{F}[j]}{\radius{F}}$\\
    }
}

\end{algorithm}

\begin{theorem}\label{thm:seq-build}
Given a randomly permuted point set $S=\{s_{1..n}\}$ with a constant expansion rate and a constant $\alpha$, \cref{alg:seq-efficient-build} builds the \msl{} in $O(n\log{n})$ expected work.
\end{theorem}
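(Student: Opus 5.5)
We split the work of \fname{Build-Finger}($i$) into three parts and bound each in expectation: (i) the number of iterations of the main while-loop, i.e., the length of the random walk; (ii) the total cost of the \emph{\algntxt} calls on line~\ref{line:adjust-fw}; and (iii) the total cost of \fname{Align-All}, which builds the \emph{.\rghttxt} pointers. Each non-align step of an iteration costs $O(\alpha)=O(1)$: scanning $F^*$, replacing a point, and chasing one pointer. We aim to show that each of (i)--(iii) is $O(\log n)$ in expectation per point $i$. Summing over $i$ by linearity of expectation then gives $O(n\log n)$.

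For (i), the walk in \fname{Build-Finger}($i$) makes the same decisions as in \cref{alg:orig-build}, because \emph{\algntxt} starting from the \emph{.\rghttxt} pointer returns exactly $\findlistradius{\cur}{r}$. We therefore invoke the KR analysis, which bounds the walk length by $O(\log n)$ \whp{}. This is the $\alpha$-point analogue of \cref{lem:intuition}, and it is the reason $\alpha=16c^3$. For (ii), we reuse the argument of \cref{lem:4.2,lem:4.3}, with $s_i$ playing the role of the query $q$. When we move from $\cur$ to $\nxt$, the pointer $\rght{F^*}[\nxt]$ lands on $\findlistradius{\nxt}{r_{\mathrm{old}}}$, while the new radius is $r_{\mathrm{new}}=d(s_\nxt,s_i)+\max(\radius{F},d(s_\nxt,s_i))$. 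Upward moves correspond to $r_{\mathrm{new}}>r_{\mathrm{old}}$. By the triangle inequality $r_{\mathrm{new}}\le O(1)\cdot r_{\mathrm{old}}$, so by \cref{lem:4.1} each such step costs $O(1)$ expected list moves. Downward moves are charged to the potential $\Phi=\log_2|B(s_\cur,r)|$, exactly as in \cref{lem:4.3}. The one new ingredient is that $r$ also depends on $\radius{F}$, which only decreases. This only shrinks $\Phi$, so it is accounted for by the same telescoping argument, and $\Phi$ starts at most $\log_2 n$.

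For (iii), which we expect to be the main obstacle, there are $O(\log n)$ lists in $\listoflist{i}$ \whp{} by \cref{lem:size}. Each call to \fname{Align-All} touches $\alpha$ pointers, so it suffices to bound the number of downward steps of those pointers. The plan is an amortized argument per pair $(i,j)$. For a fixed $j$ that appears in consecutive lists $\findlistrank{i}{k},\findlistrank{i}{k+1},\dots$, the pointer $\rght{\cdot}[j]$ is inherited and only ever moves downward in $\listoflist{j}$. Its target radii $\radius{\findlistrank{i}{k}}$ are monotonically decreasing, so the total movement is at most the number of lists of $\listoflist{j}$ with radius between $\radius{F_{\text{first}}}$ and $\radius{F_{\text{last}}}$. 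Here $F_{\text{first}}$ and $F_{\text{last}}$ are the first and last lists of $s_i$ containing $j$. We now bound this range. While $j$ stays in $s_i$'s list, $B(s_i,\radius{F})$ contains $\alpha$ sampled points, and $j$ is evicted once $\alpha$ higher-priority points lie closer than $d(s_i,s_j)$. This should confine the radii to a window of constant expected ratio relative to $d(s_i,s_j)$, using $B(s_j,\rho)\subseteq B(s_i,\rho+d(s_i,s_j))$ and the expansion property. Applying \cref{lem:4.1} over $O(1)$ doublings then gives $O(1)$ expected moves per pair.

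There are two initializations we must handle. The first is the evictor's pointer copied from $\rght{F^*}[\nxt]$, which is already aligned to radius $r\ge\radius{F}$ with $r=O(\radius{F})$ when $\nxt$ is an evictor. The second is the first list $\findlistrank{j}{1}$ of each $j\in\findlistrank{i}{1}$. For the latter, aligning from the top of $\listoflist{j}$ may cost $O(\log n)$, but this occurs only for $\alpha$ pairs per $i$, giving $O(\alpha\log n)$. Since there are $O(\alpha\log n)$ pairs $(i,j)$ in expectation, each costing $O(1)$ expected, (iii) is $O(\log n)$ expected per $i$. The delicate point is the dependence between the permutation prefix defining $\listoflist{i}$ and that defining $\listoflist{j}$. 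We would handle it by conditioning on the priorities up to the moment $j$ enters $s_i$'s list, and using that the later priorities remain a uniformly random order.
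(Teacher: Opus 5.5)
Your three-part decomposition matches the paper's: walk length from the KR analysis, the \emph{\algntxt} cost on line~\ref{line:adjust-fw} via \cref{lem:4.2,lem:4.3}, and the pointer work split into the evictor's freshly copied pointer and the inherited copies. The gap is in the evictor's pointer (the paper's Phase~1). You claim that $\rght{F^*}[\nxt]$ is already aligned to the search radius $r$, and that $r=O(\radius{F})$ whenever $\nxt$ is an evictor, so \fname{Align-All} needs only $O(1)$ more moves. Neither part holds. First, by definition $\rght{F^*}[\nxt]=\findlistradius{\nxt}{\radius{F^*}}$, which is aligned to the focus list's radius, not to $r$. The same misreading appears in your part (ii), where you say the pointer lands on $\findlistradius{\nxt}{r_{\mathrm{old}}}$. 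Second, and more importantly, $r=d(s_\cur,s_i)+\max(\radius{F},d(s_\cur,s_i))$, and nothing keeps $d(s_\cur,s_i)=O(\radius{F})$. Each step that takes the last index of $F^*$ (line~\ref{line:farther} of \cref{alg:orig-build}) can move the focus point up to $3\max(\radius{F},d(s_\cur,s_i))$ away from $s_i$. From such a far focus point, the first qualifying element of $F^*$ can still fall inside $B(s_i,\radius{F})$ and become an evictor. In that case $\radius{F^*}\ge d(s_\cur,s_\nxt)>d(s_\cur,s_i)-\radius{F}\gg\radius{F_{\mathit{new}}}$. So \fname{Align-All} must walk past every list of $\listoflist{\nxt}$ with radius between $\radius{F_{\mathit{new}}}$ and $\radius{F^*}$, and \cref{lem:4.1} does not bound that by $O(1)$.

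The missing idea is to charge this cost to existing work rather than argue the pointer is already close. By line~\ref{line:fw-gets-r-fw-sj}, the next iteration's \emph{\algntxt} starts from the very same list $\rght{F^*}[\nxt]$. It descends to $\findlistradius{\nxt}{r'}$ with $r'=d(s_\nxt,s_i)+\radius{F_{\mathit{new}}}\le 2\radius{F_{\mathit{new}}}$, since $s_\nxt\in F_{\mathit{new}}$. So the long descent of the evictor's pointer repeats a walk already paid for in your part (ii). Only the stretch from $r'$ down to $\radius{F_{\mathit{new}}}$ is new, and it costs $O(1)$ in expectation.

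There is a smaller issue in your inherited-copy bound. A window of bounded \emph{radius} ratio is the wrong measure, because $\radius{F_{\mathrm{first}}}/d(s_i,s_j)$ need not have bounded expectation (empty annuli inflate it). Summing \cref{lem:4.1} over radius doublings then gives no $O(1)$ bound. You should measure the window by ball sizes instead; your containment $B(s_j,\rho)\subseteq B(s_i,\rho+d(s_i,s_j))$ supports this for $\rho\ge d(s_i,s_j)$. Alternatively, follow the paper and charge the moves of $\rght{\cdot}[j]$ to the appearances of $j$ in $\listoflist{i}$, which total $\alpha|\listoflist{i}|$; this avoids a per-pair distributional claim. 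Your explicit $O(\alpha\log n)$ bound for aligning the initial pointers into $\findlistrank{j}{1}$ is fine, and it covers a case the paper leaves implicit.
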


We first show a useful lemma.

\begin{lemma}\label{lem:radius-halve}
    The radius of $F$ is expected to halve after adding a constant number of new elements.
\end{lemma}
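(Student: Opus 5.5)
We read the statement as follows: $F$ is the last finger list of $s_i$ built so far, with radius $R$. After a constant number $t$ of further evictions (new lists appended to $\listoflist{i}$), the new radius is at most $R/2$ with constant probability, and so in expectation. The plan is a counting argument over the random permutation, in the spirit of the proof of \cref{lem:4.1}.

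Let $B_1=B(s_i,R)$ and $B_{1/2}=B(s_i,R/2)$. By the expansion property, $|B_{1/2}|\ge |B_1|/c$ as long as $|B(s_i,R/2)|\ge\rho$; the small-ball case is handled separately below. Condition on the prefix of the permutation processed so far, namely all points up to the index at which $F$ was formed. Every future evictor of $F$ and its successors is a later point inside the current radius, hence inside $B_1$. The remaining points of $B_1$ (those after the current index) appear in uniformly random relative order. Each one that arrives while the radius still exceeds $R/2$ and lies closer than the current radius causes an eviction.

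The key observation is that once $\alpha$ points of $B_{1/2}$ have been seen after the current index, each has evicted a point at distance above $R/2$ or joined the list. So after at most $\alpha$ arrivals from $B_{1/2}$ the list lies entirely inside $B_{1/2}$, and the radius has halved. The only question is how many evictions happen before that, since each eviction creates a new list.

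Evictions before the halving are caused either by points of $B_{1/2}$ (at most $\alpha$ of them) or by points of $B_1\setminus B_{1/2}$ that arrive before the $\alpha$-th future point of $B_{1/2}$. In a random order, the expected number of points of $B_1\setminus B_{1/2}$ preceding the $\alpha$-th point of $B_{1/2}$ is at most $\alpha|B_1|/|B_{1/2}|\le\alpha c$, exactly as in \cref{lem:4.1}. So the expected number of new lists needed to halve the radius is at most $\alpha+\alpha c=O(1)$. By Markov's inequality, with constant probability $t=2\alpha(c+1)$ additions suffice, and geometric repetition gives the expectation statement.

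The main obstacle is the conditioning. Which points remain after the current index is itself random, so "uniform order among the remaining points of $B_1$" needs care. I would try first to make the argument per fixed prefix: the suffix is a uniformly random permutation of a fixed set, so the count above holds conditionally. Then only the remaining counts matter, and I need $|B_{1/2}\cap\text{suffix}|$ to be comparable to $|B_1\cap\text{suffix}|$. That holds in expectation by the sampling property of a random suffix, or I can bound more crudely by counting over all of $B_1$.

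The other exceptional cases cost only $O(1)$ lists:
\begin{itemize}
\item \textbf{Too few suffix points in $B_{1/2}$.} If fewer than $\alpha$ suffix points lie in $B_{1/2}$, the finger list reaches its tail regime.
\item \textbf{Small balls.} If the balls are smaller than $\rho$, the $(\rho,c)$ assumption gives an additive $O(\rho)$ term, which the paper absorbs under its convention on $\rho$.
\end{itemize}
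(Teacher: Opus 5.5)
Your route differs from the paper's. The paper does not count lists inside a radius window. It cites the analysis of Alg.~5.18 in \cite{ruhl2003efficient}, which shows that $|B(s_i,\radius{F})|$ halves in expectation every $2\alpha$ new lists. It then uses expansion to convert cardinality into radius: after $\lceil\log c\rceil$ such halvings ($2\alpha\lceil\log c\rceil$ evictors) the ball has shrunk by a factor $c$, which forces the radius below $R/2$. You instead rerun, at the current radius, the negative-hypergeometric count from the proof of \cref{lem:4.1}, obtaining about $\alpha(1+c)$ new lists in expectation. Your version is self-contained and gives exactly the ``$O(1)$ lists per factor-two change of radius'' form that the proof of \cref{thm:seq-build} uses. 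The paper's constant is smaller in $c$, but all the probabilistic work is delegated to the citation. Your closing Markov/geometric-repetition step is unnecessary, and ``constant probability, hence in expectation'' does not follow for the radius itself. The bound on the expected number of new lists until the radius is at most $R/2$ already is the statement.

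The step you flag as the main obstacle is where the proof actually lives, and neither of your suggested fixes works. Condition on $s_1,\dots,s_t$ and write $Q=\{s_{t+1},\dots,s_n\}$, $U'=B(s_i,R)\cap Q$, $M'=B(s_i,R/2)\cap Q$. Your conditional bound is then $\alpha(|U'|-|M'|)/(|M'|+1)$. This is a ratio, so knowing that $\mathbb{E}|U'|$ and $\mathbb{E}|M'|$ are comparable gives nothing. Counting over all of $B_1$ only bounds the numerator, while the danger is a small denominator. If expansion is taken in $S$, the denominator really can be small: $s_1,\dots,s_{i-1}$ never enter $\listoflist{i}$, and they may contain most of $B_S(s_i,R/2)$. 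Two ingredients close the gap.
\begin{itemize}
\item[(i)] Apply expansion to the set the process actually sees, $T=\{s_i,\dots,s_n\}$. This is a uniformly random subset, so by Karger and Ruhl's sampling theorem it has $(\max(c\rho,O(\log n)),2c)$-expansion whp. The paper's proof of \cref{lem:4.1} tacitly applies expansion to such suffix balls as well. The failure probability is polynomially small, so even with the trivial bound of $n$ lists it adds $o(1)$ to the expectation.
\item[(ii)] The conditioning is then harmless, deterministically. Since $F$ consists of the $\alpha$ closest points among $s_{i+1},\dots,s_t$, every prefix point at distance $<R$ from $s_i$ lies in $F$. Hence $|M'|\ge|B_T(s_i,R/2)|-\alpha-1$ while $|U'|\le|B_T(s_i,R)|$. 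So $|U'|/|M'|\le 4c$ once $|B_T(s_i,R/2)|\ge 2\alpha+2$ and is above the expansion threshold, and your count goes through.
\end{itemize}
The remaining small-ball regime is the same $\rho$ caveat that the paper leaves implicit.
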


\begin{proof}
    The proof of Alg 5.18 in \cite{ruhl2003efficient} shows that, $B(i,\radius{\listoflist{j+2\alpha}})$ is expected to halve compared to $B(i,\radius{\listoflist{j}})$.
    Hence, due to the expansion rate, the radius of $F$ is expected to halve after seeing $2\alpha\lceil\log{c}\rceil$ new evictors (repeated $\lceil\log{c}\rceil$ times).
\end{proof}

\medskip
\begin{proof}[Proof of \cref{thm:seq-build}]
We show that the work for \fname{Build-Finger} is $O(\log n)$ in expectation in each invocation, which implies the theorem.
The analysis of \cref{alg:orig-build} shows that the random walk has length $O(\log n)$ \whp{}, and the proof of \cref{alg:seq-find} shows that the \emph{\algntxt} has $O(\log n)$ expected work.
Hence, the only leftover part is the work spent on building the \emph{\rghttxt{}} pointers.

Recall that the \emph{\rghttxt{}} pointer is defined as $\rght{\findlistradius{i}{r}}[j]=\findlistradius{j}{r}$.
We now show that for each $s_j$,
the cost in building all $\rght{\findlistrank{i}{\cdot}}[j]$ pointers
can be charged to other operations in \fname{Build-Finger}$(i)$ asymptotically.
Hence, the total work is  $O(\log n)$ expected per random walk, and $O(n \log n)$ expected in total.

For each $s_\nxt$ added to some finger lists of $s_{i}$, maintaining its \emph{\rghttxt{}} pointers appears in two phases.
\begin{itemize}
    \item Phase 1: the first appearance.
    When $s_\nxt$ is first added to some finger list $F_{\mathit{new}}$, we set $\rght{F_{\mathit{new}}}[\nxt]=\rght{F^*}[\nxt]$ (line~\ref{line:alg:seq-efficient-build:new-r-start}).
    Then we align the radius of $\rght{F_{\mathit{new}}}[\nxt]$ until its radius is at most $\radius{F_{\mathit{new}}}$.
    \item Phase 2: the follow-up copies.
    When $s_{i}$'s finger lists get updated by $F_{\mathit{new}}$, for all points $s_j$ not evicted, 
    we copy $\rght{F_{\mathit{new}}}[j]=\rght{F}[j]$.
    Then we keep decreasing the radius of $\rght{F_{\mathit{new}}}[j]$ until its radius is at most $\radius{F_{\mathit{new}}}$.
\end{itemize}

\myparagraph{Phase 1.}
We first show that the running time of Phase 1
can be charged to the running time of line~\ref{line:adjust-fw} in the next iteration of the while loop.
For $\rght{F_{\mathit{new}}}[\nxt]$, it is initialized as $\rght{F^*}[\nxt]$,
then it is aligned to have radius at most $\radius{F_{\mathit{new}}}$.
For its counterpart, in line~\ref{line:fw-gets-r-fw-sj}, $F^*$ is also updated to $\rght{F^*}[\nxt]$,
then in the next iteration, it is adjusted to have radius at most $r=d(s_{\nxt}, s_i)+\max(\radius{F_{\mathit{new}}}, d(s_{\nxt},s_i))=d(s_{\nxt},s_i)+\radius{F_{\mathit{new}}}$.
This is because $s_{\nxt}$ will become the $s_{\cur}$, and $F_{\mathit{new}}$ will become the $F$ in the next iteration,
and we also have $d(s_{\nxt},s_i)<\radius{F_{\mathit{new}}}$ because $s_{\nxt}$ is inside $F_{\mathit{new}}$.

Note that $d(s_{\nxt},s_i)+\radius{F_{\mathit{new}}} < 2\radius{F_{\mathit{new}}}$.
Based on \cref{lem:radius-halve}, in addition to the running time of \cref{line:adjust-fw},
we only need to perform constant expected number of decreases on $\rght{F_{\mathit{new}}}[\nxt]$ to make its radius at most $\radius{F_{\mathit{new}}}$.

\myparagraph{Phase 2.}
A point $s_j$ may show up in several finger lists in $\listoflist{i}$, and we need to update $\rght{\findlistrank{i}{\cdot}}[j]$ in $\listoflist{j}$ by moving down the corresponding \emph{\rghttxt} pointers.
Here we can charge this cost to copying the finger lists, since we need to pay $O(1)$ cost every time we duplicate $s_j$ in finger lists.
Since $s_j$ is in $\findlistrank{i}{\cdot}$, the radii of these finger lists must be at least $d(s_i,s_j)$.
Due to the constant expansion rate, the numbers of points centered at $j$ and $i$ can only differ by a constant factor of $c$.
Hence, the expected number of pointer moving is asymptotically the same as the number of appearances of $s_j$ in $\listoflist{i}$, using the same argument as in the proof of \cref{lem:radius-halve}.

\smallskip 
Combining the two cases proves the theorem.
\end{proof}

\hide{

~

This section overviews the construction/query of the \msl{} (in a way that's generalizable to the parallel problem),
It will achieve $O(n \log{} n)$ complexity for construction, and $O(\log{} n)$ time per query.

A benefit of this approach is that it will produce the correct answer on arbitrary point sets (without the constraint on the expansion rate), albeit with potentially a worse complexity.

Accessing $\findlistradius{i}{r}$ naively will therefore take $O(\log \log n)$ time,
as it requires binary searching in an array of size $O(\log n)$.
By modifying the structure of the finger lists slightly,
there is a simple way to remove the $O(\log \log n)$ factor in both build and query (as will be described later).

We will maintain pointers that allow jumping from one finger list to another one quickly.
The idea is the also proposed in~\cite{karger2002finding,ruhl2003efficient},
but the details on how to compute them, and especially when to compute them, are not mentioned.
By storing pointers from $\rght{\findlistradius{i}{r}}[j]$ to $\findlistradius{j}{r}$ for all $j \in \findlistradius{i}{r}$
we can process all of the transitions in $O(\log{n})$ in total.

The query algorithm is given in \cref{alg:seq-efficient-find}.

Instead of using binary search to allow random access to the desired finger lists,
we can use the $\up{}$, $\down{}$, and $\rght{}$ pointers to adjust the finger list that's currently being accessed to the one that's needed next.
If the current finger list $F$ is too small, it could change to $\up{F}$ instead,
and if the current finger list $F$ is too big, it could change to $\down{F}$ instead.
When we jump from position $i$ to position $j$, we can update $F$ to $\rght{F}[j]$.

\begin{lemma}\label{lem:ball-size-halve}
    $|B(i,r)|$ is expected to halve after $O(1)$ new elements are added to the finger list.
\end{lemma}

\begin{proof}
    Obvious.
\end{proof}

\begin{lemma}
    The total number of times $\up{}$, $\down{}$, and $\rght{}$ are accessed in a single find call is $O(\log{n})$ whp.
\end{lemma}

We will maintain pointers that allow jumping from one finger list to another one quickly.
The idea is the also proposed in~\cite{karger2002finding,ruhl2003efficient},
but the details on how to compute them, and especially when to compute them, are not mentioned.
By storing pointers from $\rght{\findlistradius{i}{r}}[j]$ to $\findlistradius{j}{r}$ for all $j \in \findlistradius{i}{r}$
we can process all of the transitions in $O(\log{n})$ in total.

In this phase, the radius of $\rght{\findlistrank{i}{\cdot}}[j]$ is decreased in multiple rounds.
There exists some radius $r_x>r_y$ such that $s_j$ exists in all $s_{i}$'s finger lists with radius in $[r_x, r_y]$.
The cost of this phase is the number of $\down{}$ calls to decrease the radius of $\rght{\findlistrank{i}{\cdot}}[j]$.
We want to show that this number is proportional to the number of new elements added to $s_{i}$'s finger lists,
which is also the number of $\down{}$ calls if we do them on $s_{i}$'s finger lists directly.

Note that $r_x>r_y>d(s_j,s_i)$, so $j$ and $i$ are close enough,
so the number of $\down{}$ calls to go down from $r_x$ to $r_y$ are similar.
\xiangyun{Need to be more precise here. Refer to the proof of \cref{lem:4.2}.}
}

\section{Parallel Construction of the \Msl{s}}\label{sec:parallel}

In \cref{sec:work-efficient}, we presented an improved algorithm that constructs the \msl{} in $O(n\log n)$ expected work, and answers queries in $O(\log n)$ expected work. 
In this section, we further present a parallel algorithm for the construction process. 
We note that this parallelization is inherently challenging due to the sequential nature of the process, where building each finger list relies on later finger lists that have already been constructed. 
More specifically, in both our algorithm and the original KR algorithm~\cite{karger2002finding}, building each finger list requires to shift the focus to simulate the random walk process. 
At the stage where the algorithm is focusing on $\cur$, to find the next \focuspoint{} $\nxt$ from the \focuslist{} $F^*$,
$\cur$ not only relies on (or needs to wait for) $\nxt$, but also other points in $F^*$. 
This is because $\nxt$ is chosen as the first or last index in $F^*$ satisfying certain conditions (lines~\ref{line:first},\ref{line:farther} in \cref{alg:orig-build}), 
and thus we may need to know the full list to make the decision. 
This weaves an intricate dependence structure, which itself is also dictated by randomness. 

\hide{
In fact, in both our algorithm and the original KR algorithm~\cite{karger2002finding}, an $O(\alpha n\log n)$ number of finger lists needs to be built, and building it would require another $O(\alpha)$ finger lists each corresponding to one point the current finger list, which are used to decide the next point in the random-walk procedure.
This weaves a complicated dependence structure (the dependence DAG) containing $O(\alpha n\log n)$ vertices and $O(\alpha^2 n\log n)$ edges, which itself is also determined by randomness.
Analyzing this DAG and showing high parallelism of this computation can be extremely challenging, or even unlikely in our perspective.
}

To overcome this challenge and parallelize the algorithm, our solution is somewhat counterintuitive.
We add artificial synchronization barriers to the dependence graph---in a divide-and-conquer manner.
We will show that the dependences crossing the synchronization barriers actually form a tree structure, and it is possible to bound the tree depth in $O(\log n)$ \whp{}, which enables a polylogarithmic span for the entire algorithm.
However, given the complication of even the sequential algorithms, we will present our parallel algorithms in three steps to improve readability. 
First, in \cref{sec:par-naive}, we describe a simple divide-and-conquer algorithm for the construction of \msl{s}.
This algorithm is not work-efficient, 
but it provides useful intuition to achieve parallelism. 
This algorithm can achieve polylogarithmic span with an $O(\log n\log \log n)$ factor of overhead in work, or slightly better work with $O(n^{\epsilon})$ span.  
Next, we will show the high-level idea on parallelizing the construction algorithm by strictly following the random-walk procedure without doing extra work. 
This algorithm is given in \cref{sec:parallel-optimized}. 
The key challenge and algorithmic insight is analyzing the dependence structure, which we refer to as the \defn{control tree}.
In \cref{sec:control-analysis} we prove the $O(\log{n})$ height bound of the control tree,
then we show how to construct and use it in \cref{sec:par-constr}. 
Finally, we will show how to achieve the work-efficiency by efficiently maintaining the 
$\rghttxt$ pointers for each finger list in each step in \cref{sec:map}.

\subsection{A Simple Work-Inefficient Solution}\label{sec:par-naive}

We start with a simple work-inefficient divide-and-conquer parallel construction algorithm. 
The idea is to break the input sequence into two equal-size chunks, $s_{1.. n/2}$ and $s_{(n/2+1)..n}$, and build their \msl{s} independently.
When they both finish, note that all finger lists on the right half are fully ready.
However, the points in the left half can only see the points in $s_{1.. n/2}$, 
and their finger lists need to further incorporate the points in the right half. 
To complete the construction for the points on the left, we first propose a simple method as follows.
For each point $s_i$ in $s_{1..n/2}$, we run \fname{Build-Finger}$(s_i)$ on the \msl{} for $s_{(n/2+1)..n}$.
This will generate a list of new finger lists for each $s_i$ (denote it as $\bar{\listoflist{i}}$), ``as if'' $s_i$ is at the location of $s_{n/2}$.
This step, using \cref{alg:orig-build}, costs $O(n\log n\log \log n)$ work and $O(\log n\log \log n)$ span \whp{}. 
As such, we would need to merge the two finger lists $\listoflist{i}$ and $\bar{\listoflist{i}}$ for each point,
where we only preserve a subset of points in $\bar{\listoflist{i}}$.

Since both $\listoflist{i}$ and $\bar{\listoflist{i}}$ have $O(\log n)$ finger lists \whp{} and each finger list has a constant size of $\alpha$, combining them will take $O(\log n)$ work \whp{}, leading to:
\begin{align*}
  W(n)&=2W(n/2)+O(n\log n\log \log n) \\
  D(n)&=D(n/2)+O(\log n\log \log n)
\end{align*}
The recurrences yield $O(n \log^2{n}\log{\log{n}})$ work and $O(\log^2 n\log\log n)$ span, both \whp{}. 
Note that in addition to a 2-way divide-and-conquer, we can increase the branching factor to $k$, and when combining, 
we consider merging $k-1$ times, starting from the very last two subproblems to the front, each as mentioned above. 
Now the recurrences become:
\begin{align*}
  W(n)&=k\cdot W(n/k)+O(n\log n\log \log n) \\
  D(n)&=D(n/k)+(k-1)\cdot O(\log n\log \log n)
\end{align*}
The solutions to the recurrences imply $O(n \log{n}\log_k{n}\log{\log{n}})$ work and $O(k\log n\log_k{n}\log\log n)$ span, both \whp{}.
By plugging in $k=n^\epsilon$ for some constant $\epsilon>0$, we can achieve $O(n \log{n}\log{\log{n}})$ work but only a polynomial span.

\subsection{Overview of the Work-Efficient Algorithm}\label{sec:parallel-optimized}

\begin{figure*}
  \centering
  \includegraphics[width=\textwidth]{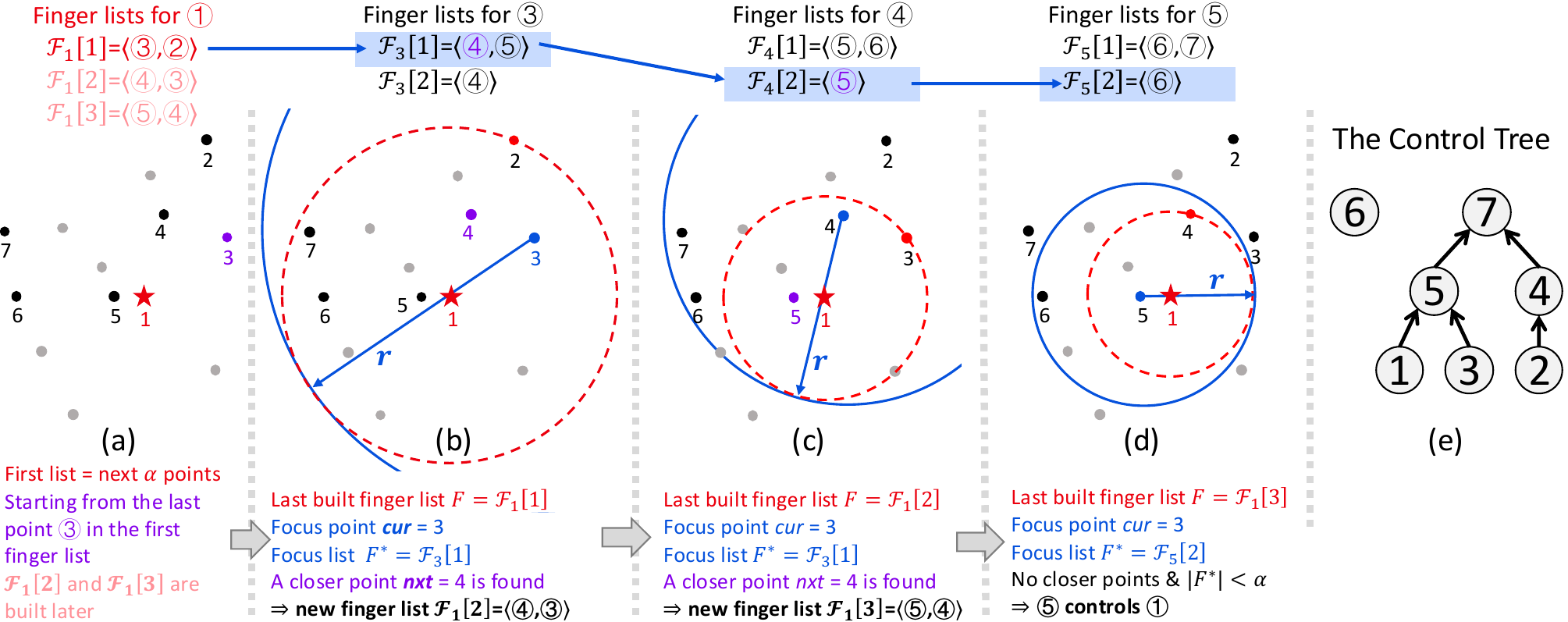}
  \caption{\fullbf{Illustration of parallel construction when processing point $i=1$, and the \control{} tree.}  
  Points 1--7 are in the current (recursive) subproblem.  
  (a)--(d) also illustrate the sequential construction of finger lists for $s_1$,
  showing the random walk steps from it. 
  The point under construction ($i=1$) and its finger lists are marked red. 
  The \focuslist{} ($F^*$) of each step are in blue background. 
  The \focuspoint{} $\cur$ at each step are marked as blue dots. 
  Red dashed circles are the ranges of the last built finger list ($\radius{F}$), centered with point $i=1$.  
  Blue solid circles are the ranges of the \focuslist{} ($\radius{F}$), centered with the focus point $\cur$.  
  The next point $\nxt$ selected are marked purple. 
  Arrows at the top show the random walk among the finger lists.  Figure~(e) is the control tree.
  }\label{fig:control-example}
  \vspace{-0.5em}
\end{figure*}

From \cref{sec:par-naive}, we can see that a straight-forward divide-and-conquer algorithm inevitably leads to redundant computation. 
This motivates us to strictly follow the computation in the random-walk algorithm shown in \cref{alg:seq-efficient-build} to maintain work-efficiency, 
while still leveraging a 2-way divide-and-conquer strategy to achieve high parallelism. 
The key remaining question is whether in this case, the two subproblems can be \emph{merged} in polylogarithmic span; 
with $O(\log n)$ recursion depth, this would imply an overall polylogarithmic span. 
We demonstrate that this is indeed achievable by giving an algorithm in \cref{alg:par-build}, 
which finally leads to an $O(n\log n)$ work-efficient construction algorithm with polylogarithmic span. 

\begin{theorem}\label{thm:main}
Given a randomly permuted point set $S=\{s_{1..n}\}$ with a constant expansion rate and a constant $\alpha$, the \msl{} can be built in $O(n\log{n})$ expected work and $O(\log^3 n)$ span \whp{}.
\end{theorem}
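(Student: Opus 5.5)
The plan is a 2-way divide-and-conquer that simulates exactly the random walks of \cref{alg:seq-efficient-build}, so the work matches \cref{thm:seq-build}. Split $S$ into $s_{1..n/2}$ and $s_{n/2+1..n}$ and recurse on both halves in parallel. When the recursion on $s_{1..n/2}$ returns, each left point $s_i$ has executed its random walk (and built its finger lists) up to the first moment the walk tries to advance to a focus point whose finger lists were not yet final. The walk of $s_i$ is then suspended at some point $s_j$, and we say $s_j$ \emph{controls} $s_i$. Because no step of any walk is repeated or recomputed, the total work is, up to the bookkeeping of suspensions, identical to the sequential work. That gives $O(n\log n)$ expected work, provided the \emph{advance} pointers and \emph{align} operations are charged as in the proof of \cref{thm:seq-build} and \cref{lem:4.3}. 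Suspending and resuming a walk costs $O(1)$ per suspension, and there are $O(n)$ suspensions per level. Summed over $O(\log n)$ levels this is $O(n\log n)$ extra.

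The merge step resumes the suspended walks. A suspended walk of $s_i$ can continue as soon as the controlling point $s_j$ has its own finger list at the required radius finalized. After that, the walk proceeds into the right half, which is complete. I would argue that each point is controlled by at most one point, namely the single focus point on which its walk is blocked, so the control relation forms a forest over the left half. The merge then processes this forest level by level from the roots. Each root's walk is completed using the fully built right half in $O(\log n)$ span \whp{}, since walks have length $O(\log n)$ \whp{}. Once a node is finished, its children are resumed in parallel.

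Within one merge, the span is $O(\text{height}\cdot\log n)$. If the height of the control forest is $O(\log n)$ \whp{}, the merge span is $O(\log^2 n)$. The recurrence $D(n)=D(n/2)+O(\log^2 n)$ then gives $O(\log^3 n)$ span \whp{}, using the strong \whp{} notion and a union bound over all subproblems.

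The main obstacle is bounding the control-tree height. I would prove it by following a chain $i_0 \leftarrow i_1 \leftarrow \cdots$ of controllers. Each $i_{t+1}$ is a focus point on the walk of $i_t$ and lies inside the ball searched around it. The plan is to show a potential argument in the spirit of \cref{lem:intuition}: along the chain, the rank (or the ball size $|B|$ at the current radius) shrinks by a constant factor in expectation, using the random permutation and expansion $c$. A Chernoff-style bound over $O(\log n)$ steps would then cap the chain length \whp{}. The difficulty is the dependence between the walks of different points. I would try to decouple them by exposing the permutation only through the priorities inside each sampled ball, so that each chain step again has constant success probability independently of the earlier steps.
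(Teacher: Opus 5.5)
Your skeleton matches the paper: two-way divide-and-conquer, a control forest processed layer by layer from the roots, and $D(n)=D(n/2)+O(\log^2 n)$. The problem is the step you single out as the main obstacle, because the plan you give for it would not go through. A rank or ball-size potential in the style of \cref{lem:intuition} has nothing fixed to track along a control chain $i_0,i_1,i_2,\ldots$. Consecutive nodes belong to different walks with different sources, so there is no common query to rank against. Moreover, at every node the terminal focus list is by definition incomplete: the ball at the blocking radius already holds fewer than $\alpha$ later points, so that quantity cannot keep shrinking. The height measures progress in \emph{priority order}, since a chain ends once it reaches the last few indices of the subproblem $s_{l..r}$. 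The right potential is therefore $\log(r-i_t)$, and the expansion rate plays no role.

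The missing observation is as follows. Let $s_m$ be the nearest neighbor of $s_i$ among $s_{i+1..r}$. The walk of $s_i$ must reach $s_m$. If $m\le i+\alpha$ this is immediate, since the walk starts at $i+\alpha$. Otherwise $s_m$ is an evictor of some list of $s_i$, and every evictor becomes the next focus point. Focus indices only increase, so $C(i)\ge m$. Under a uniform permutation, $m$ lies in the back half of $(i,r]$ with probability at least $1/2$, so each hop halves the remaining index range with constant probability. A Chernoff bound over $O(\log n)$ hops and a union bound over all $i$ then give height $O(\log n)$ with high probability (\cref{lem:control-tree-height}).

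Your work accounting also skips the part the paper treats as hardest. ``No step is repeated'' does not by itself give $O(n\log n)$, for two reasons. First, the charging in \cref{thm:seq-build} needs the advance pointer's target $\findlistradius{j}{r}$ to exist when the pointer is created. This fails in the left half whenever $\listoflist{j}$ still lacks its right-half points. Second, your $O(1)$ cost per resumption hides the need to locate the focus list at the required radius in the now-extended $\listoflist{C(i)}$. Doing that by binary search costs $O(\log\log n)$, and with $\Theta(n)$ resumptions on each of $\Theta(\log n)$ levels you are back at $O(n\log n\log\log n)$.

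The paper handles this in two ways. It aligns from the last complete list of $C(i)$ and charges those moves to the new insertions into $\listoflist{i}$. It also records the unresolved shift-focus pointers in an advance tree of depth $O(\log n)$, filled in lazily once the target lists exist, with extra bookkeeping via Phase~2 and the sets $L[i]$ for the last $O(\alpha)$ insertions.

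Finally, your suspension rule must be the paper's: suspend only when the focus list is incomplete \emph{and} contains no closer point. If you suspend at every incomplete list, the resumed walk may step to another unfinished left-half point. Your claim that it then proceeds into the complete right half would fail.
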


We first overview the main challenge and our high-level idea. 
As mentioned in \cref{sec:par-naive}, the complication in the parallel divide-and-conquer algorithm stems from the fact that the points of the first (left) half are unaware of points in the second (right) half. 
After both recursive calls finish, the left-half finger lists are incomplete and must further incorporate the points on the right. 
This necessitates a \fname{Merge} scheme to complete the finger lists on the left. 
The goal is to finish the ``list completion'' task for all left-half points \emph{in parallel} and \emph{work-efficiently}. 
In \cref{sec:par-naive}, the \fname{Merge} function in the \naive{} algorithm achieves parallelism by letting each point on the left restart a random walk from the first point on the right, i.e., $s_{n/2+1}$. 
As such, all random walks are independent with each other. 
However, this is work-inefficient since the computation differs from the sequential algorithm. 
During the process, more finger lists may be created and then merged/discarded, causing extra work. 

Intuitively, an efficient simulation of the sequential process should let each point restart the random walk from \emph{itself}, 
following the same route as the sequential algorithm. 
Therefore, for some $s_i$ on the left, its starting point on the right should not simply default to $s_{n/2+1}$; 
rather, it should be guided by its last ``hop'' on the left in the original random walk. 
However, when this ``previous hop'', which is another point on the left half, also remains uncompleted, a dependence arises, which hinders parallelism. 

With this in mind, we now formally define the dependence structure among the points. 
Given a randomly permuted sequence of input points $S=\{s_{1..n}\}$, we consider the standard sequential construction of finger lists for all points from $s_{n'}$ to $s_1$.
For each point $s_i$, its corresponding random walk terminates at its last \focuspoint{} $s_{\cur}$ (line \ref{line:alg:krconstruct:notfull} in \cref{alg:orig-build}). 
In this case, we say $s_{\cur}$ \defn{\control{s}} $s_{i}$ (and $s_{\cur}$ is $s_{i}$'s \defn{\control{} point}). 

Now consider $S=\{s_{1..n'}\}$ as a left-side subproblem in the divide-and-conquer scheme. 
In this case, we will need to complete (i.e., \fname{Merge} the points on the right into) the finger lists for each $s_i$ by simulating random walks from $s_i$. 
The random walk should normally advance to its control point $s_{\cur}$, and then, it should be guided by the completed version of $\listoflist{\cur}$ to advance to the right side. 
Clearly, the random walk of $s_i$ cannot determine the next hop $\nxt$ before $s_{\cur}$'s finger lists are fully ready, thereby incurring a dependence. 
In other words, for a point $s_i$, if we consider the intended random walk process in the full list (both left and right halves considered), 
then its \control{} point is exactly the last hop before advancing across the boundary between the two halves. 

We present an illustration of the control point and the dependence caused by it in \cref{fig:control-example}. 
In this example, we have points $s_{1..7}$ as the left-side subproblem in a divide-and-conquer scheme. 
In the subproblem on $s_{1..7}$, the finger lists of $s_1$ are constructed by initiating a random walk, which goes through $s_3,s_4,s_5$.
It terminates at $\findlistrank{5}{2}=\langle 6 \rangle$, since no closer point is found, and this list is incomplete (line~\ref{line:alg:krconstruct:notfull} in \cref{alg:orig-build}). 
By definition, $s_5$ controls $s_1$.
Since $\findlistrank{5}{2}$ is incomplete, we can expect more points to be added with the right half considered. 
For $s_1$, again with the right half considered, its next focus point will be determined by what other points are added to $\findlistrank{5}{2}$:
either a closer point, or, if all points are still too far, the last (largest index) will be selected. 
In both cases, $\nxt$ will need to wait for $\listoflist{5}$ to incorporate information on the right. 
Therefore, the \fname{Merge} algorithm must first settle down $s_5$ before processing $s_1$. 

For a sequence $S=\{s_{1..n'}\}$, we define the \defn{control tree} as a graph (forest), where each point is a vertex, and if $s_i$ is controlled by $s_j$, we add an edge from $i$ to $j$. For a left subproblem on $S$, its control tree captures the dependence of processing all the points in the \fname{Merge} process. 
We illustrate this in \cref{fig:control-example}. 

\myparagraph{A Useful Fact about Control Points.} For a point $s_i$ initially blocked by its \control{} point $s_{\cur}$, 
once $s_{\cur}$ finds its next hop on the right, the fully-built finger lists on the right will allow the random walk to proceed until the end of the right half. 
By definition, this process will stop at another incomplete finger list (line~\ref{line:alg:krconstruct:notfull} in \cref{alg:orig-build}). Interestingly, this will exactly be the control point of $s_i$ in a future recursive call, when the next time $s_i$ (and this new control point) are both in a left subproblem. 
Our algorithm will use this property to avoid recomputing the control points in each recursive call. 

In the remainder of this section, we present more details about the \control{} trees and our algorithm based on it. 
First, we prove that the \control{} tree has the height of $O(\log n)$ \whp{}, which will be detailed in \cref{sec:control-analysis}. 
Note that the \control{} tree is inherent to the \msl{s} and independent with any construction algorithm that generates it.
With this in mind, if the tree structure is known, we can simulate random walks for the points in the tree level by level. 
More precisely, we can first settle all points with no \control{} points (i.e., the roots in the control tree/forest), since they can directly advance to the right.
With these points ready, we then process the next level, so on so forth. 
Since $|\listoflist{i}|$ is bounded by $O(\log n)$ \whp{}, we can bound the span to be polylogarithmic as long as we can build the control tree efficiently.

The next question is how to find the \control{} points efficiently, such that we can construct the \control{} tree and execute random walks accordingly.
We show this algorithm in \cref{alg:par-build} and describe it in \cref{sec:par-constr}. 
Given the complication of this algorithm, we will first assume the following invariant for the \emph{\rghttxt{}} pointers is maintained:
for all completed finger lists $\findlistrank{i}{k}$ and $j\in \findlistrank{i}{k}$,
either $\rght{\findlistrank{i}{k}}[j]$ maps to the correct finger list of $j$,
or $\rght{\findlistrank{i}{k}}[j]$ maps to the last completed finger list of $j$; we do not maintain the \emph{\rghttxt{}} pointers for the tail finger lists.
We will show how the \control{} tree can be constructed efficiently with these pointers in a work-efficient and highly parallel manner.

The final question is how to maintain the $.\rghttxt$ pointers.
Interestingly, this appears to be the most challenging part of our algorithm. 
This is because in parallel, only part of the finger lists in $\listoflist{i}$ are built in the execution of an algorithm.
This means while the next focus point can be decided, its corresponding finger list and thus the \emph{\rghttxt} pointers are not available, unlike in the sequential case.
We will show how to overcome this problem in \cref{sec:map}. 
We primarily address two challenges. 
The first is how to complete the random walks work-efficiently when only partial pointers are available.
The second is how to fill in the incomplete \emph{\rghttxt} pointers work-efficiently when the target finger lists become available. 
To achieve this, we again represent the dependence DAG as a tree structure, but apply the computation lazily. 
Combining all these components proves the main theorem, \cref{thm:main}, of this paper.

\subsection{Analysis of the Control Tree}\label{sec:control-analysis}

As discussed, we define the concepts of the control points and thus the control tree above to capture the dependence structure in the parallel construction process. 
We now show that the tree height is shallow, which indicates high overall parallelism.

\begin{lemma}\label{lem:control-tree-height}
    Given a randomly shuffled sequence of points $S=\{s_{1..n}\}$, its corresponding control tree has the height $O(\log{} n)$ \whp{}.
\end{lemma}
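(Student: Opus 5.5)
The plan is to bound, for every fixed $i$, the length of the chain $i \to \mathit{ctrl}(i) \to \mathit{ctrl}(\mathit{ctrl}(i)) \to \cdots$ by $O(\log n)$ \whp{}, and then take a union bound over the $n$ starting points. The first step is to record what a control edge means. If $s_j$ controls $s_i$, then the random walk of $s_i$ in $S=\{s_{1..n}\}$ ends at $\cur=j$. By line~\ref{line:alg:krconstruct:notfull} of \cref{alg:orig-build}, the walk stops because the \focuslist{} $\findlistradius{j}{r}$ is a tail list, i.e., $B(s_j,r)\cap\{s_{j+1..n}\}$ has fewer than $\alpha$ points. Here $r\ge 2d(s_j,s_i)$ and $r\ge d(s_j,s_i)+\radius{F}$, where $F$ is the final (tail-building) finger list of $s_i$.

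So each control edge $i\to j$ is witnessed by a ball around $s_j$, containing $s_i$, in which $s_j$ is among the last $\alpha$ points in the permutation order. For the chain $i_0\to i_1\to\cdots\to i_h$, I will pick a potential such as $\Phi_t=|B(s_{i_t},r_t)|$. This is the number of points in the terminating ball of $s_{i_t}$, counting all of $S$ and not only later points. The goal is to show that $\Phi$ shrinks by a constant factor in expectation at each step, and that it can grow by at most a constant factor $c^{O(1)}$ in the bad case.

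The key inequality comes from the triangle inequality. Since $s_{i_t}\in B(s_{i_{t+1}},r_{t+1})$, the terminating ball of $i_{t+1}$ and that of $i_t$ are contained in each other up to a constant-factor radius blowup. By $(\rho,c)$-expansion their sizes are then within a factor $c^{O(1)}$, which gives the worst-case growth bound. For the shrink step, I will use the random permutation. The points $i_0<i_1<\cdots$ are strictly increasing in index, and $s_{i_{t+1}}$ must have at most $\alpha-1$ later points in a ball of size about $\Phi_t$.

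Conditioned on the set of points, the index of the controller is a random position in that ball. The probability that it lies among the last $\alpha$ positions of a ball containing $m$ points is about $\alpha/m$. I plan to convert this into the statement that after one step the relevant ball has size at most half of the previous one with probability $1-e^{-\Omega(c)}$. This mirrors the rank-halving argument behind \cref{lem:intuition}, and reuses \cref{lem:radius-halve}-type reasoning to control $\radius{F}$.

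Once a drift of $\mathbb{E}[\log\Phi_{t+1}-\log\Phi_t]\le -\Omega(1)$ is established, with increments bounded by $O(\log c)$, a Chernoff/Azuma bound shows that $\log\Phi$ drops from $\log n$ to $O(1)$ within $O(\log n)$ steps \whp{}. Once $\Phi=O(\alpha)$, only $O(1)$ further steps are possible, since indices strictly increase inside a constant-size region. The union bound over $n$ chains finishes the proof.

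The main obstacle is independence. The control chain is determined by the same permutation that defines every finger list, so successive steps are not independent samples. I would first try the principle of deferred decisions, as in the KR analysis: reveal the permutation from the back (from $s_n$ toward $s_1$), so that when the walk of $s_{i_t}$ ends, only relative orders inside already-inspected balls are fixed. The next step then concerns fresh randomness up to a bounded number of exposed points. If this fails, the fallback is to bound the probability of any specific long chain directly and union-bound over chains. That would use the expansion bound to limit the number of candidate controllers at each step to $c^{O(1)}\alpha$ times a geometrically shrinking factor.
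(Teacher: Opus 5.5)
\textbf{Genuine gap: your potential moves in the wrong direction.} Write $B_t=B(s_{i_t},r_t)$ for the ball that stopped the walk of $s_{i_{t-1}}$. Let $\delta_t$ be the distance from $s_{i_t}$ to its $\alpha$-th nearest point in $s_{(i_t+1)..n}$, i.e., the radius of the final complete finger list of $s_{i_t}$.

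\begin{itemize}
\item The focus list selected at radius $r_t$ was a tail list, so every complete list of $s_{i_t}$ has radius larger than $r_t$. Hence $\delta_t>r_t$.
\item When the walk of $s_{i_t}$ stops at $s_{i_{t+1}}$, its last complete list $F$ is exactly that final list. Therefore $r_{t+1}\ge d(s_{i_{t+1}},s_{i_t})+\delta_t$.
\item The triangle inequality then gives $B_t\subseteq B(s_{i_t},\delta_t)\subseteq B_{t+1}$. The inclusion is strict, because the middle ball contains $\alpha$ points later than $i_t$ and $B_t$ contains fewer.
\end{itemize}

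So $\Phi_t$ is strictly \emph{increasing} along every control chain, deterministically. No step can shrink it, and the target ``$\log\Phi$ drops from $\log n$ to $O(1)$'' is unreachable. Your claim that consecutive terminating balls are contained in each other up to a constant radius factor is also false. Only one containment holds, and $\Phi_{t+1}/\Phi_t$ is not bounded by any constant. Large ratios are precisely the big index jumps that make chains short.

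Reversing the direction does not rescue the plan by itself. Heuristically $\Phi_t\approx \alpha n/(n-i_t)$, so $\Phi$ merely re-encodes the index gap, and you would still need a reason why the controller jumps far ahead in index. Your ``probability about $\alpha/m$ of lying among the last $\alpha$ positions'' heuristic cannot supply it. Conditioned on the control edge, the controller lies among the last $\alpha$ positions of its ball by definition, so this is not a random event you can exploit.

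\textbf{The missing idea is the paper's much simpler one, which uses no expansion at all.}

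\begin{itemize}
\item Focus-point indices strictly increase along a walk.
\item The nearest neighbor $s_m$ of $s_i$ among $s_{(i+1)..n}$ must be visited by the walk. Either it is in the first list, and the walk starts at $i+\alpha\ge m$, or it is an evictor, and every evictor becomes the next focus point. Hence $C(i)\ge m$.
\item Under the random permutation, $m$ is uniform on $\{i+1,\dots,n\}$. So with probability at least $1/2$, the remaining range $n-C(i)$ is at most half of $n-i$.
\item Each step of a chain is thus a coin flip that halves the remaining range with probability $1/2$. A Chernoff bound shows that $\log n$ successes occur within $O(\log n)$ flips with high probability, and a union bound over starting points finishes the proof.
\end{itemize}

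Your concern about independence between successive steps is legitimate. The paper handles it only by appealing to the same reasoning as in the KR analysis. A deferred-decisions argument is the natural way to make it rigorous, but it should be applied to this index-based potential, not to ball sizes.
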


\begin{proof}
    Let the control point of point $s_i$ be denoted by $s_{C(i)}$.
    We now claim that if $C(i)\ne i$, then with at least half probability we have $C(i)\ge (i+n)/2$.
    Namely, $s_{C(i)}$ is more likely to be in the second half of $s_{i..n}$.
    Note that $s_{(n-\alpha+1)..n}$ are not controlled by any point.
    
    Let $s_m$ be the closest point to $s_i$ in $s_{(i+1)..n}$.
    Based on the random walk (see \cref{sec:kr}), $s_m$ must be visited in the random walk during the construction of $\listoflist{i}$, so we must have $C(i)\ge m$.
    Since $S$ is shuffled uniformly at random, with a half chance $m$ is at least $(i+n)/2$, so is $C(i)$.
    In addition, the probability distribution is independent for the transition between $i$ and $C(i)$ and the transition between $C(i)$ and $C(C(i))$---this is due to the same reason as explained in the analysis of the KR algorithm for \msl{s} construction in \cref{sec:kr}.
    
    With the distribution of $C(i)$, we can now bound the control tree depth easily.
    Consider the path starting from a node $i$. 
    Every time we chase $C(i)$, it can be considered as a coin flipping with a half chance cutting the remaining range by half.
    These are the simple Bernoulli trials, so it takes $O(\log n)$ \whp{} coin flips to see $\log n$ successful outcomes, which upper bounds the path lengths from $i$.
    Taking the union bound across all points proves this lemma.
\end{proof}

In our algorithm, we use $C[i]$ to store the control point of $s_i$ at the current recursive level.
This value will be updated as the algorithm proceeds, based on the \emph{useful fact} mentioned in \cref{sec:parallel-optimized}. 

\begin{algorithm}[t]
\DontPrintSemicolon
\small
\caption{\fname{Parallel-Build}($s_{l \ldots r}$): Builds the finger lists $s_{l \ldots r}$ in parallel}\label{alg:par-build}
\SetKw{Break}{break}

\SetKwInput{Maintains}{Maintains}
\Maintains{$C[i]$: the control point for $s_i$\\
}

\lIf(\tcp*[f]{Base case}){$r-l<\alpha$}{
    $C[l]\gets l$; \Return
}
$m \gets (l+r)/2$\;
\parDo{}{
\fname{Parallel-Build}$(s_{l \ldots m})$\\
\fname{Parallel-Build}$(s_{(m+1) \ldots r})$\\
}
\tcp{The \fname{Merge} phase: complete the finger lists on the left half}
Construct the control tree for $i \in [l..m]$ in parallel \\

\ForEach{\upshape layer of the control tree}{
    \parForEach{\upshape point $s_i$ in this layer}{
        \lIf{$i+\alpha\le r$} {
            \fname{BuildFromRight}($i, C[i]$)
        }
    }
}

\medskip

\myfunc(){\upshape{\fname{BuildFromRight}}($i,\cur$)}{

\If(\tcp*[f]{\upshape $i$ has no completed finger list}){$i=\cur$}{
    $\listcontent{\listoflist{i}[1]}\gets \{s_{i+1}, s_{i+2}, \dots, s_{i+\alpha}\}$, $\cur\gets i+\alpha$
}

\While{true}{
    $F \gets$ the last complete finger list of $s_i$\\
    $r \gets d(s_{\cur}, s_{i}) + \max(\radius{F}, d(s_{\cur}, s_i))$\\
    
    $F^*\gets \findlistradius{\cur}{r}$\\
    \If{$\exists s_j\in F^*$ s.t. $d(s_j, s_i) < \max(\radius{F}, d(s_{\cur}, s_i))$}{
        $\nxt \gets$ the first $j$ with this property in $\findlistradius{\cur}{r}$\\
        \If{$d(s_{\nxt}, s_i) < \radius{F}$}{
            $F_{\mathit{new}}\gets $ replacing the farthest point in $F$ with $s_{\nxt}$\\
            Append $F_{\mathit{new}}$ to $\listoflist{i}$\\
        }
    }
    \Else{
    \If{$\left|F^*\right|<\alpha$\label{line:par-control}}{
        $C[i] \gets \cur$\\
        Build the tail and  \Break
    }
    \lElse{
        $\nxt \gets$ the last index in $F^*$
    }
    }
    $\cur\gets \nxt$\\
}

}

\end{algorithm}

\subsection{The Construction Algorithm}\label{sec:par-constr}

\cref{alg:par-build} presents the core ideas of our parallel construction algorithm.
Here, for the ease of understanding, we use $\findlistradius{\cur}{r}$ to acquire the focus list, and later show how to replace it with \emph{.\rghttxt{}} pointers and the \emph{\algntxt{}} function in \cref{sec:map}.

Our \fname{Parallel-Build} function recursively builds the finger lists for the left and right halves of the input in parallel, and then merges them. 
The base case is when there are fewer than $\alpha$ points, in which case we cannot even construct the first finger lists and can return.

During the merge phase,
we first construct the control tree using the control points of the left half, 
which can be done using a parallel semisort~\cite{gu2015top,dong2023highfull}. 
Then we process the control tree layer by layer in a BFS manner, starting from the roots. 
In \cref{fig:control-example}, we process $\{s_6,s_7\}$ in the first round, then $\{s_4,s_5\}$, and finally  $\{s_1,s_2,s_3\}$. 

For each point $i$ in the current layer, we resume its finger list construction from its control point $C[i]$ by the \fname{BuildFromRight}() function, which is similar to the sequential construction algorithm in \cref{alg:orig-build,alg:seq-efficient-build}. 
This allows us to directly advance to the right half in one hop, guided by $C[i]$, then we can continue the construction as normal. 
Specifically, we first set $F$ as the last complete finger list of $i$ so far, and will append new finger lists after it. 
We then compute the query radius $r$,
adjust the finger list to $\findlistradius{\cur}{r}$ based on the query radius $r$, 
check the condition to find the next focus point,
all using the same way as the sequential algorithm. 
Similarly, the random walk should terminate when we reach some $F^*=\findlistradius{\cur}{r}$,
where no closer point is found and $F^*$ is incomplete. 
In this case, the next focus point is also outside (to the right) the current subproblem. 
Based on the useful fact mentioned in \cref{sec:parallel-optimized}, the terminating point $\cur$ is the control point of $i$ in a future level of recursion. 
Therefore, we set $C[i]=\cur$ and terminate the process (line~\ref{line:par-control}).

In this algorithm, building the control tree takes linear work and $O(\log n)$ span, both \whp{}.
If we use binary search to find $\findlistradius{\cur}{r}$,
\cref{alg:par-build} directly gives us a parallel construction algorithm with $O(n \log{n}\log{\log{n}})$ work \whp{}.
The span bound is $O(\log^3 n)$ \whp{}: $O(\log n)$ levels of recursion, $O(\log n)$ levels in the control tree \whp{}, and the random walk in each invocation of \fname{BuildFromRight}.

\subsection{Maintaining the \emph{.\rghttxt{}} Pointers}\label{sec:map}
The final piece in our parallel algorithm is to efficiently maintain the \emph{.\rghttxt{}} pointers.
As such, we can achieve the $O(n\log{n})$ expected work bound as in our sequential algorithm.
To do this, we need to address two questions:
first, how to continue the random walk efficiently without all the \emph{.\rghttxt{}} being finished; and 
second, how to efficiently fill in the \emph{.\rghttxt{}} pointers once they become available. 
The answers, at a high level, are not complicated, and we will present them in this section. 
However, the algorithm with full details may be involved. Due to the space limit, we provide the full algorithm and in-depth explanation in \ifconference{the full version \cite{metricskiplistfullversion}}\iffullversion{Appendix.~\ref{sec:app:parallel-details}}.
\iffullversion{The pseudocode is given in \cref{alg:par-build-full}.}

For the first question, the case happens in the tail of the finger lists of the left subproblem.
When the points from the right subproblem are merged in, we need to fill in the points and set the \emph{.\rghttxt{}} pointers properly. 
To do this, we just start from the lowest level and use the \emph{\algntxt{}} function to move up and find the corresponding target. 
Due to the constant expansion rate, the cost of moving is asymptotically the same as the number of focus points from the right half. 
The solution for the second question, building the \emph{.\rghttxt{}} pointers, is slightly more complicated.
The \emph{.\rghttxt{}} pointers are computed from three sources:

\begin{itemize}
    \item Initialization: $\rght{\findlistrank{i}{1}}[j]$ points to $\findlistrank{j}{1}$,
    \item Push-down: $\rght{\findlistrank{i}{k}}[j]$ points to $\rght{\findlistrank{i}{k-1}}[j]$, 
    \item Shift focus: $\rght{\findlistrank{i}{k}}[j]$ points to $\rght{\somef{}}[j]$.
\end{itemize}
An \emph{\algntxt{}} is then called to calibrate the pointers to the correct radius.

The first two cases are simple as the base cases or can be processed by moving the pointers downward when possible.
The third case introduces new points into the finger lists and can change the radius, and in divide-and-conquer, it is possible that the target finger list is not ready but the random walk can continue and create this \emph{\rghttxt{}} pointer.
We note that all these \emph{\rghttxt{}} pointers created by ``shift focus'' align with the random walk, so when following the sequential execution, the dependences between all these pointers form a tree structure that resembles the control tree and also has $O(\log n)$ depth \whp{}.
We refer to this as the \emph{advance tree}. 
Hence, we can use the same approach to notify the updates of the \emph{.\rghttxt{}} pointers, just like notifying the points that their random walks are now available. 
We provide more details in \ifconference{the full version \cite{metricskiplistfullversion}}\iffullversion{Appendix.~\ref{sec:app:parallel-details}} on how the \rghttxt{}-tree is built and maintained.

\hide{

Let us take a look at the \upshape{Build} function in \cref{alg:par-build}.
First, we use a higher radius for the query than in \cref{alg:seq-efficient-build}.
We will explain the reason later, but doing this only requires increasing $\alpha$ by a constant factor.
Then we do the construction as usual, until we meet a finger list that is incomplete, meaning it has less than $\alpha$ elements.
We then enter the second phase of the \upshape{Build} function.

Now we are at $\findlistrank{\cur}{k}$, which is a incomplete finger list.
It means that, after $\cur$, there can be at most $\alpha$ elements that can be added to $i_0$'s finger list,
and they are all within the range of $2\radius{\findlistrank{i_0}{id}}$ from each other.
We call these points as targets.
Therefore, we start from $\cur$, and repeatedly jump to these targets, and we add them to $i_0$'s finger list.
The radius queried guarantees that each of them will appear in the queried finger list of the previous target,
and the finger lists we visited in the second phase are all incomplete, which means they are the last finger lists of these targets.
We continue this process until we have added all targets to $i_0$'s finger list, which means we have completed the construction for $i_0$.

Conceptually, we can imagine stopping this process as soon as we reach an incomplete finger list (i.e. one with less than $\alpha$ elements) we can't transition from.
This would happen because there aren't enough elements in the range that are smaller than the finger list that's being accessed.

When we reach such a list, we know that no other elements within our current range that need to be added to the finger list.
Let $P(i)$ represent the last point $j$ whose finger list is queried before halting the process.
Note that if we can compute the full finger list of $P(i)$ before we compute the finger list of $s_i$,
we have solved the problem: it allows us to ``resume" the construction as normal and continue on the random walk,
so the amortized work complexity for any single point is still $O(\log{} n)$, which will match the sequential case.

}

\hide{
The final piece in our parallel algorithm is to efficiently maintain the \emph{\rghttxt{}} pointers.
As such, we can achieve the $O(n\log{n})$ expected work bound.
The design of this algorithm answers the following two questions:
first, how to continue the random walk efficiently without all the \emph{\rghttxt{}} pointers being built; and 
second, how to fill in the \emph{\rghttxt{}} pointers when they become available, in an efficient way.
The answers at high level is not complicated, and we will provide them here.
However, since the $O(n\log n)$-work sequential algorithm is already complicated, due to the space limit, we provide the full algorithm and in-depth explanation in the appendix.
The pseudocode is given in \cref{alg:par-build-full} in Appendix.\ref{sec:app:parallel-details}.
}

\hide{
\myparagraph{Method 1}
We do divide-and-conquer on $s_{1\dots n}$.
We first build the \msl{} for $s_{1\dots n/2}$ and $s_{n/2+1\dots n}$ recursively in parallel.
Now the \msl{} for $s_{n/2+1\dots n}$ is ready,
but the \msl{} for $s_{1\dots n/2}$ is not fully built yet,
as some points from $s_{n/2+1\dots n}$ need to be added to the finger lists of points in $s_{1\dots n/2}$.
For each point $p$ in $s_{1\dots n/2}$, we run \funcfont{Build-Finger}$(p)$ on the \msl{} for $s_{n/2+1\dots n}$,
which will generate a new set of points for $p$'s finger list.
We then merge this new set of points with $p$'s existing finger list.
As the expected size of each finger list is $O(\log n)$, the merging step takes $O(\log n)$ time.
The total work of this method is $O(n \log^2{n}\log{\log{n}})$, and the span is $O(\log^3 n)$.

\myparagraph{Method 2}
We divide $s_{1\dots n}$ into $\sqrt{n}$ blocks of size $\sqrt{n}$ each.
We fork $\sqrt{n}$ threads to build the \msl{} for each block,
where each thread processes its block sequentially using \cref{alg:orig-build}.
After this, we iterate from the last block to the first block.
For each block, we run \funcfont{Build-Finger} in parallel for each point in the block on the \msl{} of points after this block,
then also merge the new points into the existing finger lists in this block.
This method has a total work of $O(n \log{n}\log{\log{n}})$, and the span is $O(\sqrt{n}\log{n})$.
We can even modify this method to achieve $O(n^\epsilon\log{n})$ span for any constant $\epsilon > 0$ by recursively doing this on each block.

To achieve this, we will need to build the finger lists while maintaining the $\rght{}$ pointers,
just like in \cref{alg:seq-efficient-build}.
However, the construction of the finger lists structure isn't directly parallelizable.
In particular, the construction of the finger lists for point $s_i$ depends on the construction for $s_{i+1 \ldots n}$.
To get around this, we will run a divide and conquer style algorithm.
Our parallel construction algorithm is shown in \cref{alg:par-build}.
We now elaborate on the details of the algorithm and its analysis.

\subsection{Analysis}

\begin{lemma}
    The expected height of this tree is $O(\log{} n)$ whp.
\end{lemma}
\begin{proof}
    First, we want to prove that there is at least a $\frac{1}{2}$ probability that the number of elements to the right of $P(i)$ is less than half the number of elements to the right of $s_i$.
    Let $m$ be the index with the minimum $d(s_m, s_i)$ among all $m \geq i$ in the current segment of the recursion.
    Note that $P(s_i) \geq m$, as $m$ must be visited during the construction of $F(s_i)$.
    Because $s$ is shuffled uniformly at random there is a $\frac{1}{2}$ chance that $m$ is in the right half of the suffix of the current segment starting at $s_i$,
    meaning there's at least a $\frac{1}{2}$ chance that $P(s_i)$ halves the number of elements to the right of $s_i$.
    This probability is independent for the transition between $s_i$ and $P(i)$ and the transition between $P(i)$ and $P(P(i))$.
        
    Consider the path from a node $i$ to the root. This process of moving up the tree can be modeled as a sequence of Bernoulli trials,
    where each step has at least a 50\% chance of reducing the remaining distance to $n$ by half (denote such a step a halving step).
    Because of this, after completing $h$ steps up the tree, the expected number of halving steps is at least $h/2$.
    Since reaching the root requires at most $\log_2(n - i)$ halving steps, applying a Chernoff bound shows that with high probability, $O(\log{} n)$ steps suffice to produce at least $\log{}_2(n-i)$ halving steps, implying that the depth of node $i$ is $O(\log{} n)$ whp. 
    
    Extending this result to all nodes using a union bound, we conclude that the height of the entire tree is $O(\log n)$ with high probability.
\end{proof}

Therefore, the finger lists of this tree can be processed layer by layer, BFS-style, maintaining work efficiency and a polylogarithmic span, to ensure the parent is computed before the child. 

The algorithm assumes that the intermediate state of the variables from Build-Finger (such as $r$, $j$, and $\somef{}$) are stored globally, so the function can continue from where it left off.

The parallel algorithm match the work of a sequential construction algorithm.
Thus, it has $O(n \log{n})$ work and $O(\polylog{n})$ span.

We first run the algorithm recursively on the left and right halves of the array, which can be done in parallel.
Now the \msl{} for the right half is fully built, and the \msl{} for the left half is only partially built,
i.e. some points in the right half should be added to the finger lists of points in the left half, but they haven't been added yet.
Also, the $\rght{}$ pointers in the left half may not point to the correct elements.

Then we need merge the two halves of the \msl{} together,
where we fill up all finger lists and fix all $\rght{}$ pointers.
To do this, one key idea is that, by solving both halves in parallel,
we are essentially halting the finger list construction of elements in the left half when it tries to cross past the current subarray boundaries,
and we resume its random walk when we merge it to the right half.

Before the merge, we have the following invariant:
For all complete finger lists $\findlistrank{i}{k}$ and $j\in \findlistrank{i}{k}$,
either $\rght{\findlistrank{i}{k}}[j]$ points to the correct finger list of $j$,
or $\rght{\findlistrank{i}{k}}[j]$ points to the last complete finger list of $j$.
And we do not maintain the $\rght{}$ pointers for the tail finger lists.

In the merge step, for each point $i_0$ from the left half,
it has a $P[i_0]$, which is the point that stucks $i_0$'s construction, preventing it to go to the right half.
Because $P[i_0] > i_0$, these pointers form a tree structure.
We process this tree layer by layer.
In each layer, we process the nodes in parallel.
We take one jump from left to right, then construct $i_0$'s finger lists using the \upshape{Build} function.
Then we fix the $\rght{}$ pointers that points to $i_0$'s finger lists, which can be done in parallel as well.

    This process of moving up the tree can be modeled as a sequence of Bernoulli trials,
    where each step has at least a 50\% chance of reducing the remaining distance to $n$ by half (denote such a step a halving step).
    Because of this, after completing $h$ steps up the tree, the expected number of halving steps is at least $h/2$.
    Since reaching the root requires at most $\log_2(n - i)$ halving steps, applying a Chernoff bound shows that with high probability, $O(\log{} n)$ steps suffice to produce at least $\log{}_2(n-i)$ halving steps, implying that the depth of node $i$ is $O(\log{} n)$ whp. 
    
    Extending this result to all nodes using a union bound, we conclude that the height of the entire tree is $O(\log n)$ with high probability.

Given the complication of this algorithm,
after \upshape{Parallel-Build}$(s_{l\ldots r})$,
we assume the following invariants are maintained:
for all complete finger lists $\findlistrank{i}{k}$ and $j\in \findlistrank{i}{k}$,
either $\rght{\findlistrank{i}{k}}[j]$ points to the correct finger list of $j$,
or $\rght{\findlistrank{i}{k}}[j]$ points to the last complete finger list of $j$;
for the tail finger lists $\findlistrank{i}{k}$, their $\rght{}$ pointers are set to the same as the last complete finger list of $i$.
Then we can use the $\rght{}$ pointers to efficiently find $\findlistradius{\cur}{r}$.

Before merging the two halves, many $\rght{}$ pointers in the left half are pointing to a incorrect finger list.
This happens when $\findlistradius{j}{r}$ is not fully built yet, but some $\findlistradius{i}{r}$ wants to point to it.

All $\rght{}$ pointers come from three sources:

We call a $\rght{}$ pointer \emph{good} if it is currently pointing to a finger list with appropriate radius, and \emph{bad} otherwise.

Let's consider all $\rght{}$ pointers pointing to some finger list of $j$.
Note that the $\rght{}$ pointers creates a tree structure.
In this tree, if a node is bad, then all of its descendants must also be bad, because the $\rght{}$ pointers are only moved downwards.

Thus, in the merging step, after we have built the finger list for $j$,
we can try to fix all $\rght{}$ pointers pointing to $j$ by moving them downwards.
If a $\rght{}$ pointer change from bad to good,
we can remove it from the tree and add all its children as the children of the root.
We continue doing this until we cannot find any bad $\rght{}$ pointer that can be fixed.

\begin{lemma}[Lemma 5.12 from \cite{ruhl2003efficient}]
For any node $j$, the number of other nodes $i$ in whose finger lists $j$ appears, i.e. $|\{i \mid \exists r:j \in\findlistradius{i}{r}\}|$, is $O(\alpha \log n + \rho)$.
\end{lemma}

So the tree has $O(\polylog(n))$ nodes because each $i$ have at most $O(\log{n})$ $\rght{}$ pointers to $j$.
Processing the tree does not ruin the span of the algorithm.
For work, it is the same as the sequential algorithm to maintain the $\rght{}$ pointers.

}

\hide{
Consider this random-walk process for $s_i$: we would either find next focus point $s_{\nxt}$ that is closer than the current one $s_{\cur}$, or the last point in the focus list $\listoflist{\cur}$.
Now, consider $\listoflist{i}$ on the left half: when the focus list $F^*$ is full (i.e., it contains all $\alpha$ points in the left half), we can always find the next focus point, in either case.
When $F^*$ is not full, which means we cannot find $\alpha$ points in the left half, the random walk may still continue, if we can find any point in $F^*$ closer than $s_{\cur}$ to $s_i$ (the first case).
Otherwise, we will need to wait for the points on the right half to fill in $F^*$, before we can decide the next focus point.
In this case, we say $s_{\cur}$ \defn{\control{s}} $s_{i}$ (and $s_{\cur}$ is $s_{i}$'s \defn{\control{} point}), and the random walk of $s_i$ cannot continue before $s_{\cur}$'s finger lists gather further information. 
Note that in a merge step in the divide-and-conquer, each point can only be \control{led} by \emph{one} other point, which has a lower priority.
Hence, it is easy to see that the dependences of all points in the left subproblem form a tree (forest) structure, and we refer to it as the \defn{\control{} tree}.
We illustrate this in XXX.
}
\hide{
The next question is how to find the control points efficiently, and thus we can construct the control tree and execute random walks accordingly.
We show this algorithm in \cref{alg:par-build}.
Given the complication of this algorithm, we will first omit the maintenance of the $\rght{\findlistrank{\cdot}{\cdot}}[\cdot]$ pointers and instead use $\findlistradius{i}{r}$ to describe the algorithm.
In the complete algorithm, the $\rght{\findlistrank{\cdot}{\cdot}}[\cdot]$ pointers must be maintained,
which helps to reduce the $O(\log\log{n})$ term in the work bound of random walks.

The last remaining question to be answered in this section is how to maintain the $\rght{\findlistrank{\cdot}{\cdot}}[\cdot]$ pointers.
Interestingly, this part is very challenging given that in parallel, only part of the finger lists in $\listoflist{i}$ are built in the execution of an algorithm.
This means while the next focus point can be decided, its corresponding finger list and thus the \emph{advance} pointers are not available, unlike the sequential case.
Hence, we will show how to overcome this challenge in \cref{sec:map}.
We will mainly answer two questions.
First, how to complete the random walks work-efficiently when only partial pointers are available.
Second, how to fill in the uncomputed \emph{advance} pointers work-efficiently when the target finger lists are available---here we would need to again represent the dependence DAG into a tree structure and apply the computation lazily.
}

\hide{
\textbf{Version 1.} 
Intuitively, an efficient \fname{Merge} algorithm should simulate the random walks following the same route as the sequential algorithm. 
This intended random walk may directly advance to a later \focuspoint{} on the right side, not necessarily passing $s_{n/2+1}$. 
Therefore, for some $s_i$ on the left, its starting point on the right should not simply default to $s_{n/2+1}$; rather, it should be guided by its last ``hop'' on the left in the original random walk. 
However, when this ``last hop'', which is another point on the left half, also remain uncompleted, a dependence arises, which hinders parallelism. 
Our next goal is to carefully understand and analyze the dependence and show that the depth is actually low. 
}

\hide{
Indeed, for a point $s_i$ on the left side, the intended random walk in the sequential algorithm may directly advance to a later \focuspoint{} on the right side, 
not necessarily passing $s_{n/2+1}$. 
Therefore, we hope the \fname{Merge} algorithm to also follow the same route as the sequential algorithm. 
For some $s_i$ on the left, its starting point on the right should be guided by its last ``waypoint'' on the left in the random walk. 
However, when this ``previous waypoint'', which also lies on the left half, also remain uncompleted, a dependence arises, which hinders parallelism. 
Our next goal is to carefully understand and analyze the dependence and show that the depth is actually low. 
}

\hide{
With this in mind, we will consider how this forms a dependence structure among the points on the left side. 
Consider a specific \emph{Merge} step.
For a specific a point $s_i$ belonging to the left half, we will use \fname{Merge} add points on the right to $s_i$'s finger lists. 
The intended process to build the finger lists of $s_i$ should simulates a random walk process. 
When we walk to the current \focuspoint{} $s_{\cur}$, we need to find the next \focuspoint{} $s_{\nxt}$ from one of the finger lists of $s_{\cur}$ (i.e., the \focuslist{} $F^*$). 
$s_{\nxt}$ is determined by one of two cases: 
it is either the first (highest priority, or leftmost) point in $F^*$ with a smaller distance, 
or the last (lowest priority, or rightmost) point in $\listoflist{\cur}$. 
At the merge stage, if we see the \focuslist{} $F^*$ full with $\alpha$ points from the left half, as discussed, $F^*$ must be completed and the random walk can continue without being blocked by other points. 
We now consider when $F^*$ is not full. 
This means that $F^*$ may still be waiting for points on the right to join. 
If we can find any closer point even in the incomplete finger list $F^*$ (the first case), we also do not need to wait. 
In fact, this means that the random walk will advance to another point on the left, 
so the process can still continue without being blocked. 

When neither of the previous two cases happen, we need to wait for the points on the right half to fill in $F^*$ before we can decide the next focus point. 
In this case, we say $s_{\cur}$ \defn{\control{s}} $s_{i}$ (and $s_{\cur}$ is $s_{i}$'s \defn{\control{} point}). 
In other words, the random walk of $s_i$ cannot continue before $s_{\cur}$'s finger lists gather further information, thereby incurring a dependence. 
For a point $s_i$, if we consider the intended random walk process in the full list (both left and right halves considered), 
then this \control{} point is exactly the last stop before advancing across the boundary between the two halves. 
Therefore, in a specific merge step, each point is \control{led} by at most \emph{one} other point, which has a lower priority. 
Hence, it is easy to see that the dependences of all points in the left subproblem form a tree (forest) structure, and we refer to it as the \defn{\control{} tree}.
We illustrate this in \cref{fig:control-example}.
}

\section{Applications}\label{sec:app}

Nearest neighbor search is widely used in real-world applications.
We can use our algorithm for the \msl{s} as the subroutine for these applications and achieve new parallel solutions.
For space limit, we discuss these applications in \ifconference{the full version of this paper \cite{metricskiplistfullversion}}\iffullversion{Appendix.~\ref{sec:app:app}},
including bichromatic closest pair (BCP)~\cite{agarwal1990euclidean}, density-based clustering~\cite{wang2020theoretically,Ester1996}, and $k$-NN graph construction. 

\section{Related Work}

\noindent\emp{NNS Algorithms.} \ Nearest-neighbor search has been studied
extensively in both Euclidean and general metric settings.  
Classical exact data structures include $k$-d trees~\cite{bentley1975multidimensional,men2025parallel} and other spatial trees~\cite{men2026parallel,gu2013efficient,finkel1974quad,kamel1992parallel,blelloch2022parallel}, 
navigating nets~\cite{krauthgamer2004navigating},
cover trees~\cite{beygelzimer2006cover,gu2022parallel}, \msl{}s~\cite{karger2002finding}, and others~\cite{clarkson1997nearest,nielsen2009bregman}.
Finding or even approximating nearest neighbor in the worst case requires linear time~\cite{abbasifard2014survey}; hence, known sublinear bounds are all under certain assumptions.
A complementary line of work focuses on approximate search, including locality-sensitive hashing, randomized space partitions, graph-based methods, and additional techniques such as quantization~\cite{manohar2024parlayann,peng2023efficient,li2019approximate,indyk1998approximate}.
Their primary goal is on efficient searches on high-dimensional vector data, while the goals differ from the exact NNS discussed in this paper.

The most relevant prior work is the cover tree, which also addresses theoretically efficient NNS in metric spaces. 
A comparison is presented in \cref{tab:cover-tree-msl}, and here we provide more details.
The cover tree was introduced by Beygelzimer et al.~\cite{beygelzimer2006cover} in 2006 with $O(n\log n)$ construction work and $O(\log n)$ query time under low expansion rate. 
Subsequent studies~\cite{curtin2015improving,elkin2021new} identified flaws in the original path compression technique, showing that the stated bounds required a bounded aspect ratio.
Gu et al.~\cite{gu2022parallel} in 2022 proposed the work-efficient parallel version of cover tree with $O(\log^3 n\log\log n)$ span, which remains subject to both low expansion rate and bounded aspect ratio assumptions.
Elkin and Kurlin~\cite{elkin2023new} in 2023 proposed a new sequential cover tree variant with the stated bounds without assuming bounded aspect ratio. 
To our knowledge, no parallel version of that variant exists, and the techniques by Gu et al.~\cite{gu2022parallel} do not readily generalize to it.
In contrast, our parallel algorithms for the \msl{} require only the low expansion rate assumption while achieving $O(n\log n)$ construction work, $O(\log^3 n)$ construction span, and $O(\log n)$ query cost.

Regarding practical implementations, sequential~\cite{beygelzimer2006cover} and parallel~\cite{izbicki2015faster} versions of cover trees exist; however, these implementations often relax the \emph{separation property}, sacrificing the original theoretical guarantees. 
While there are no known implementations of \msl{}s, applying our algorithmic insights to practical settings remains an interesting direction for future research.


\myparagraph{Applications.} Nearest-neighbor search is a fundamental primitive in computational geometry, supporting closest-pair problems~\cite{tao2010efficient,corral2000closest}, proximity graphs~\cite{iwasaki2018optimization}, and clustering~\cite{pourbahrami2022geometric}. In machine learning and data mining, it underlies nonparametric methods~\cite{cover1967nearest,altman1992introduction}, similarity search~\cite{indyk1998approximate}, anomaly detection~\cite{gu2019statistical,pang2015lesinn}, and various clustering techniques such as DBSCAN~\cite{Ester1996,wang2020theoretically} and spectral clustering~\cite{Maier2009,Franti2006,Lucinska2012,karypis1999chameleon,wang2021fast,yu2021parchain}. Furthermore, it is critical in vision, robotics, and database applications, including feature matching~\cite{garcia2010k}, motion planning~\cite{yershova2007improving}, and similarity joins~\cite{xiao2011efficient}. Since these tasks rely heavily on NNS or $k$-NNS, faster metric structures directly accelerate these downstream algorithms. Accordingly, this paper applies our results to obtain parallel bounds for bichromatic closest pair, density-based clustering, and $k$-NN graph construction under the constant expansion-rate assumption~\cite{karger2002finding}.

\hide{
\myparagraph{Applications.} Nearest-neighbor search is a basic primitive behind
many applications.
In computational geometry it appears in
closest-pair problems \cite{tao2010efficient,corral2000closest},
proximity graphs \cite{iwasaki2018optimization},
range searching,
and geometric clustering ~\cite{pourbahrami2022geometric}.
In machine learning and data mining, nearest neighbors
support nonparametric classification and regression \cite{cover1967nearest,altman1992introduction},
similarity search \cite{indyk1998approximate},
anomaly detection \cite{gu2019statistical,pang2015lesinn},
recommender systems \cite{chen2022approximate},
and clustering methods such
as DBSCAN \cite{Ester1996} and spectral clustering through $k$-NN graph construction~\cite{Maier2009,Franti2006,Lucinska2012,karypis1999chameleon}.
In computer vision, graphics, robotics, databases, and information retrieval,
nearest-neighbor indexes are used for feature matching \cite{garcia2010k},
image retrieval \cite{giacinto2007nearest,gupta2023medical},
motion planning \cite{yershova2007improving},
duplicate detection \cite{jegou2010product},
and large-scale similarity joins \cite{xiao2011efficient}.
Because many of these tasks repeatedly
invoke nearest-neighbor or $k$-nearest-neighbor queries, improvements in the
underlying metric data structure can translate directly into faster algorithms
for downstream problems.  In this paper we use this viewpoint to obtain parallel
bounds for applications such as bichromatic closest pair, density-based
clustering, and $k$-NN graph construction under the same expansion-rate
assumption~\cite{karger2002finding}.
}
\section{Conclusion}

This paper presents a parallel algorithm for constructing metric skip-lists in $O(n \log n)$ expected work and $O(\log^3 n)$ span \whp{}. 
To achieve this, we first show a sequential solution with $O(n\log n)$ expected work.
The main challenge in parallelizing this algorithm lies in the inherent sequential construction and the complicated dependence structure that is hard to analyze. 
We overcome this by adding artificial synchronization barriers that convert the dependence structure to trees with bounded heights.
Putting all pieces together, we answered the decades-long open problem of how to achieve efficient nearest neighbor search on metric space with the only assumption of bounded expansion rate.


An interesting future work is to support updates in parallel or concurrent settings.
While the original paper \cite{karger2002finding,ruhl2003efficient} includes sequential updatable solutions,
extending them in the parallel setting would need careful design and analysis of the dependence structure, especially for the \emph{advance} pointers.
Another future direction is to support functional updates, which is needed in some applications such as the Euclidean minimum spanning tree (EMST) and the single-linkage clustering~\cite{gu2022parallel}.



\hide{
\section*{Acknowledgement}
This work is supported by NSF grants CCF-2103483, IIS-2227669, NSF CAREER Awards CCF-2238358 and CCF-2339310, the UCR Regents Faculty Development Award, and the Google Research Scholar Program.
}

\bibliographystyle{ACM-Reference-Format}
\balance
\bibliography{bib/strings,bib/main,local}

\iffullversion{
\appendix


\section{Applications}\label{sec:app:app}

Nearest neighbor search is widely used in machine learning algorithms.
We can use our algorithm for the \msl{s} as the subroutine for these algorithms and achieve highly parallel solutions.
Due to the space limit, we will overview a few classic applications, and we note that our \msl{s} can also be applied to other clustering algorithms such as the density peak clustering~\cite{huang2023faster}.

\subsection{Bichromatic Closest Pair (BCP)}

Given two sets $P_1$ and $P_2$, the bichromatic closest pair (BCP) is a pair of points $(p_1, p_2)$,
such that $p_1 \in P_1, p_2 \in P_2$, and $d(p_1, p_2) \le d(p'_1, p'_2) \mid \forall p'_1 \in P_1, \forall p'_2 \in P_2$.

WLOG, we assume $|P_1| = m \le n = |P_2|$.
We can construct a \msl{} for $P_1$, and query the nearest neighbor for every point in $P_2$ in parallel.
This gives $O(m \log n)$ expected work and $O(\log^3{n})$ span \whp{},
assuming constant expansion rate.

\subsection{Density-Based Clustering}

The density-based spatial clustering of applications with noise (DBSCAN) problem takes as input $n$ points $\mathcal{P} = \{p_0, \dots, p_{n-1}\}$,
a distance function $d$, and two parameters $\epsilon$ and minPts \cite{Ester1996}.
A point $p$ is a \textit{core point} if and only if $|B(p, \epsilon)| \ge \text{minPts}$.
We denote the set of core points as $C$.
DBSCAN computes and outputs subsets of $\mathcal{P}$, referred to as \textit{clusters}.
Each point in $C$ is in exactly one cluster, and two points $p, q \in C$ are in the same cluster if and only if there exists a list of points $\bar{p}_1 = p, \bar{p}_2, \dots, \bar{p}_{k-1}, \bar{p}_k = q$ in $C$ such that $d(\bar{p}_{i-1}, \bar{p}_i) \le \epsilon$.
For all non-core points $p \in \mathcal{P} \setminus C$, $p$ belongs to cluster $C_i$ if $p \in B(q, \epsilon)$ for any $q \in C \cap C_i$.
A non-core point belonging to at least one cluster is called a \textit{border point} and a non-core point belonging to no clusters is called a \textit{noise point}.

\citet{wang2020theoretically} recently showed how to parallelize DBSCAN based on bichromatic closest pairs (BCP).
Using the above results for BCP gives $O(n \log n)$ expected work and $O(\polylog(n))$ span \whp{} to compute DBSCAN.
Here the assumptions include: minPts and the expansion rate are constant, and a pairwise distance can be computed in constant time.

\subsection{$k$-NN Graph Construction}
$k$-NN graphs are widely used in machine learning, such as graph clustering~\cite{Maier2009,Franti2006,Lucinska2012,karypis1999chameleon}, manifold learning~\cite{Tenenbaum2000}, outlier detection~\cite{Hautamaki2004}, and proximity search~\cite{Paredes2005,Chavez2010,Sebastian2002}.
Given a point set $P$ in a metric space, a $k$-NN graph is a directed graph $G=(V,E)$, where $V=P$ and $(p,q)\in E$ if $q$ is one of $p$'s $k$-nearest neighbor in $V\setminus\{p\}$.
We first construct the \msl{} on $P$, then apply $(k+1)$-NN queries on all the points in $P$ in parallel,
and finally construct the $k$-NN graph according to the query results.
Using our parallel \msl{s}, we can get $O(kn\log k\log n)$ expected work and $O(\log^3 n)$ span \whp{}, assuming a constant expansion rate.

\section{More Details of the Work-Efficient Parallel Algorithm}\label{sec:app:parallel-details}

In \cref{sec:parallel} we give the high-level idea of the parallel construction.
Here we provide more details of the this algorithm.
\cref{alg:par-build-full} gives the pseudocode of the parallel construction of \msl{},
which is the extended version of \cref{alg:par-build}.
It has $O(n\log{n})$ expected work and $O(\log^3{n})$ span \whp{}.

\subsection{Step 1: Fill-in the Finger Lists in the Left Half}

Here, we define the control point of $i$ to be the last point that is added to $i$'s finger lists after \upshape{Parallel-Build}.
Note that this definition is slightly different from the one in \cref{sec:parallel},
as we may want to manually control the random walk sequence when building the finger lists for $i$.
However, the new definition still satisfies the analysis in \cref{sec:control-analysis},
so the height of the control tree is still $O(\log{n})$ \whp{}.

Like \cref{alg:par-build}, we process the control-tree layer-by-layer.
In each layer, we resume the random walk of $i$ in parallel.
For point $i$, we first jump to $\cur=C[i]$, and set $k$ as the last complete finger list of $\cur$,
then we keep decrementing $k$ until the radius meets the condition.
We can show that the cost of decrementing $k$ can be charged to the number of new points added to $\listoflist{i}$.

Starting from $\findlistrank{\cur}{k}$, we can continue the random walk process and fill in the finger lists for $i$ with points from the right half.
Note that we use $r=4\max(\radius{F},d(s_\cur,s_i))$ as the query radius, which is larger than the radius used in the sequential algorithm.
Doing this only requires increasing $\alpha$ by a constant factor.

At some point we encounter an incomplete finger list $\findlistrank{\cur}{k}$,
and we enter the second phase of Step 1.
This means that there are $O(\alpha)$ points in the ball of radius $B(i,3\radius{F})$ (see \cref{fig:4r-2r} for illustration).
We then process these $O(\alpha)$ points one by one to try to add them to $\listoflist{i}$.
Let $p_{1..t}$ be the points that are added to $\listoflist{i}$.
If we look at \cref{fig:4r-2r}, we can see that all these points are in the ball of radius $2\radius{F}$ from each other.
This means that we can manually control the random walk sequence:
from $i$ to $\cur$, and then to $p_1$, and then to $p_2$, and so on, finally to $p_t$.

\begin{figure}[h]
  \centering
  \includegraphics[width=0.5\columnwidth]{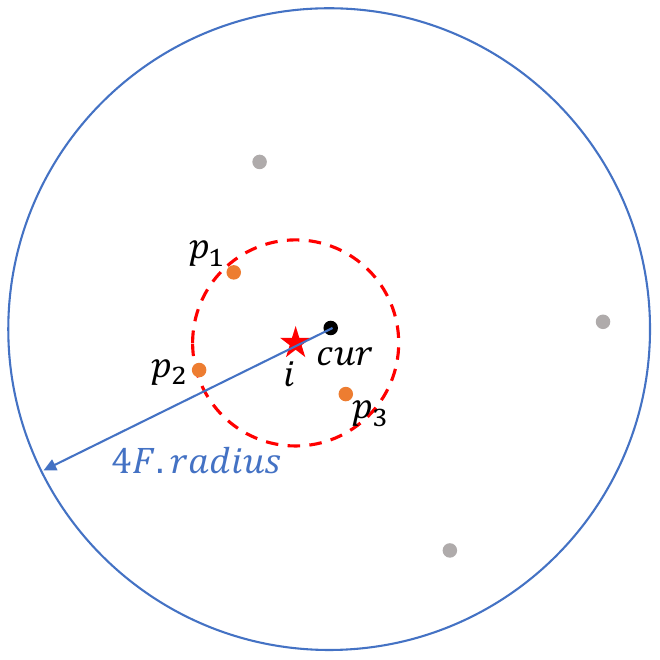}
  \caption{The reason to use $4\radius{F}$. This ensures that $p_{1..t}$ are all in the ball of radius $2\radius{F}$ from each other.
  }\label{fig:4r-2r}
\end{figure}

\subsection{Step 2: Maintaining the \rghttxt{} Pointers in the Left Half}

In the sequential construction algorithm \cref{alg:seq-efficient-build}, the \emph{\rghttxt{}} pointers are computed in three places: 
\begin{itemize}
    \item Initialization: $\rght{\findlistrank{i}{1}}[j]$ points to $\findlistrank{j}{1}$,
    \item Push-down: $\rght{\findlistrank{i}{k}}[j]$ points to $\rght{\findlistrank{i}{k-1}}[j]$,
    \item Shift focus: $\rght{\findlistrank{i}{k}}[j]$ points to $\rght{\somef{}}[j]$.
\end{itemize}
Then we try to align the $\rght{}$ pointers.
As mentioned in \cref{sec:map}, we can always maintain the $\rghttxt{}$-tree
using the dependencies between the $\rght{}$ pointers.

However, as we manually control the random walk sequence in Step 1,
for the last $O(\alpha)$ points $p_{1..t}$,
the $\rght{}$ pointers from $i$ to $p_{1..t}$ are not built yet, and they are not in the \rghttxt{}-tree.
Our key idea is that we manually maintain the $\rght{}$ pointers for the last $O(\alpha)$ points.
In the first step we add $p_{1..t}$ into a set $L[i]$.
In the second step, for each point $j\in L[i]$,
we start from the last complete finger list of $p_j$ and try to search $\rght{\findlistrank{i}{\cdot}}[j]$ and move up until we find a complete finger list.
The cost of this step can be also charged to the cost of building the finger lists for $i$ in Step 1, so it does not increase the overall work.
Note that the number of points in $L[i]$ is at most $\alpha$,
otherwise the first point in $L[i]$ will have a correct $\rght{}$ pointer from $i$ and will be removed from $L[i]$.

Now we have the \rghttxt{}-trees of the left half.
For point $j$, it can appear in the finger lists of $O(\log{n})$ points $i$,
so the size of the \rghttxt{}-tree rooted at $j$ is $O(\log{n})$.

In parallel for each $j$, we process its \rghttxt{}-tree sequentially, layer-by-layer, while enumerating the new finger lists of $j$.
If an \rghttxt{}-pointer becomes ready,
i.e. the radius of the finger list of $j$ is small enough,
we can mark this \rghttxt{}-pointer as ready, remove it from the \rghttxt{}-tree, and add its child \rghttxt{}-pointers to the frontier.
The overall process does not change the work of the sequential algorithm, while the span is $O(\log{n})$ because the \rghttxt{}-tree has $O(\log{n})$ size.

\begin{algorithm*}[h]
\DontPrintSemicolon
\small
\caption{Parallel-Build($s_{l \ldots r}$): Builds the finger lists and $\rght{}$ pointers for $s_{l \ldots r}$ in parallel}\label{alg:par-build-full}
\SetKw{Break}{break}

\SetKwInput{Maintains}{Maintains}
\Maintains{\\
$C[i]$: the control point for $s_i$, i.e. the last point added to $\listoflist{i}$ after Step 1\\
$L[i]$: the last $O(\alpha)$ points that are added to $\listoflist{i}$\\
$\parent{\cdot}$: the parent pointer for the \rghttxt{}-tree\\
}

$m \gets (l+r)/2$\\
\parDo{}{
Parallel-Build$(s_{l..m})$\\
Parallel-Build$(s_{m+1..r})$\\
}

\tcp*[h]{Step 1: fill-in the finger lists in the left half}\\
Construct the tree by $C[i]$ for each $i \in [l..m]$, in parallel \\

\ForEach{layer of the tree}{
    \parForEach{$i$ in this layer}{
        $\cur \gets C[i]$\\
        $k \gets$ the last complete finger list of $\cur$\\
        \upshape{Build-Phase-1}($i, \cur, k$)\\
        \upshape{Build-Phase-2}($i, \cur, k$)\\
        $C[i] \gets$ the last point added to $\listoflist{i}$\\
    }
}

\tcp*[h]{Step 2: maintain the \rghttxt{} pointers for the left half}\\
\parForEach{$i \in [l..m]$} {
    \ForEach{$j$ in $L[i]$}{
        Search $\rght{\findlistrank{i}{\cdot}}[j]$ from the last complete finger list of $p_j$ and move up\\
        \If{$\rght{\findlistrank{i}{\cdot}}[j]$ is ready}{
            Remove $j$ from $L[i]$\\
        }
    }
}
\parForEach{$j \in L$} {
    \tcp*[h]{Construct the $\rghttxt{}$ pointers pointing to $j$}\\
    \ForEach{new finger list $F$ added to $\listoflist{j}$}{
        Process the \rghttxt{}-tree and settle down as many \rghttxt{}-pointers as possible\\
    }
}

\medskip

\myfunc(){\upshape{Build-Phase-1}($i,\cur,k$)}{

\While{true}{
    $F \gets$ the last complete finger list of $s_i$\\
    $r \gets 4\max(\radius{F}, d(s_{i}, s_\cur))$\\
    
    \upshape{Align}$(\findlistrank{\cur}{k}, r)$\\

    \lIf{$|\findlistrank{\cur}{k}|<\alpha$}{
        \Break
    }

    \If{$\exists s_j\in \findlistrank{\cur}{k}$ such that $d(s_j, s_i) < \max(\radius{F}, d(s_\cur, s_i))$}{
        $\nxt \gets$ the first $j$ with this property in $\findlistrank{\cur}{k}$\\
        \If{$d(s_\nxt, s_i) < \radius{F}$}{
            Add a new finger list $F_\mathit{new}$ to $\listoflist{i}$\\
            $\parent{\node{\rght{F_\mathit{new}}[\nxt]}} \gets \node{\rght{\findlistrank{\cur}{k}}[\nxt]}$\tcp*[f]{maintaining the \rghttxt{}-tree}\\
        }
    }
    \Else{
        Let $\nxt$ be the last index in $\findlistrank{\cur}{k}$\\
    }
    $\cur \gets \nxt$\\
    $k \gets \rght{\findlistrank{\cur}{k}}[\nxt]$\\
}

}

\medskip

\myfunc(){\upshape{Build-Phase-2}($i,\cur,k$)}{

\While{true}{
    $F \gets$ the last complete finger list of $s_i$\\
    $r \gets 2\max(\radius{F}, d(s_\cur, s_i))$\\

    \upshape{Align}$(\findlistrank{\cur}{k}, r)$\\

    \If{$\exists s_j\in \findlistrank{\cur}{k}$ such that $d(s_j, s_i) < \radius{F}$}{
        $\nxt \gets$ the first $j$ with this property in $\findlistrank{\cur}{k}$\\
        Add a new finger list $F_\mathit{new}$ to $\listoflist{i}$\\
        Add $\nxt$ to $L[i]$\\
    }
    \lElse{
        \Break
    }
}

}

\end{algorithm*}

\section{Additional Proofs}\label{sec:app-proof}

\medskip
\begin{singleproof}[Proof of \cref{lem:4.1}]
  Fix a point $s_i$ and radius $r$. Let
\[
L=B(s_i,r),\quad U=B(s_i,2r),\quad m=|L|,\quad u=|U|.
\]
While building $\listoflist{i}$, only points in $U$ can affect finger lists whose radius is at most $2r$.
Now view points in $U$ in their (random) permutation order.

Starting from $\listoflist{i}(2r)$, a new list between radii $2r$ and $r$ can only be created before we have seen $\alpha$ points from $L$:
once $\alpha$ points from $L$ have appeared, the maintained set of $\alpha$ closest seen points is contained in $L$, so the current radius is at most $r$, i.e., we have reached $\listoflist{i}(r)$.
Hence, the number of such intermediate lists is at most the number of draws from $U$ until the $\alpha$-th point of $L$ appears.

This is a negative-hypergeometric stopping time. Its expectation is
\[
\mathbb{E}[T]=\frac{\alpha(u+1)}{m+1}=O\left(\alpha\frac{u}{m}\right).
\]
By the expansion condition (for the radii where it applies), $u=|B(s_i,2r)|\le c\,|B(s_i,r)|=cm$.
Therefore,
\[
\mathbb{E}[T]=O(\alpha c)=O(1),
\]
since $\alpha,c$ are constants.
So the expected number of lists between $\listoflist{i}(r)$ and $\listoflist{i}(2r)$ is $O(1)$.
\end{singleproof}

\bigskip

\begin{singleproof}[Proof of \cref{lem:4.2}]
  The random walk (loop on line~\ref{line:seq-query-main}) will finish in $O(\log n)$ iterations \whp{}.
  Hence, line~\ref{line:up} will be invoked for $O(\log n)$ iterations \whp{}.
  Note that line~\ref{line:up} will execute only when the search radius grows in this iteration, when the last points in $F^*$ is selected (on line~\ref{line:seq-last}), and the radius can grow at most twice.
  Let $r'$ be the search radius in the previous iteration, and $s_{\last}$ be the last point in the random work ($\bar{r'}$ as its finger list radius after aligning on line~\ref{line:seq-align}).
  $\findlistrank{\cur}{k}$ is the first list of $s_\cur$ of the radius no more than $\bar{r'}$.
  We need line~\ref{line:up} to increase the radius from $\bar{r'}$ to $r$.
  Note that since $s_\cur$ is in $\listoflist{s_{\last}}$, $\bar{r'}\ge d(s_\cur, s_{\last})$ holds, so as $r\le 2\bar{r'}$.
  According to \cref{lem:4.1}, the cost of this step is $O(1)$.
  \hide{
Suppose $s_x,s_y$ are two consecutive \focuspoint{}s in \cref{alg:seq-find}.
The queried radius are $r_x=2d(s_x,q)$ and $r_y=2d(s_y,q)$, respectively.
In the first iteration we visit a finger list of $s_x$ with radius $r_x'\le r_x$,
then we take the $\rght{}$ pointer to a finger list of $s_y$ with radius $r_y'\le r_x'$.
Then we need to increase the radius from $r_y'$ to $r_y$ by calling $k'\gets k'-1$.

First, the number of upward moves from $r_y'$ to $2r_x'$ is $O(1)$ in expectation.
This is because we first take one upward move to get to a radius greater than $r_x'$,
then take $O(1)$ more upward moves to double the radius.

Then we bound the expected number of upward moves from $2r_x'$ to $r_x$.
We assume $r_x\gg r_x'$, otherwise it is trivial.
Let $t=|B(s_y,r_x)\setminus B(s_y,2r_x')|$.
And these points must appear after $s_y$ in the random permutation,
otherwise they would have created a finger list for $s_x$ with radius in range $(r_x',r_x)$.
So the probability is $2^{-t}$ that we have no such point between $s_x$ and $s_y$.
Also, the expected number of upward moves is from $2r_x'$ to $r_x$ is no more than $t$.
Therefore, the expected number of upward moves from $2r_x'$ to $r_x$ is $O(1)$.

Then, the expected number of upward moves from $r_x$ to $r_y$ is also $O(1)$.
This is because in the query algorithm, we must have $r_y\le 4r_x$.

Considering all three phases, the expected number of upward moves is $O(1)$ for each pair of consecutive \focuspoint{}s.
Then the process after $s_y$ is independent of the process before $s_y$.
So the expected number of upward moves over the whole query is $O(\log n)$, since there are $O(\log n)$ \focuspoint{}s in expectation.}
\end{singleproof}

\begin{singleproof}[Proof of \cref{lem:4.3}]
Each execution of \emph{align} consists of two while loops: upward moves (\(k\gets k-1\), line~\ref{line:up}) and downward moves (\(k\gets k+1\), line~\ref{line:align-end}).
So it suffices to bound the total numbers of these two moves over one call of \cref{alg:seq-find}.

The total number of upward moves is \(O(\log n)\) in expectation by Lemma 4.2.
Now consider downward moves.
Define a potential at the beginning of each outer iteration:
\[
\Phi=\log |B(s_{\cur},r)|,\quad r=2d(s_{\cur},q).
\]
By the finger-list property used in the paragraph below Theorem 4.4, after \(O(1)\) new elements in the list, the ball size is expected to halve; equivalently, every \(O(1)\) downward \emph{align} moves decrease \(\Phi\) by \(\Theta(1)\) in expectation.
Hence, the number of downward moves is proportional (in expectation) to the total decrease of \(\Phi\), up to additive increases of \(\Phi\).

There are only two sources of increase:
first, transitioning \(\somef{}\to \rght{\somef{}}[j]\), which changes the radius only by a constant factor, so \(\Phi\) changes by \(O(1)\);
second, the explicit upward \emph{align} moves, already bounded by Lemma 4.2.
Therefore \(\log|B(\cur,r)|\) behaves as the same negative-drift random walk used in the query analysis (\cref{sec:intuition,sec:kr}), and over the whole query its total expected increase is \(O(\log n)\).
Since initially \(\Phi\le \log n\), the total expected downward moves are \(O(\log n)\).

Combining both directions, total \emph{align} work is
\[
O(\#\text{up}+\#\text{down})=O(\log n)+O(\log n)=O(\log n)
\]
in expectation.
\end{singleproof}

}

\balance

\end{document}